\newtheorem{theorem}{Theorem}[section]
\newtheorem{lemma}[theorem]{Lemma}
\newtheorem{claim}[theorem]{Claim}
\newtheorem{corollary}[theorem]{Corollary}
\theoremstyle{definition}
\newtheorem{definition}[theorem]{Definition}
\theoremstyle{plain}
\newcommand{\bigO }{\mathcal{O}}
\newcommand{\eps}{\varepsilon}
\newcommand{\opt}{\texttt{OPT}}
\newcommand{\GRD}{\texttt{GRD}}
\newcommand{\poly}{\mathrm{poly}}
\newcommand{\Z}{\mathbb Z}
\newcommand{\Q}{\mathbb Q}
\newcommand{\U}{\mathcal{U}}
\newcommand{\sym}{\mathcal{S}}
\newcommand{\obj}{\texttt{Obj}}
\newcommand{\OPT}{\texttt{OPT}}
\newcommand{\co}{\texttt{cost}}
\newcommand{\AS}[1]{{#1}}
\newcommand{\sdr}[1]{}
\newcommand{\asr}[1]{}
\newcommand{\akr}[1]{}
\newcommand{\dcr}[1]{}
\newcommand{\newfontobj}[2]{
    \newcommand{#1}[1]{
\expandafter\def\csname##1\endcsname{{#2 ##1}}}}
\newfontobj{\class}{\rm} 
\title{Fair Rank Aggregation}
\author{
    Diptarka Chakraborty \\
    School of Computing \\
    National University of Singapore \\
    \texttt{diptarka@comp.nus.edu.sg} \\
    \And
    Syamantak Das \\
    Department of Computer Science and Engineering \\
    Indraprastha Institute of Information Technology, Delhi \\
    \texttt{syamantak@iiit.ac.in} \\
    \And
    Arindam Khan \\
    Department of Computer Science and Automation \\
    Indian Institute of Science, Bengaluru \\
    \texttt{arindamkhan@iisc.ac.in} \\
    \And
    Aditya Subramanian \\
    Department of Computer Science and Automation \\
    Indian Institute of Science, Bengaluru \\
    \texttt{adityasubram@iisc.ac.in}
}
\begin{document}
\maketitle

\begin{abstract}
    Ranking algorithms find extensive usage in diverse areas such as web search, employment, college
    admission, voting, etc.  The related rank aggregation problem deals with combining multiple
    rankings into a single aggregate ranking.  However, algorithms for both these problems might be
    biased against some individuals or groups due to implicit prejudice or marginalization in the
    historical data.  We study ranking and rank aggregation problems from a fairness or diversity
    perspective, where the candidates (to be ranked) may belong to different groups and each group
    should have a fair representation in the final ranking. We allow the designer to set the
    parameters that define fair representation. These parameters specify the allowed range of the
    number of candidates from a particular group in the top-$k$ positions of the ranking.  Given any
    ranking, we provide a fast and exact algorithm for finding the closest fair ranking for the
    Kendall tau metric under \textit{block-}fairness.
    We also provide an exact algorithm for finding the closest fair ranking for the
    Ulam metric under \textit{strict-}fairness, when there are only $\bigO(1)$ number of groups.  Our
    algorithms are simple, fast, and might be extendable to other relevant metrics. We also give a
    novel  meta-algorithm for the general rank aggregation problem under the fairness framework.
    Surprisingly, this meta-algorithm works for any generalized mean objective (including center and
    median problems) and any fairness criteria. As a byproduct, we obtain 3-approximation algorithms
    for both center and median problems, under both Kendall tau and Ulam metrics. Furthermore, using
    sophisticated techniques we obtain a $(3-\varepsilon)$-approximation algorithm, for a constant
    $\varepsilon>0$,  for the Ulam metric under strong fairness.
\end{abstract}

\section{Introduction}
Ranking a set of candidates or items is a ubiquitous problem arising in diverse areas ranging from
social choice theory~\cite{brandt2016handbook} to information retrieval~\cite{Harman92a}.  Given a
set of $d$ candidates and a set of $n$ different, potentially conflicting, rankings of these
candidates, one fundamental task is to determine a single ranking that best summarizes the
preference orders in the individual rankings.  This summarizing task, popularly termed \emph{rank
aggregation}, has been widely studied from a computational viewpoint over the last two
decades~\cite{dwork2001rank, fagin2003efficient, gleich2011rank, azari2013generalized}.  Most
well-studied rank aggregation paradigms are {\em median rank aggregation} (or simply \emph{rank
aggregation})~\cite{kemeny1959mathematics, young1988condorcet, young1978consistent, dwork2001rank}
and {\em maximum rank aggregation}~\cite{BACHMAIER20152, biedl2009complexity, popov2007multiple},
which are based on finding the {\em median} and {\em center} of the given set of rankings,
respectively.

Recently, fairness and diversity have become a natural prerequisite for ranking algorithms where
individuals are rated for access to goods and services or ranked for seeking facilities in education
(e.g., obtaining scholarship or admission), employment (e.g., hiring or promotion in a job), medical
(e.g., triage during a pandemic), or economic opportunities (e.g., loan lending). Some concrete
examples include university admissions through affirmative action in the
USA~\cite{deshpande2005affirmative} or the reservation system in jobs in
India~\cite{borooah2010social}, where we want rankings to be fair to mitigate the prevalent
disparities due to historical marginalization. Rankings not being fair may risk promoting extreme
ideology~\cite{costello2016views} or certain stereotypes about dominating/marginalized communities
based on sensitive attributes like gender or race~\cite{kay2015unequal, bolukbasi2016man}. There has
been a series of  works on fair ranking algorithms, see~\cite{zehlike2017fa, asudeh2019designing,
castillo2019fairness, garcia2021maxmin, gorantla2021problem, patro2022fair, zehlike2021fairness} and
the references therein.

A substantial literature on algorithmic fairness focuses on {\em group fairness} to facilitate {\em
demographic parity} \cite{dwork2012fairness} or {\em equal opportunity}~\cite{hardt2016equality}:
typically this is done by imposing fairness constraints which require that top-$k$ positions in the
ranking contain {\em enough} candidates from {\em protected} groups that are typically
underrepresented due to  prevalent discrimination (e.g., due to gender, caste, age, race, sex,
etc.).  In many countries, group fairness constraints are being enforced by legal
norms~\cite{EuroDiv, USDiv}.  For example, in Spain 40\% of candidates for elections in large voting
districts must be women~\cite{verge2010gendering}, in India 10\% of the total recruitment for civil
posts and services in government are reserved for people from Economically Weaker Society (EWS)
\cite{singh2019analysis}, etc.

In this paper, we study group fairness, more specifically proportional fairness (sometimes also
referred to as $p$-fairness \cite{baruah1996proportionate}).  Inspired by the {\em Disparate Impact}
doctrine, this notion of fairness mandates that the output of an algorithm must contain a fair
representation of each of the `protected classes' in the population.  In the context of ranking, the
set of candidates is considered to be partitioned into $g$ groups $G_1, G_2, \dots, G_g$. For each
group $G_i, i\in[g]$, we have two parameters $\alpha_i\in [0,1], \beta_i\in [0,1]$. A ranking
$\pi$ of the set of items is called {\em proportionally fair} if for
\AS{a given position $k\in [n]$}, and for every group $G_i$, the following two properties are satisfied: (a)
\emph{Minority Protection}: The number of items from group $G_i$, which are in the top-$k$ positions
$\pi(1), \pi(2), \dots, \pi(k)$, is at least $\lfloor \alpha_i\cdot k \rfloor$, and (b)
\emph{Restricted Dominance}: The number of items from group $G_i$, which are in the top-$k$
positions $\pi(1), \pi(2), \dots, \pi(k)$, is at most $\lceil \beta_i\cdot k \rceil$.

To compare different rankings several distance functions have been considered defined on the set of
permutations/rankings, such as  Kendall tau distance~\cite{kendall1938new, DG77,
kemeny1959mathematics, young1988condorcet, young1978consistent, dwork2001rank, ACN08,
kenyon2007rank, kumar2010generalized} (also called \emph{Kemeny distance} in case of rank
aggregation), Ulam distance~\cite{AD99, CMS01, CK06, AK10, AN10, NSS17, BS19, CDK21,
chakraborty2021approximating}, Spearman footrule distance~\cite{spearman1904proof,
spearman1906footrule, DG77, dwork2001rank, kumar2010generalized, BACHMAIER20152}, etc.  Among these,
Kendall tau distance is perhaps the most common measure used in ranking as it is the only known
measure to simultaneously satisfy several required properties such as neutrality, consistency, and
the extended Condorcet property~\cite{kemeny1959mathematics, young1988condorcet}.  The Ulam metric
is another widely-used measure in practice as it is also a simpler variant of the general edit
distance metric which finds numerous applications in computational biology, DNA storage system,
speech recognition, classification, etc. (e.g., see~\cite{CMS01, CDK21,
chakraborty2021approximating}).

One natural computational question related to fairness in ranking is, given a ranking, how to find
its closest fair ranking under $p$-fairness. Celis et al.~\cite{CelisSV18} considered this problem
and gave exact and approximation algorithms under several ranking metrics such as discounted
cumulative gain (DCG), Spearman footrule, and Bradley-Terry. However, their algorithms do not extend
to Kendall tau and Ulam metric, two of the most commonly used ranking metrics.

Fair rank aggregation is relatively less studied.  Recently, Kulman et al.~\cite{kuhlman2020rank}
initiated the study of fair rank aggregation  under Kendall tau metric. However, their fairness
notion is based on  {\em top-$k$ statistical parity} and {\em pairwise statistical parity}. These
notions are quite restricted. For example, their results only hold for binary protected attributes
(i.e., $g = 2$) and and does not satisfy $p$-fairness.  Informally, pairwise statistical parity
considers pairs of items from different groups in an aggregated manner and does not take into
account the actual rank of the items in the final ranking. See \cite{wei22} for an example on why
the fairness notion in \cite{kuhlman2020rank} does not satisfy $p$-fairness.

\subsection{Our Contributions.}
Our first main contribution is an exact algorithm for the closest fair ranking (CFR) problem (see
\AS{Definition~\ref{def:fair}}) for Kendall tau and Ulam metrics. For the
Kendall tau metric, we give the \emph{first exact algorithm} for the closest \AS{block-}fair ranking problem
(\autoref{thm:StrongGreedy}). Our algorithm is simple and based on a greedy strategy; however, the
analysis is delicate. It exploits the following interesting and perhaps surprising fact. Under the
Kendall tau metric, given a fixed (possibly unfair) ranking $\pi$, there exists a closest fair
ranking $\pi'$ to $\pi$ such that for every group $G_i, i\in[g]$, the relative ordering of
elements in $G_i$ remains unaltered in $\pi'$ compared to $\pi$ (\autoref{clm:RelOrderProp}).  Then,
for the {\em Ulam metric}, we give a polynomial time dynamic programming algorithm for the closest
\AS{strict-}fair ranking problem when the number of groups $g$ is a constant (\autoref{thm:ulamdp}). In
practice, the number of protected classes is relatively few, and hence our result gives an efficient
algorithm for such cases.

Our second significant contribution is the  study of \emph{rank aggregation problem under a
generalized notion of proportional fairness}.  One of our main contributions is to develop
anovel algorithmic framework for the fair rank aggregation that solves a wide variety of
rank aggregation objectives

satisfying such generic fairness constraints. An essential takeaway of our work is that a set of
potentially biased rankings can be aggregated into a fair ranking with only a small loss in the
`quality' of the ranking.  We study $q$-mean Fair Rank Aggregation (FRA), where given a set of
rankings $\pi_1, \pi_2, \dots, \pi_n$, a (dis)similarity measure (or distance function) $\rho$ between two
rankings, and any $q \geq 1$, the task is to determine a {\em fair} ranking $\sigma$ that minimizes
the generalized mean objective:  $\left( \sum_{i=1}^{n} \rho(\pi_i, \sigma)^q \right)^{1/q} $. We
would like to emphasize that in general, $q$-mean objective captures two classical data aggregation
tasks: One is \emph{median} which asks to minimize the sum of distances (i.e., $q=1$) and another is
\emph{center} which asks to minimize the maximum distance to the input points (i.e., $q=\infty$).

We show generic reductions of the $q$-mean Fair Rank Aggregation (FRA) to the problem of determining
the \emph{closest fair ranking} (CFR) to a given ranking. More specifically, we show that any
$c$-approximation algorithm for the closest fair ranking problem can be utilized as a blackbox to
give a $(c + 2)$-approximation to the FRA for any $q \geq 1$ (\autoref{thm:fairmeta}). This result
is oblivious to the specifics of the (dis)similarity measure and only requires the measure to be a
metric. Using the exact algorithms for CFR for the Kendall tau, Spearman footrule, and Ulam
metrics (for constantly many groups), we thus obtain  3-approximation  algorithms for the FRA
problem under these three (dis)similarity measures, respectively. Further, we provide yet another
simple algorithm that even breaks below 3-factor for the Ulam metric. For $q=1$, by combining the
above-stated 3-approximation algorithm with an additional procedure, we achieve a
$(3-\varepsilon)$-approximation factor (for some $\varepsilon > 0$) for the FRA under the Ulam, for
constantly many groups (\autoref{thm:Ulambetter}). We also provide another reduction from FRA to one
rank aggregation computation (without fairness) and a CFR computation (\autoref{thm:fairmeta2}), and
as a corollary get an $\bigO (d^3\log d+n^2d)$-time algorithm for Spearman footrule when $q=1$
(Corollary~\ref{cor:SFmed}). We summarize our main results in Table~\ref{tab:result}.

\begin{center}
    \begin{tabular}{|c||c|c|c|c|c|}
        \hline
        Problem & Metric & \#Groups & Approx Ratio & Runtime & Reference\\ \hline
        {\bf CFR} & Kendall tau (\AS{block-fair}) & Arbitrary & Exact & $\bigO (d^2)$& \autoref{thm:StrongGreedy} \\ \hline
                  & Ulam (\AS{strict-fair}) & Constant & Exact & $\bigO (d^{g+2})$ & \autoref{thm:ulamdp} \\ \hline
                  & Spearman footrule (\AS{strict-fair}) & Arbitrary & Exact & $\bigO (d^3\log d)$& \cite{CelisSV18}\\
                  \hline
                  \hline
        {\bf FRA} & Kendall tau (\AS{block-fair})& Arbitrary & 3 & $\bigO (nd^2+n^2d\log d)$ & Corollary~\ref{cor:KTAggr}  \\ \hline
                  & Ulam (\AS{strict-fair}) & Constant & $3$ & $\bigO (nd^{g+2}+n^2d\log d)$&Corollary~\ref{cor:UlamAggr} \\ \hline
                  & Ulam  (\AS{strict-fair}, $q=1$) & Constant & $3-\varepsilon$ & $\poly(n,d)$ &\autoref{thm:Ulambetter} \\ \hline
                  & Spearman footrule (\AS{strict-fair}) & Arbitrary & 3 & $\bigO (d^3\log d+n^2d+nd^2)$ & Corollary~\ref{cor:SFmed}\\
                  \hline
                  \hline

    \end{tabular}
    \label{tab:result}
\end{center}

{\bf Comparison with concurrent work.} Independently and concurrently to our work, Wei et al.~\cite{wei22} considers the
fair ranking problem under a setting that is closely related to ours. However, the fairness criteria
in their work are much more restrictive compared to ours as follows. Their algorithms for CFR are only
designed for a special case of our formulation where for each group $G_i$ and any position $k$ in
the output ranking, $\alpha_i = \beta_i = p(i)$, where $p(i)$ denotes the proportion of group $G_i$ in the entire population. Further, under the Kendall tau metric, they give a
polynomial time exact algorithm for CFR only for the special case of binary groups ($g=2$). They
also give additional algorithms for multiple groups - an exact algorithm that works in time
exponential in the number of groups and a polynomial time $2$-approximation. In contrast, we
\asr{can we still say that `we fully resolve CFR for KT'?}
resolve the CFR problem under the Kendall tau metric by giving a simple polynomial time algorithm for the case of multiple
groups and any arbitrary bounds on $\alpha_i$ and $\beta_i$ for each group $G_i$ \AS{under a notion of \textit{block fairnes}}. Further, we give the
first results for CFR and FRA under the Ulam metric as well.

\subsection{Other related work}
Recent years have witnessed a growing concern over ML algorithms or, more broadly, automated decision-making processes being biased~\cite{barocas2017fairness, mehrabi2021survey}. This bias or unfairness might stem both from implicit bias in the historical dataset or from prejudices of human agents who are responsible for generating part of the input.
Thus fair algorithms have received recent attention in machine learning and related communities. In particular, the notion of group fairness have been studied in classification~\cite{huang2019stable}, clustering~\cite{chierichetti2017fair}, correlation clustering~\cite{pmlr-v108-ahmadian20a},
resource allocation~\cite{Patel0L21}, online learning~\cite{patil2020achieving}, matroids and matchings~\cite{pmlr-v89-chierichetti19a}.
Specially, fair clustering problem is closely related to our problem. Fair rank aggregation can be considered as the fair 1-clustering problem where the input set is a set of rankings. See \cite{HuangJV19, ChenFLM19, BeraCFN19, BackursIOSVW19} for more related work on fair clustering.



\section{Preliminaries}
\paragraph*{Notations.}For any $n \in \mathbb{N}$, let $[n]$ denote the set $\{1,2,\dots,n\}$. We
refer to the set of all permutations/rankings over $[d]$ by $\sym_d$. Throughout this paper we
consider any permutation $\pi \in \sym_d$ as a sequence of numbers $a_1,a_2,\dots,a_d$ such that
$\pi(i)=a_i$, and we say that the \emph{rank} of $a_i$ is $i$. For any two $x,y \in [d]$ and a
permutation $\pi \in \sym_d$, we use the notation $x <_{\pi} y$ to denote that the rank of $x$ is
less than that of $y$ in $\pi$. For any subset $I=\{i_1 <i_2 <\cdots < i_r\}\subseteq [d]$, let
$\pi(I)$ be the sequence $\pi(i_1), \pi(i_2),\dots , \pi(i_r)$ (which is essentially a subsequence
of the sequence represented by $\pi$). When clear from the context, we use $\pi(I)$ also to denote
the set of elements in the sequence $\pi(i_1), \pi(i_2),\dots , \pi(i_r)$. For any $k \in [d]$ and
a permutation $\pi \in \sym_d$, we refer to $\pi([k])$ as the $k$-length \emph{prefix} of $\pi$. For
any prefix $P$, let $|P|$ denote the length of that prefix. For any two prefixes $P_1,P_2$ of a given string, we use $P_1 \subseteq P_2$ to denote $|P_1| \le |P_2|$.

\paragraph*{Distance measures on rankings.}There are different distance functions being considered
to measure the dissimilarity between any two rankings/permutations. Among them, perhaps the most
commonly used one is the \emph{Kendall tau distance}.
\begin{definition}[Kendall tau distance]
    Given two permutations $\pi_1,\pi_2\in \sym_d$, the \emph{Kendall tau distance} between them,
    denoted by $\mathcal{K}(\pi_1,\pi_2)$, is the number of pairwise disagreements between $\pi_1$
    and $\pi_2$, i.e.,
    \[
        \mathcal{K}(\pi_1,\pi_2):=|\{ (a,b) \in [d] \times [d] \mid a<_{\pi_1} b \text{ but
        }b<_{\pi_2}a \}|.
    \]
\end{definition}
Another important distance measure is the \emph{Spearman footrule} (aka \emph{Spearman's rho})
which is essentially the $\ell_1$-norm between two permutations.
\begin{definition}[Spearman footrule distance]
    Given two permutations $\pi_1,\pi_2\in \sym_d$, the \emph{Spearman footrule distance} between
    them
    is defined as
    $\mathcal{F}(\pi_1,\pi_2):=\sum_{i\in[d]} |\pi_1(i)-\pi_2(i)|$.
\end{definition}

Another interesting distance measure is the \emph{Ulam distance} which counts the minimum number of
character move operations between two permutations~\cite{AD99}. This definition is motivated by the
classical \emph{edit distance} that is used to measure the dissimilarity between two strings. A
character move operation in a permutation can be thought of as “picking up” a character from its
position and then “inserting” that character into a different position
\footnote{One may also consider
    one deletion and one insertion operation instead of a character move, and define the Ulam
    distance accordingly as in~\cite{CMS01}.}.
\begin{definition}[Ulam distance]
    Given two permutations $\pi_1,\pi_2\in \sym_d$, the \emph{Ulam distance} between them, denoted
    by $\mathcal{U}(\pi_1,\pi_2)$, is the minimum number of character move operations that is needed
    to transform $\pi_1$ into $\pi_2$.
\end{definition}
Alternately, the Ulam distance between $\pi_1,\pi_2$ can be defined as $d - |\mathsf{LCS}(\pi_1,
\pi_2)|$, where $|\mathsf{LCS}(\pi_1,\pi_2)|$ denotes the length of a \emph{longest common subsequence} between the
sequences $\pi_1$ and $\pi_2$.

\paragraph*{Fair rankings.} We are given a set $C$ of $d$ candidates, which are partitioned into
$g$ groups; \AS{and rational vectors $\bar{\alpha}= (\alpha_1,\dots,\alpha_g) \in [0,1]^g\cap\Q^g$, $\bar{\beta}= (\beta_1,\dots,\beta_g)\in [0,1]^g\cap\Q^g$. We call a ranking (of these $d$ candidates) \emph{fair} if a sufficiently large prefix,
has a certain proportion (between $\alpha_i$ and $\beta_i$)} of representatives from each group $i\in[g]$. More formally,

\begin{definition}[$(\bar{\alpha},\bar{\beta})$-$k$-fair ranking]
    \label{def:weak-fair}
    Consider a set $C$ of $d$ candidates partitioned into $g$ groups $G_1,\dots,G_g$, and
    $\bar{\alpha}= (\alpha_1,\dots,\alpha_g) \in [0,1]^g\cap\Q^g$, $\bar{\beta}= (\beta_1,\dots,\beta_g)
    \in [0,1]^g\cap\Q^g$, $k \in [d]$. A ranking $\pi \in \sym_d$ is said to be
    \emph{$(\bar{\alpha},\bar{\beta})$-$k$-fair} if for the $k$-length prefix $P$ of $\pi$ and
    each group $i\in[g]$, there are at least $\lfloor \alpha_i\cdot k \rfloor $ and at most $\lceil
    \beta_i\cdot k \rceil$ elements from the group $G_i$ in $P$, i.e.,
    \[
        \forall_{i \in [g]},\; \lfloor \alpha_i\cdot k \rfloor \le |P\cap G_i| \le \lceil
        \beta_i\cdot k \rceil.
    \]
\end{definition}

\AS{ We also define stronger fairness notions that preserves the proportionate representation for
 multiple prefixes, and not just a fixed $k$-length prefix. For this we first define the notion of
 a block prefix, which is a prefix of some length $b$, such that $b\cdot\alpha_i$ and
 $b\cdot\beta_i$ are integers for all $i\in[g]$.}
\AS{
\begin{definition}[$(\bar{\alpha},\bar{\beta})$-block-$k$-fair ranking]
    \label{def:block-fair}
    \footnote{In a preliminary version of the paper that appeared in NeurIPS'22, we claimed a stronger
     notion of fairness, but some technical claims in it were found to be incorrect due to
     arithmetic involving floor/ceiling. To remedy this, we have now used the slightly weaker notion
     of \AS{block-}fairness here.}
    Consider a set $C$ of $d$ candidates partitioned into $g$ groups $G_1,\dots,G_g$, rational
    vectors
    $\bar{\alpha}= (\alpha_1,\dots,\alpha_g) \in [0,1]^g\cap\Q^g$,
    $\bar{\beta}= (\beta_1,\dots,\beta_g) \in [0,1]^g\cap\Q^g$, $k \in [d]$, and a given
    block-prefix length $b$. A ranking $\pi \in \sym_d$ is said to be
    \emph{$(\bar{\alpha},\bar{\beta})$ block-$k$-fair} if for any prefix $P$ of $\pi$, of length
    at least $k$ and satisfying $|P|\mod{b}\equiv0$, and each group $i\in[g]$, there are at least
    $\alpha_i\cdot|P| $ and at most $\beta_i\cdot|P|$ elements from the group $G_i$ in $P$, i.e.,
    \[
        \forall_{\text{prefix }P: |P| \ge k \land |P|\mod{b}\equiv0},\; \forall_{i \in [g]},\; \alpha_i\cdot|P| \le |P\cap G_i| \le \beta_i\cdot|P|.
    \]
\end{definition}
}

\begin{definition}[$(\bar{\alpha},\bar{\beta})$-strict-$k$-fair ranking]
    \label{def:fair}
    Consider a set $C$ of $d$ candidates partitioned into $g$ groups $G_1,\dots,G_g$, and
    $\bar{\alpha}= (\alpha_1,\dots,\alpha_g) \in [0,1]^g\cap\Q^g$, $\bar{\beta}= (\beta_1,\dots,\beta_g)
    \in [0,1]^g\cap\Q^g$, $k \in [d]$. A ranking $\pi \in \sym_d$ is said to be
    \emph{$(\bar{\alpha},\bar{\beta})$-strict-$k$-fair} if for any prefix $P$ of length at least $k$, of
    $\pi$ and each group $i\in[g]$, there are at least $\lfloor \alpha_i\cdot|P| \rfloor $ and at
    most $\lceil \beta_i\cdot|P| \rceil$ elements from the group $G_i$ in $P$, i.e.,
    \[
        \forall_{\text{prefix }P: |P| \ge k},\; \forall_{i \in [g]},\; \lfloor \alpha_i\cdot|P|
        \rfloor \le |P\cap G_i| \le \lceil \beta_i\cdot|P| \rceil.
    \]
\end{definition}

\AS{For brevity, we will sometimes use the term fair ranking instead
of $(\bar{\alpha},\bar{\beta})$-$k$-fair ranking (and similarly also use the terms block-fair ranking, and strict-fair ranking).}

\section{Closest Fair Ranking}
\label{sec:CFR}
In this section, we consider the problem of computing the closest fair ranking of a given input
ranking. Below we formally define the problem.
\begin{definition}[Closest fair ranking problem]
    Consider a metric space $(\sym_d,\rho)$ for a $d \in \mathbb{N}$. Given a ranking $\pi\in
    \sym_d$ and $\bar{\alpha},\bar{\beta} \in [0,1]^g$ for some $g \in \mathbb{N}$, $k \in [d]$, the
    objective of the \emph{closest fair ranking problem} (resp. closest \AS{block/strict-}fair ranking problem)
    is to find a $(\bar{\alpha},\bar{\beta})$-$k$-fair ranking (resp.
    \AS{$(\bar{\alpha},\bar{\beta})$-block-$k$-fair ranking} or $(\bar{\alpha},\bar{\beta})$-strict-$k$-fair ranking) $\pi^*\in \sym_d$ that minimizes the
    distance $\rho(\pi,\pi^*)$ .
\end{definition}
 The following algorithm assumes that a fair ranking exists, and finds one in this case. If such a
 ranking does not exist, then the procedure might return an arbitrary ranking. However, it is
 possible to check in linear time whether a given ranking is fair or not. Hence, we can correctly
 output a solution if one exists, and return `no solution' otherwise.

\subsection{Closest fair ranking under Kendall tau metric}
\label{sec:KT-CFR}

\paragraph*{Closest fair ranking.}We first show that we can compute a closest fair
ranking under the Kendall tau metric exactly in linear time\AS{, and later generalize our result
to the notion of block-fairness.}
\begin{restatable}{theorem}{AltGreedy} \label{thm:AltGreedy}
    There exists a linear time algorithm that, given a ranking $\pi\in \sym_d$, a partition of $[d]$
    into $g$ groups $G_1,\dots,G_g$ for some $g \in \mathbb{N}$, and
    $\bar{\alpha}=(\alpha_1,\dots,\alpha_g) \in [0,1]^g$, $\bar{\beta}=(\beta_1,\dots,\beta_g) \in
    [0,1]^g$, $k \in [d]$, outputs a closest $(\bar{\alpha},\bar{\beta})$-$k$-fair ranking
    under the Kendall tau distance.
\end{restatable}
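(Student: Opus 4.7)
The plan is to reduce the closest fair ranking problem to a simple subset-selection problem via a structural claim, and then solve that selection problem by a two-phase greedy. The key structural observation is a ``relative-order'' lemma: among optimal $(\bar{\alpha},\bar{\beta})$-$k$-fair rankings with respect to $\pi$, one can always be chosen, call it $\pi^*$, in which for every group $G_i$ the elements of $G_i$ appear in the same relative order as in $\pi$. To prove this, I would start with any optimal ranking and, one group at a time, reorder the $G_i$-entries among the positions they already occupy so that they match $\pi$'s order, leaving all non-$G_i$ positions untouched. This in-place permutation leaves $|P \cap G_j|$ unchanged for every prefix $P$ and every group $G_j$ (so fairness is preserved), keeps the set of $G_i$-elements to the left of each non-$G_i$ position invariant (so every cross-group inversion is unchanged), and cannot increase the number of within-$G_i$ inversions against $\pi$ (they become zero). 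Iterating over $i\in[g]$ yields the lemma.

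Given this structure, an optimal $\pi^*$ is completely determined by the subset $S := \pi^*([k]) \subseteq [d]$: the top-$k$ prefix must be the subsequence of $\pi$ restricted to $S$ and the bottom $d-k$ part must be the subsequence of $\pi$ restricted to $[d]\setminus S$. With this layout the top-$k$ and bottom parts individually contribute no inversions against $\pi$, so every inversion is a cross pair $(a,b)$ with $a \in S$, $b \notin S$, and $b <_\pi a$. Writing $r_\pi(a)$ for the position of $a$ in $\pi$, a direct count yields
\[
\mathcal{K}(\pi,\pi^*) \;=\; \sum_{a \in S} r_\pi(a) \;-\; k \;-\; \binom{k}{2}.
\]
Hence the task reduces to choosing $S\subseteq[d]$ that minimizes $\sum_{a\in S} r_\pi(a)$ subject to $|S|=k$ and $\lfloor \alpha_i k \rfloor \le |S\cap G_i| \le \lceil \beta_i k\rceil$ for every $i\in[g]$.

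I would then solve this selection problem with a two-phase greedy. Phase~1 puts the first $\lfloor \alpha_i k \rfloor$ elements of each group $G_i$ (in $\pi$-order) into $S$ to satisfy the lower bounds. Phase~2 scans $\pi$ left-to-right and, for each element $x\in G_i$ not already in $S$, inserts it whenever $|S|<k$ and $|S\cap G_i|<\lceil \beta_i k\rceil$. Correctness reduces to a convex exchange argument: for each group $G_i$, the marginal $\pi$-position of the next available $G_i$-element is non-decreasing in the number already chosen, so the per-step optimal choice is the not-yet-used element with smallest $\pi$-position among the groups with room, which is precisely the one the left-to-right scan encounters first. Feasibility (equivalent to $\sum_i\lfloor\alpha_i k\rfloor \le k \le \sum_i\lceil\beta_i k\rceil$) can be pre-checked in linear time, and $\pi^*$ is emitted as the concatenation of the two $\pi$-subsequences in a final linear pass.

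The main obstacle I expect is the relative-order lemma: naive adjacent-swap proofs break because same-group out-of-order pairs in $\pi^*$ may be separated by many elements of other groups. The ``permute-all-of-$G_i$-in-place'' move circumvents this by preserving prefix group-counts and every cross-group orientation simultaneously; once this lemma is in hand, the cost formula, greedy optimality, and $O(d)$ implementation are all routine.
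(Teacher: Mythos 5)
Your algorithm and overall strategy coincide with the paper's: the same structural lemma (the paper's Claim on preserving intra-group orderings) and the identical two-phase greedy (the paper's Algorithm~1). Your route to greedy optimality is genuinely different, though: you derive the closed-form cost $\sum_{a\in S} r_\pi(a) - k - \binom{k}{2}$ and reduce to a separable convex selection problem, whereas the paper argues directly on rankings, exhibiting $a \in \pi_\GRD[k]\setminus\pi_\OPT[k]$ and $b\in\pi_\OPT[k]\setminus\pi_\GRD[k]$ with $a<_\pi b$ and showing the swap strictly decreases the Kendall tau distance while preserving fairness. Your version is more modular and makes the optimization structure explicit; the paper's is more self-contained. (You do implicitly use, as the paper makes explicit, that an optimal solution also orders the top-$k$ block and the bottom block internally by $\pi$ \emph{across} groups, not just within each group; this is immediate since reordering either block does not affect which elements lie in the prefix.) Note also that the paper proves the relative-order lemma by exactly the kind of non-adjacent same-group swap you dismiss as problematic, handling the intervening elements by a case analysis, so that route is viable too.

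There is, however, an error in your proof of the relative-order lemma. Reordering the elements of $G_i$ in place (within the positions they already occupy) does \emph{not} keep ``the set of $G_i$-elements to the left of each non-$G_i$ position'' invariant --- only its cardinality is invariant, since different elements of $G_i$ may now occupy the $G_i$-positions preceding a given $b\notin G_i$. Consequently the cross-group inversions are not ``unchanged''; they can change (and in fact can strictly decrease). What is true, and all you need, is that they cannot increase: if $m$ of the $G_i$-positions precede $b$ and $s=|\{a\in G_i: a<_\pi b\}|$, then the number of $\pi$-inversions between $b$ and the elements of $G_i$ equals $s+m-2\,|\{a\in T: a<_\pi b\}|$, where $T$ is the set of $G_i$-elements placed before $b$; this is minimized over all $m$-subsets $T$ by taking the first $m$ elements of $G_i$ in $\pi$-order, which is exactly what your rearrangement produces. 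With that one-line correction the lemma, and hence the rest of your argument, goes through.
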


Let us first describe the algorithm. Our algorithm follows a simple greedy strategy. For each group
$G_i$, it picks the top $\lfloor \alpha_i k \rfloor$ elements according to the input ranking $\pi$,
and add them in a set $P$. If $P$ contains $k$ elements, then we are done. Otherwise, we iterate
over the remaining elements (in the increasing order of rank by $\pi$) and add them in $P$ as long
as for each group $G_i$, $|P \cap G_i| \le
\lceil \beta_i k \rceil$ (each group has at most $\lceil \beta_i k \rceil$ elements in $P$) until
 the size of $P$ becomes exactly $k$. Then we use the relative ordering of the elements in $P$ as
 in the input ranking $\pi$ and make it the $k$-length prefix of the output ranking $\sigma$. Fill
 the last $d-k$ positions of $\sigma$ by the remaining elements ($[d] \setminus P$) by following
 their relative ordering as in the input $\pi$. \AS{We now give a more formal pseudocode for the algorithm:}

\begin{algorithm}
    \caption{Algorithm to compute closest fair ranking under Kendall tau}
    \label{alg:AltGreedy}
    \KwIn{Input ranking $\pi \in \sym_d$, $g$ groups $G_1,\dots,G_g$,
    $\bar{\alpha}=(\alpha_1,\dots,\alpha_g) \in [0,1]^g$, $\bar{\beta}=(\beta_1,\dots,\beta_g) \in
        [0,1]^g$, $k \in [d]$.}
    \KwOut{An $(\bar{\alpha},\bar{\beta})$-$k$-fair ranking.}
    Initialize a set $P \leftarrow \emptyset$.\\
    Initialize a ranking $\sigma \leftarrow \emptyset$. \\
    From each group $i \in [g]$, pick the top $\lfloor \alpha_i k \rfloor $ elements of $G_i$
        according to the ranking $\pi$ and add them to $P$.\\
    \If{$|P| > k$}{
        \Return{No fair ranking exists.}
    }
    \Else{
        Iterate over the remaining elements (in increasing order of rank by $\pi$) and add them in
        $P$ as long as $|P| \le k$ and for each $i \in [g]$, $|P \cap G_i| \le \lceil \beta_i k
        \rceil$.\\
        $\sigma\leftarrow$ Construct a new ranking by first placing the elements of $P$ in the top $k$
        positions (i,e., in the $k$-length prefix) while preserving the relative ordering among
        themselves according to $\pi$, and then placing the remaining elements ($[d] \setminus P$)
        in the rest (the last $d - k$ positions) while preserving the relative ordering among
        themselves according to $\pi$.\\
    }

    Iterate over ranking $\sigma$, and count the fraction of elements in the top-$k$ from each group. \\
    \If{$\sigma$ is $(\bar{\alpha},\bar{\beta})$-$k$-fair} {
        \Return{$\sigma$.}
    }
    \Else{
        \Return{No fair ranking exists.}
    }

\end{algorithm}

In the algorithm description above, for a subset $P\subseteq[d]$, we say that an ordering $\sigma$
of the elements of $P$ \textit{preserves relative orderings according to $\pi$}, when for any two
elements $a\neq b\in P$, $a<_\sigma b$ if and only if $a<_\pi b$.

By the construction of set $P$, at the end, for each group $G_i$, $ \lfloor \alpha_i k \rfloor
\le |P \cap G_i| \le \lceil \beta_i k \rceil$. Since we use the elements of $P$ in the $k$-length
prefix of the output ranking $\sigma$, $\sigma$ is an $(\bar{\alpha},\bar{\beta})$-$k$-fair
ranking. For the running time, a straightforward implementation our algorithms takes $\bigO (d)$
time.
It only remains to argue that $\sigma$ is a closest $(\bar{\alpha},\bar{\beta})$-$k$-fair
ranking to the input $\pi$. To show that, we use the following key observation.

\begin{restatable}{claim}{RelOrderProp} \label{clm:RelOrderProp}
    Under the Kendall tau distance, there always exists a closest $(\bar{\alpha},\bar{\beta})$-
    $k$-fair ranking $\pi^*$ such that, for each group $G_i$ ($i \in [g]$), for any two elements $a
    \ne b \in G_i$, $a <_{\pi^*} b$ if and only if $a <_{\pi} b$.
\end{restatable}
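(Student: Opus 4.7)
The plan is to prove the stronger statement that \emph{every} closest fair ranking already preserves intra-group relative orderings, via a local exchange argument. Suppose for contradiction that some closest fair ranking $\pi^*$ contains an intra-group inversion with respect to $\pi$: there exist $a,b \in G_i$ with $a <_\pi b$ but $b <_{\pi^*} a$. Listing the elements of $G_i$ in $\pi^*$-order as $e_1, e_2, \ldots, e_m$, the sequence of $\pi$-ranks is not monotone increasing, so there exists some index $k$ with $e_{k+1} <_\pi e_k$. Let $p_k < p_{k+1}$ be their $\pi^*$-positions; by the choice of $k$, no element of $G_i$ lies strictly between positions $p_k$ and $p_{k+1}$ in $\pi^*$. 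I form $\pi'$ by swapping the labels at positions $p_k$ and $p_{k+1}$ in $\pi^*$; since this exchanges two elements of the same group, the group composition of every prefix is unchanged, and hence $\pi'$ is also $(\bar{\alpha},\bar{\beta})$-$k$-fair.

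It remains to show $\mathcal{K}(\pi', \pi) < \mathcal{K}(\pi^*, \pi)$. The only pairs whose relative order changes between $\pi^*$ and $\pi'$ are $\{e_k, e_{k+1}\}$ itself, together with the pairs $\{e_k, c\}$ and $\{e_{k+1}, c\}$ for each $c$ occupying a position in $(p_k, p_{k+1})$. The first pair was an inversion with $\pi$ in $\pi^*$ (since $e_{k+1} <_\pi e_k$ but $e_k <_{\pi^*} e_{k+1}$) and ceases to be one in $\pi'$, contributing $-1$ to the change. For each intermediate $c$, both $\{e_k,c\}$ and $\{e_{k+1},c\}$ flip their order, so each contributes $\pm 1$. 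A case analysis on the $\pi$-position of $c$, using $e_{k+1} <_\pi e_k$, gives: if $c <_\pi e_{k+1}$, the contributions are $(-1,+1)$ with net $0$; if $e_k <_\pi c$, they are $(+1,-1)$ with net $0$; if $e_{k+1} <_\pi c <_\pi e_k$, they are $(-1,-1)$ with net $-2$. Each intermediate $c$ thus contributes non-positively, yielding $\mathcal{K}(\pi', \pi) - \mathcal{K}(\pi^*, \pi) \le -1 < 0$ and contradicting the optimality of $\pi^*$.

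The main obstacle is handling the intermediate elements $c$: swapping $e_k$ and $e_{k+1}$ flips the cross-group pairs $\{e_k,c\}$ and $\{e_{k+1},c\}$, and a naive worry is that the resulting $+1$ contributions could overwhelm the single $-1$ gain on $\{e_k, e_{k+1}\}$. The essential observation is that the condition $e_{k+1} <_\pi e_k$ pins the pair of changes arising from each intermediate $c$ to be either exactly cancelling or jointly negative---never jointly positive---which is precisely why one must choose $k$ so that $e_k, e_{k+1}$ are adjacent within the $G_i$-subsequence of $\pi^*$ (so that no $c$ in between can itself belong to $G_i$), rather than picking an arbitrary intra-group inversion.
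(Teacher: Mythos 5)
Your proposal is correct and uses essentially the same exchange argument as the paper: swap an intra-group inverted pair in an optimal fair ranking, note that fairness is preserved because prefix group-compositions do not change, and do a case analysis on the elements lying between the swapped pair to control the change in Kendall tau distance. The only cosmetic difference is that you obtain a strict decrease (hence a direct contradiction showing \emph{every} optimal ranking preserves intra-group order), whereas the paper settles for a non-strict inequality and iterates the swap; both are valid, and your case analysis $(-1,+1)$, $(+1,-1)$, $(-1,-1)$ checks out.
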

\begin{proof}
    Let $\pi$ be the input ranking, and $\pi'$ be some closest $(\bar{\alpha},\bar{\beta})$-$k$-fair ranking.
    If there exists $x,y\in G$ for some group $G$, such that $x<_\pi y$ but $x>_{\pi'}y$, then we claim
    that the permutation $\pi''$ obtained by swapping $x$ and $y$, is fair, and satisfies $\mathcal{K}
    (\pi,\pi'')\le\mathcal{K}(\pi,\pi')$. Hence, by a series of such swap operations, any $(\bar
    {\alpha},\bar{\beta})$-$k$-fair ranking $\pi'$, can be transformed into one that satisfies the property
    that, $a\ne b \in G_i$, $a <_{\pi'} b$ if and only if $a <_{\pi} b$.

    We partition the set $[d]\setminus\{x,y\} $ into three sets, $L=\{z:z<_{\pi''}x\},
    B=\{z:x<_{\pi''}z<_{\pi''}y\}, U=\{z:y<_{\pi''}z\}$. Now, consider the sets
    $K_{\pi'}=\{ (i,j)\in[d]\times[d]\mid (i<_{\pi}j)\land (i>_{\pi'}j) \},
    K_{\pi''}=\{ (i,j)\in[d]\times[d]\mid (i<_{\pi}j)\land (i>_{\pi''}j) \}$.

    Consider any element $(a,b)\in K_{\pi''}$, such that $a, b$ belong to the same group.
    \begin{itemize}
        \item If $a,b\not\in\{x,y\}$: then clearly the pair of elements is in the same positions in
            $\pi'$ also, and hence $(a,b)\in K_{\pi'}$.
        \item If $y\in\{a,b\}$: let $a'\in\{a,b\}\setminus\{y\}$.
            \begin{itemize}
                \item $a'\in U$: This means that $x<_{\pi''}y<_{\pi''}a'$, so with swapped $x$ and
                    $y$, we will have that $y<_{\pi'}x<_{\pi'}a'$. The relative position of $y$ and
                    $a'$ remains the same and hence $(a',y)\in K_{\pi'}$.
                \item $a'\in B$: This means that $x<_{\pi''}a'<_{\pi''}y$, so with swapped $x$ and
                    $y$, we will have that $y<_{\pi'}a'<_{\pi'}x$. But we know that $(y,a')\in K_{\pi''}$, which means that $x<_{\pi}y<_{\pi}a'$. So we have that $x<_{\pi}a'\land
                    a'<_{\pi'}x$, which gives us that $(x,a')\in K_{\pi'}$.
                \item $a'\in L$: This means that $a'<_{\pi''}x<_{\pi''}y$, so with swapped $x$ and
                    $y$, we will have that $a'<_{\pi'}y<_{\pi'}x$. The relative position of $y$ and
                    $a'$ remains the same and hence $(y,a')\in K_{\pi'}$.
            \end{itemize}
        \item If $x\in\{a,b\}$: This case can also be analyzed similarly to the case above.
    \end{itemize}
    Now, note that $(x,y)$ is not in $K_{\pi''}$, so we get that $|K_{\pi''}|\le |K_{\pi'}|$.
    Also, since this swap operation does not change the number (or fraction) of elements from any group in the top-$k$ ranks, $\pi''$ is still fair. This completes our proof.
\end{proof}

We say that a ranking $\pi^*$ satisfying the above property (i.e., for each group $G_i$, $i \in [g]$, for any two elements $a\ne b \in G_i$, $a <_{\pi^*} b$ if and only if $a <_{\pi} b$), \textit{preserves intra-group orderings}. This is because for elements of any group, their ordering in $\pi^*$ is the same as their ordering in $\pi$.

We want to highlight that the above claim holds for both the notions of fairness we consider. \AS{Now, using the
above claim, we complete the proof of} \autoref{thm:AltGreedy}.
\begin{proof}[Proof of \autoref{thm:AltGreedy}.]
    Let $\pi$ be the given input ranking, $\pi_\opt$ be a closest ranking that also preserves
    intra-group orderings (the existence of which is guaranteed by~\autoref{clm:RelOrderProp}), and $\pi_\GRD$ be the ranking returned by Algorithm~\ref{alg:AltGreedy}. First, observe, by the construction of $\pi_\GRD$ (in Algorithm~\ref{alg:AltGreedy}), $\pi_\GRD$ also preserves intra-group orderings, and moreover, the relative ordering among the set of symbols in $\pi_\GRD[k]$ (and in $[d]\setminus \pi_\GRD[k]$) is the same as that in $\pi$. It is also easy to see that the relative ordering among the set of symbols in $\pi_\OPT[k]$ (and in $[d]\setminus \pi_\OPT[k]$) is the same as that in $\pi$; otherwise, by only changing the relative ordering of the set of symbols in $\pi_\OPT[k]$ (and in $[d]\setminus \pi_\OPT[k]$) to that in $\pi$, we get another fair ranking $\pi'$ such that $\mathcal{K}(\pi',\pi) < \mathcal{K}(\pi_\OPT,\pi)$, which leads to a contradiction.

    Now, assume towards contradiction that the rankings $\pi_{\opt}$ and $\pi_\GRD$ are not the same, which implies (from the discussion in the last paragraph) that there is a symbol $a \in \pi_\GRD[k] \setminus \pi_\OPT[k]$ and $b \in \pi_\OPT[k] \setminus \pi_\GRD[k]$. Since both $\pi_\OPT$ and $\pi_\GRD$ preserve intra-group orderings, $a$ and $b$ cannot belong to the same group, i.e., $a\in G_i$ and $b \in G_j$ such that $i \ne j$. Then since $\pi_\OPT$ is a fair ranking that preserves intra-group orderings,
    \begin{enumerate}
        \item\label{itm:a} element $a$ is not among the top $\lfloor \alpha_i k \rfloor$ elements of $G_i$ according to $\pi$; and
        \item\label{itm:b} element $b$ is among the top $\lceil \beta_j k \rceil$ elements of $G_j$ according to $\pi$.
    \end{enumerate}

        Next, we argue that $ a <_\pi b$. If not, since $ a <_{\pi_\GRD} b$, it follows from the construction of $\pi_\GRD$ (in Algorithm~\ref{alg:AltGreedy}), either $a$ is among the top $\lfloor \alpha_i k \rfloor$ elements of $G_i$ according to $\pi$ contradicting Item~\ref{itm:a}, or $b$ is not among the top $\lceil \beta_j k \rceil$ elements of $G_j$ according to $\pi$ contradicting Item~\ref{itm:b}. Hence, we conclude that $ a <_\pi b$.

        We now claim that by swapping $a$ and $b$ in $\pi_\opt$, we get a ranking $\pi'$
    that is fair and also reduces the Kendall tau distance from $\pi$, giving us a contradiction. Let us start by arguing that $\pi'$ is fair. Note, $\pi'[k] = (\pi_\OPT[k] \setminus \{b\}) \cup \{a\}$. Next, observe, since $\pi_\GRD$ is a fair ranking that preserves intra-group orderings,
    \begin{itemize}
        \item element $b$ is not among the top $\lfloor \alpha_j k \rfloor$ elements of $G_j$ according to $\pi$; and
        \item element $a$ is among the top $\lceil \beta_i k \rceil$ elements of $G_i$ according to $\pi$.
    \end{itemize}
    As a consequence, swapping $a$ and $b$ in $\pi_\OPT$ to get $\pi'$ does not violate any of the fairness constraints for the groups $G_i$ and $G_j$ (and, of course, none of the other groups). Hence, $\pi'$ is also a fair ranking. So it remains to argue that $\mathcal{K}(\pi',\pi) < \mathcal{K}(\pi_\OPT,\pi)$ to derive the contradiction.

    Observe that the set of pair of symbols for which the relative ordering changes in $\pi'$ as compared to
    $\pi_\opt$ is the following:
    \[
    \{(b,c) , (c,a) \mid b <_{\pi_\OPT} c <_{\pi_\OPT} a\}.
    \]

    Consider any element $c$ such that $b <_{\pi_\OPT} c <_{\pi_\OPT} a$. Note, $a <_{\pi'} c <_{\pi'} b$.
    \begin{itemize}
        \item If the pair $(c,a)$ creates an inversion (with respect to $\pi$) in $\pi'$, but not in $\pi_\OPT$, then $c<_\pi a$. Now, since $a<_\pi b$, we have $c<_\pi b$. Thus the pair $(b,c)$ creates an inversion (with respect to $\pi$) in $\pi_\OPT$, but not in $\pi'$.

        \item If the pair $(b,c)$ creates an inversion (with respect to $\pi$) in $\pi'$, but not in $\pi_\OPT$, then $b<_\pi c$. Now, since $a<_\pi b$, we have $a<_\pi c$. Thus the pair $(c,a)$ creates an inversion (with respect to $\pi$) in $\pi_\OPT$, but not in $\pi'$.
    \end{itemize}
    Hence, we see that $\pi'$ has at least one lesser inverted pair (being $a$ and $b$) than
    $\pi_\opt$. In other words, $\mathcal{K}(\pi',\pi) < \mathcal{K}(\pi_\OPT,\pi)$, leading to a contradiction, which completes the proof.
\end{proof}

\paragraph*{Extension to \AS{block-}fairness notion.}Previously, we provide an algorithm that outputs a
fair ranking (see Definition~\ref{def:weak-fair} closest
to the input. Now, we present an algorithm that outputs a closest \AS{block-}fair ranking (according to Definition~\ref{def:block-fair}).

\begin{restatable}{theorem}{StrongGreedy} \label{thm:StrongGreedy}
    There exists an $\bigO(d^2)$ time algorithm that, given a ranking $\pi\in \sym_d$, a partition of
    $[d]$ into $g$ groups $G_1,\dots,G_g$ for some $g \in \mathbb{N}$, and
    $\bar{\alpha}=(\alpha_1,\dots,\alpha_g) \in [0,1]^g\cap\Q^q$, $\bar{\beta}=(\beta_1,\dots,\beta_g) \in
    [0,1]^g\cap\Q^q$, $k \in [d]$, outputs a closest $(\bar{\alpha},\bar{\beta})$-\AS{block k-}fair ranking
    under the Kendall tau distance.
\end{restatable}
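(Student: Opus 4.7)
The plan is to iteratively apply \autoref{alg:AltGreedy}, processing the blocks from left to right. Let $\ell_1 = b\lceil k/b\rceil,\,\ell_2 = \ell_1+b,\ldots,\ell_m$ be the sequence of checkpoint lengths (multiples of $b$ that are at least $k$). For block $j$, given the count $n_i^{(j-1)}$ of group-$G_i$ elements already placed in the $\ell_{j-1}$-length prefix, I apply the greedy rule of \autoref{alg:AltGreedy} to pick the next $\ell_j - \ell_{j-1}$ elements from those unplaced, with adjusted lower bound $\alpha_i\ell_j - n_i^{(j-1)}$ and upper bound $\beta_i\ell_j - n_i^{(j-1)}$ per group. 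These bounds are integers because $\alpha_i b,\beta_i b\in\Z$ and $\ell_j$ is a multiple of $b$, so no floor/ceiling rounding is needed. Within each block, I arrange the chosen elements by $\pi$-order, and place the tail (positions $\ell_m+1,\ldots,d$) at the end, also in $\pi$-order.

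Each block invocation scans the remaining elements in $O(d)$ time, and there are $O(d/b)$ blocks, giving a total runtime of $O(d^2/b)\le O(d^2)$. For correctness, \autoref{clm:RelOrderProp} (which, as the paper notes, applies to block-fairness as well) guarantees the existence of an optimal block-fair ranking $\pi^*$ preserving intra-group orderings. I can further assume that $\pi^*$ orders the elements within each block by $\pi$-order, since any within-block reordering leaves every checkpoint count unchanged and can only weakly decrease Kendall tau. Hence $\pi^*$ is fully determined by the nested prefix sets $\pi^*[\ell_1]\subsetneq\cdots\subsetneq\pi^*[\ell_m]$. I claim, by induction on $j$, that for every $j$ there exists an optimal $\pi^*$ with $\sigma[\ell_j]=\pi^*[\ell_j]$, where $\sigma$ is the greedy output. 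The base case $j=1$ is \autoref{thm:AltGreedy} applied at $k=\ell_1$. For the inductive step, assume $\sigma[\ell_{j-1}]=\pi^*[\ell_{j-1}]$; any discrepancy at block $j$ produces $a\in B_j^\sigma\setminus B_j^{\pi^*}$ and $b\in B_j^{\pi^*}\setminus B_j^\sigma$ in distinct groups, and the same case analysis as in the proof of \autoref{thm:AltGreedy} (applied to the sub-problem on the remaining elements at block $j$) yields $a<_\pi b$. Swapping the block assignments of $a$ and $b$ in $\pi^*$ should give a ranking $\pi^{**}$ with $\mathcal K(\pi^{**},\pi)\le\mathcal K(\pi^*,\pi)$, which after sufficiently many such swaps transforms $\pi^*$ into a ranking agreeing with $\sigma[\ell_j]$, closing the induction.

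The hard part will be verifying that this swap preserves block-fairness at every intermediate checkpoint $\ell_{j'}$, where $j\le j'<j_a$ and $j_a$ is the block of $a$ in $\pi^*$. The swap shifts $+1$ to the count of $a$'s group and $-1$ to $b$'s group at each such $\ell_{j'}$, so one must verify that $a$'s group's upper bound and $b$'s group's lower bound still hold there. This is where the structural property of the greedy---that its running group counts always lie within the fairness envelopes with enough slack to absorb a unit perturbation at intermediate checkpoints---is crucial. When a direct swap is blocked by a tight intermediate constraint, the same net effect can be achieved by a chain of two-block local exchanges, each swapping one element between two adjacent blocks of $\pi^*$ while preserving all counts outside those two blocks, in the manner of standard matroid exchange arguments. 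Combined with the per-pair inversion analysis from the proof of \autoref{thm:AltGreedy} (using $a<_\pi b$), each such exchange weakly decreases Kendall tau, completing the inductive step and thus the proof.
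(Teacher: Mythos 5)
Your algorithm is genuinely different from the paper's: you build the ranking block by block from left to right with per-block \emph{adjusted} bounds, whereas the paper repeatedly invokes Algorithm~\ref{alg:AltGreedy} on block-prefixes of \emph{decreasing} length and proves consistency claims (Claims~\ref{clm:a}--\ref{clm:c}) between consecutive invocations. Your feasibility and runtime observations are fine (the adjusted bounds are integral and always satisfiable in aggregate, and $\bigO(d^2/b)$ suffices). The problem is the inductive exchange step, and you have correctly located it but not closed it.

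Concretely: after swapping $a$ and $b$ in $\pi^*$, every block checkpoint $\ell_{j'}$ strictly between the two positions sees $a$'s group gain one element and $b$'s group lose one, and nothing rules out that $\pi^*$ sits exactly at $\beta_i\ell_{j'}$ for $a$'s group (or at $\alpha_{i'}\ell_{j'}$ for $b$'s group) at such a checkpoint, in which case the swapped ranking is not block-fair. Your proposed repair --- a chain of two-block local exchanges ``in the manner of standard matroid exchange arguments'' --- is asserted, not proved, and the assertion that ``each such exchange weakly decreases Kendall tau'' does not follow from anything you have established: the inversion-counting argument in the proof of \autoref{thm:AltGreedy} requires knowing the $\pi$-order of the specific pair being swapped (there you derive $a<_\pi b$ from the greedy's selection rule), whereas the auxiliary elements recruited to route around a tight intermediate constraint come with no such guarantee, so individual exchanges in the chain may \emph{increase} the distance even if the net effect is what you want. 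You would also need to verify that the family of prefix-count vectors feasible for all checkpoints (nested lower \emph{and} upper bounds per group) actually has the exchange property you are invoking. Until the intermediate-checkpoint issue is handled --- either by proving a slack/exchange lemma of this kind or by restructuring the induction as the paper does --- the proof is incomplete.
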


The main challenge with this stronger fairness notion is that now we need to satisfy the fairness
criteria for all the \AS{block}-prefixes not just a fixed $k$-length prefix as in the
\AS{previous case.}
Surprisingly, we show that under the Kendall tau metric, by iteratively applying the algorithm for
the closest fair ranking (Algorithm~\ref{alg:AltGreedy}) as a black-box, over the \AS{block-}prefixes of
decreasing length, we can construct a closest \AS{block}-fair ranking (not just a closest fair ranking).
Here, it is worth noting that at any iteration the input to Algorithm~\ref{alg:AltGreedy} is a
prefix of $\pi$ which may not be a permutation. However, Algorithm~\ref{alg:AltGreedy} only treats the
input as a sequence of numbers (not really as a permutation). \AS{We now give} a formal description of the algorithm.

\begin{algorithm}
    \caption{Algorithm to compute closest \AS{block-k-}fair ranking under Kendall tau}
    \label{alg:StrongGreedy}
    \KwIn{Input ranking $\pi \in \sym_d$, $g$ groups $G_1,\dots,G_g$,
    $\bar{\alpha}=(\alpha_1,\dots,\alpha_g) \in [0,1]^g\cap\Q^g$, $\bar{\beta}=(\beta_1,\dots,\beta_g) \in
    [0,1]^g\cap\Q^g$, $k \in [d]$, $b\in\Z_+$.}
    \KwOut{An $(\bar{\alpha},\bar{\beta})$ \AS{block}-k-fair ranking.}

    \AS{$p\leftarrow\left\lfloor\frac{d}{b}\right\rfloor\times b$}. \\
    \AS{\While{$p\ge k$} {
        $\pi \leftarrow$ Algorithm~\ref{alg:AltGreedy}
        ($\pi,(G_1,\dots,G_g),\bar{\alpha},\bar{\beta},p)$. \\
        $p\leftarrow p-b$.
    }}
    \Return{The resulting ranking stored in variable $\pi$.}
\end{algorithm}

Note that, since we iteratively apply Algorithm~\ref{alg:AltGreedy} on a prefix of
$\pi$ (not the whole sequence represented by $\pi$), it is not even clear whether the algorithm
finally outputs a fair ranking (assuming it exists). Below we first argue that if there exists a
\AS{block}-fair ranking, then the output $\sigma$ must be such a ranking. Next, we establish that
$\sigma$ is indeed a closest \AS{block}-fair ranking to $\pi$.

Let $\pi^*$ be a closest \AS{block}-fair ranking to $\pi$ that preserves intra-group orderings
(Claim \ref{clm:RelOrderProp} guarantees its existence). We
show that the output $\sigma = \pi^*$. We start the argument by considering any two \AS{block-}prefixes of
length $k_1$ and $k_2$, where $k_2 < k_1$. We argue that $k_1$ and $k_2$-length prefixes of
$\sigma$ and $\pi^*$ are the same. Since this holds for any such $k_1$ and $k_2$,
where $k_2 < k_1$, by using induction we can show that $\sigma = \pi^*$.

For the sake of analysis, let us consider the following three permutations. Let $\pi_1$ be the
$(\bar{\alpha},\bar{\beta})$-$k_1$-fair ranking closest to $\pi$, output by
Algorithm~\ref{alg:AltGreedy}. Let $\pi_2$ be the ranking output by Algorithm~\ref{alg:AltGreedy}
when given the $k_1$-length prefix of $\pi_1$ (i.e., the sequence $\pi_1([k_1])$) as input and is
asked to output an $(\bar{\alpha},\bar{\beta})$-$k_2$-fair ranking closest to $\pi_1$. Further,
let $\pi'_2$ be the $(\bar{\alpha},\bar{\beta})$-$k_2$-fair ranking closest to $\pi$, output by
Algorithm~\ref{alg:AltGreedy}. In other words, $\pi_2$ is the ranking produced by first applying
Algorithm~\ref{alg:AltGreedy} on $\pi$ to make its $k_1$-length prefix fair and then apply
Algorithm~\ref{alg:AltGreedy} again on that output to make its $k_2$-length prefix fair. Whereas,
$\pi'_2$ is the ranking produced by directly applying Algorithm~\ref{alg:AltGreedy} on $\pi$ to
make its $k_2$-length prefix fair.

From the definition it is not at all clear whether such a $\pi_2$ even exists. The next claim argues
about the existence of ranking $\pi_2$.
\begin{restatable}{claim}{StrongExist}
    If there is a ranking $\pi'$ such that its $k_1$-length prefix $P_1$ and $k_2$-length prefix
    $P_2$ satisfies that for each group $G_i$ ($i \in [g]$), $\alpha_i k_1 \le |P_1
    \cap G_i| \le \beta_i k_1 $ and $\alpha_i k_2 \le |P_2 \cap G_i|
    \le \beta_i k_2 $, then $\pi_2$ exists.
\end{restatable}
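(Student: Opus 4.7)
The plan is to show that Algorithm~\ref{alg:AltGreedy}, when invoked on the sequence $\pi_1([k_1])$ with target prefix length $k_2$, successfully produces an output; that output is then $\pi_2$ by definition. There are exactly two places where Algorithm~\ref{alg:AltGreedy} can fail: (a) the initial ``minimum'' set $P$ already exceeds $k_2$ in size, which in the block-fair setting reduces to $\sum_i \alpha_i > 1$; and (b) the greedy fill phase cannot reach $k_2$ elements without violating some upper cap $\beta_i k_2$. The whole task is therefore to rule out both failure modes.

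Write $n_i = |G_i \cap \pi_1([k_1])|$. Since $\pi_1$ is $(\bar{\alpha},\bar{\beta})$-block-$k_1$-fair, we have $\alpha_i k_1 \le n_i \le \beta_i k_1$ with $\sum_i n_i = k_1$. The existence of $\pi_1$ is itself guaranteed by the hypothesis (through $P_1$), which incidentally also gives $\sum_i \alpha_i \le 1$ and thereby rules out failure mode (a). For the minimum step of the second invocation one needs $n_i \ge \alpha_i k_2$ for every $i$, which follows immediately from $n_i \ge \alpha_i k_1 \ge \alpha_i k_2$. The real work is in ruling out failure mode (b), which boils down to the capacity inequality
\[
  \sum_{i \in [g]} \min(n_i,\,\beta_i k_2) \;\ge\; k_2.
\]

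I would prove this inequality by contradiction. Assume instead the sum is strictly less than $k_2$, and partition the groups into $S = \{i : n_i < \beta_i k_2\}$ and $T = \{i : n_i \ge \beta_i k_2\}$, so that
\[
  \sum_{i \in S} n_i + k_2 \sum_{i \in T} \beta_i < k_2.
\]
Since $\sum_{i \in S} n_i \ge 0$, this already forces $\sum_{i \in T} \beta_i < 1$. On the other hand, using $\sum_i n_i = k_1$ together with the block-fair upper bound $n_i \le \beta_i k_1$ for $i \in T$, one obtains $\sum_{i \in S} n_i \ge k_1\bigl(1 - \sum_{i \in T} \beta_i\bigr)$. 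Substituting this back into the displayed strict inequality yields $(k_1 - k_2)\bigl(1 - \sum_{i \in T} \beta_i\bigr) < 0$, and since $k_1 > k_2$ this forces $\sum_{i \in T} \beta_i > 1$, contradicting the earlier bound.

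The main obstacle I expect is the temptation to build an explicit $k_2$-fair subset of $\pi_1([k_1])$ by transporting the hypothesized witness $P_2$ across; but $P_2 \subseteq P_1$ need not lie inside $\pi_1([k_1])$, because Algorithm~\ref{alg:AltGreedy} chooses its own $k_1$-fair set greedily and this can differ from $P_1$ at the level of individual elements. The counting argument above sidesteps this transport issue entirely by using only the group-count bounds on $\pi_1([k_1])$ guaranteed by $k_1$-block-fairness and the strict inequality $k_1 > k_2$.
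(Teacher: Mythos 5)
Your proof is correct, and it reaches the same destination as the paper's by a noticeably different route. The paper argues constructively: writing $\ell_i = |G_i \cap \pi_1([k_1])|$, it exhibits an explicit witness by taking the top $\ell_i \cdot \frac{k_2}{k_1}$ elements of each group inside $\pi_1([k_1])$, then checks $\sum_i \ell_i \frac{k_2}{k_1} = k_2$ and $\ell_i \frac{k_2}{k_1} \le \beta_i k_2$, so a fair $k_2$-prefix drawn from $\pi_1([k_1])$ exists and Algorithm~\ref{alg:AltGreedy} will find one. You instead identify the exact success condition of the greedy, namely $\sum_i \min(n_i, \beta_i k_2) \ge k_2$ together with $n_i \ge \alpha_i k_2$ and $\sum_i \alpha_i \le 1$, and verify the capacity inequality by the $S/T$ partition and the algebraic contradiction $(k_1-k_2)\bigl(1-\sum_{i\in T}\beta_i\bigr) < 0$. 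Both arguments use the same inputs (the $k_1$-fairness bounds $\alpha_i k_1 \le n_i \le \beta_i k_1$, the identity $\sum_i n_i = k_1$, and $k_2 \le k_1$), and indeed the paper's proportional selection can be read as one particular certificate for your capacity inequality, since $\min(n_i,\beta_i k_2) \ge n_i k_2/k_1$. What your version buys is that it is integer-safe: the paper's set of ``top $\ell_i \cdot \frac{k_2}{k_1}$ elements'' need not have integral size even in the block-fair setting, whereas your counting argument never needs to round. Your closing remark about why one cannot simply transport $P_2$ into $\pi_1([k_1])$ is also the right caution, and it is consistent with the paper, which likewise works only with group counts rather than with the hypothesized witness elements.
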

\begin{proof}
    Our proof is constructive. Since $\pi'$ exists, $\pi_1$ also exists (follows from the
    correctness of Algorithm~\ref{alg:AltGreedy}). Since for any group $G_i$, $|\pi_1([k_1]) \cap
    G_i| \ge \alpha_i k_1 \ge \alpha_i k_2 $, the elements in the
    prefix $\pi_1([k_1])$ are sufficient to satisfy the lower bound fairness constraints for the
    $k_2$-length prefix.

    For a group $i\in[g]$, let $\ell_i$ be the number of items of group $G_i$ in the prefix
    $\pi_1([k_1])$. Next, we argue that assuming the existence of $\pi'$ (as in the claim
    statement), it is always possible to construct a $\pi_2$ for which the upper bound fairness
    constraints are satisfied for the $k_2$-length prefix. Note, $\sum_{i \in [g]} \ell_i = k_1$ and
    for each $i \in [g]$, $\ell_i \le \beta_i k_1$.

    Consider $L_i'$ to be the set of top $ \ell_i \times\frac{k_2}{k_1}$
    elements from $\pi_1([k_1]) \cap G_i$. Let $|L'_i| = \ell'_i$. Then,
    \[\sum_{i\in[g]}\ell_i'=\sum_{i\in[g]} \ell_i\times\frac{k_2}{k_1}\ge
    \frac{k_2}{k_1}\times\sum_{i\in[g]}\ell_i \ge k_2.\]
    Also note that for each $i \in [g]$,
    \[\ell_i'= \ell_i\times\frac{k_2}{k_1}
        \le\beta_i k_1\times\frac{k_2}{k_1}
    \le\beta_i k_2\] So, by ranking the above set $L'_i$ of chosen elements,
     we can ensure that we can satisfy both the upper and lower bound fairness constraints for the
     $k_2$-prefix. Thus we get an $(\bar{\alpha},\bar{\beta})$-$k_2$-fair ranking whose
     elements in the $k_2$-prefix are from the set $\pi_1([k_1])$. Since one such valid solution
     exists, Algorithm~\ref{alg:AltGreedy} can find an appropriate valid solution $\pi_2$.
\end{proof}

It can further be shown that,
\begin{restatable}{claim}{clma} \label{clm:a}
    The set of elements in $\pi_2([k_1])$ is the same as that in $\pi_1([k_1])$.
\end{restatable}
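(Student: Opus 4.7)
The goal is to show that when Algorithm~\ref{alg:AltGreedy} is invoked with the sequence $\pi_1([k_1])$ (or equivalently, applied to $\pi_1$) and target length $k_2 < k_1$, the resulting $\pi_2$ has its first $k_1$ positions populated by exactly the set $\pi_1([k_1])$. The plan is to trace the two phases of Algorithm~\ref{alg:AltGreedy} and argue that neither ever reaches an element of $[d]$ whose $\pi_1$-rank exceeds $k_1$.

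First, I would inspect the set $P$ that Algorithm~\ref{alg:AltGreedy} constructs for the $k_2$-prefix of $\pi_2$. In the initial phase, $P$ is seeded with the top $\lfloor \alpha_i k_2 \rfloor$ elements of each $G_i$ by $\pi_1$-rank. Since $\pi_1$ is $(\bar{\alpha},\bar{\beta})$-$k_1$-fair and $k_2 \le k_1$, we have $|\pi_1([k_1]) \cap G_i| \ge \lfloor \alpha_i k_1 \rfloor \ge \lfloor \alpha_i k_2 \rfloor$, so the seeded elements all lie in $\pi_1([k_1])$. In the fill-up phase, the algorithm sweeps through $\pi_1$ in rank order, adding each element to $P$ unless its group's upper bound $\lceil \beta_i k_2 \rceil$ is already saturated, and stopping once $|P| = k_2$. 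By Claim~\ref{clm:StrongExist} (whose proof actually produces a valid $k_2$-fair selection living inside $\pi_1([k_1])$), there are enough elements of $\pi_1([k_1])$ that are not blocked by upper-bound constraints to complete $P$ to size $k_2$. Since every element of $\pi_1([k_1])$ has $\pi_1$-rank strictly smaller than every element of $[d]\setminus \pi_1([k_1])$, the sweep exhausts the usable elements of $\pi_1([k_1])$ before considering any element outside. Hence $P \subseteq \pi_1([k_1])$ with $|P|=k_2$.

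Next, I would argue about positions $k_2+1, \dots, k_1$ of $\pi_2$. Algorithm~\ref{alg:AltGreedy} fills those positions with the elements of $[d]\setminus P$ in their $\pi_1$-order. The set $\pi_1([k_1]) \setminus P$ consists of exactly $k_1 - k_2$ elements whose $\pi_1$-ranks are all at most $k_1$, and every element in $[d] \setminus \pi_1([k_1])$ has $\pi_1$-rank greater than $k_1$. Therefore, the first $k_1 - k_2$ elements of $[d]\setminus P$ in $\pi_1$-rank order are precisely $\pi_1([k_1])\setminus P$, and they occupy positions $k_2+1, \dots, k_1$ of $\pi_2$. Putting the two parts together, $\pi_2([k_1]) = P \cup (\pi_1([k_1]) \setminus P) = \pi_1([k_1])$ as sets, which is the desired claim.

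The only subtle step is ruling out the fill-up phase of Algorithm~\ref{alg:AltGreedy} dipping outside $\pi_1([k_1])$ because upper-bound constraints block many elements; this is exactly where Claim~\ref{clm:StrongExist} is used, as it guarantees the existence of a compatible $k_2$-fair subset contained in $\pi_1([k_1])$. The rest is a routine consequence of the greedy rule and the fact that elements of $\pi_1([k_1])$ precede all other elements in $\pi_1$-order.
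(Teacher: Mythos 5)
Your proof is correct in substance, but it proves a stronger statement than the one the paper needs, by a considerably longer route. In the paper, $\pi_2$ is defined as the output of Algorithm~\ref{alg:AltGreedy} when the input sequence is the $k_1$-length prefix $\pi_1([k_1])$ itself (not all of $\pi_1$); the algorithm then merely permutes those $k_1$ elements, leaving positions $k_1+1,\dots,d$ untouched, so the set occupying the first $k_1$ positions is unchanged \emph{by construction} --- and the paper's proof is exactly this one-line observation. Your parenthetical ``or equivalently, applied to $\pi_1$'' is where the two arguments diverge: that equivalence is not given, it is precisely what your two-phase analysis of the greedy establishes (seeding stays inside $\pi_1([k_1])$ because $\pi_1$ is $k_1$-fair; the fill-up completes inside $\pi_1([k_1])$ because the addable capacity $\sum_i \min(\ell_i,\lceil \beta_i k_2\rceil)$ is at least $k_2$, which is the content of the existence claim you invoke --- note that claim carries no label in the paper, and that this capacity arithmetic is clean only in the block-fair regime where $\alpha_i k$ and $\beta_i k$ are integers, a caveat the authors themselves flag). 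Your analysis is sound and is essentially the reasoning the paper defers to Claim~\ref{clm:b}, where it is genuinely needed to compare $\pi_2$ with $\pi'_2$ (the result of running the algorithm directly on $\pi$). So nothing is wrong, but for the claim as stated you could have stopped after your first sentence; the remaining work belongs to, and duplicates, the proof of Claim~\ref{clm:b}.
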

\begin{proof}
    We only call the Algorithm~\ref{alg:AltGreedy} subroutine on the $k_1$-length prefix of $\pi_1$
    to construct $\pi_2$. So by construction, we have that for the $k_1$-length prefix, the set of
    elements in both the permutations are the same.
\end{proof}

\begin{restatable}{claim}{clmb} \label{clm:b}
    The set of elements in $\pi_2([k_2])$ is the same as that in $\pi'_2([k_2])$.
\end{restatable}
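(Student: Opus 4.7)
The plan is to show that although $\pi_2$ is produced by running Algorithm~\ref{alg:AltGreedy} on $\pi_1([k_1])$ with parameter $k_2$ while $\pi'_2$ is produced by running it directly on $\pi$ with parameter $k_2$, the two runs end up picking exactly the same $k_2$-prefix as sets. First I would verify that the initial mandatory selection of Algorithm~\ref{alg:AltGreedy} produces the same set $S_0$ in both runs, namely the top $\lfloor \alpha_i k_2\rfloor$ elements of each group $G_i$ according to $\pi$. This follows because $\pi_1$ preserves the intra-group orderings of $\pi$ and $\pi_1([k_1])$ already contains the top $\lfloor \alpha_i k_1\rfloor \ge \lfloor \alpha_i k_2\rfloor$ elements of each $G_i$, so these top elements lie inside $\pi_1([k_1])$ and agree with the top $\lfloor \alpha_i k_2\rfloor$ of $G_i$ in $\pi$.

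The core of the proof is the following subclaim: during the fill step of $\pi'_2$, no element of $[d]\setminus\pi_1([k_1])$ is ever added to $P$. Assuming this, $\pi'_2$'s fill-step additions come entirely from $\pi_1([k_1])\setminus S_0$, processed in the same $\pi$-order as $\pi_2$'s fill step and with identical add/reject criteria; by induction on the step their intermediate states coincide, so $\pi_2$ and $\pi'_2$ make the same additions and $\pi_2([k_2])=\pi'_2([k_2])$.

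I would prove the subclaim by contradiction. Suppose $y\in G_i$, $y\notin\pi_1([k_1])$, is the first such element added by $\pi'_2$, and let $r_y$ denote its $\pi$-rank and $r^*$ the maximum $\pi$-rank of any element of $\pi_1([k_1])$; let $n_i^{(k_1)}:=|\pi_1([k_1])\cap G_i|$, so $\pi_1([k_1])\cap G_i$ is the top $n_i^{(k_1)}$ elements of $G_i$ by $\pi$. Because $y$ is the first violator, $\pi'_2$ and $\pi_2$ have matching states throughout the processing prior to $y$; a bookkeeping on the $G_i$-elements processed in $\pi$-order before $y$ then forces all $n_i^{(k_1)}$ top elements of $G_i$ to lie in $P_{\pi'_2}$ at the moment $y$ is processed, since otherwise either the $G_i$-cap $\lceil\beta_ik_2\rceil$ would already have been hit (and would also block $y$) or $\pi'_2$'s greedy would have halted. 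For $y$ to then be added we need $|P_{\pi'_2}\cap G_i|<\lceil\beta_ik_2\rceil$, yielding the strict inequality $n_i^{(k_1)}<\lceil\beta_ik_2\rceil\le\lceil\beta_ik_1\rceil$. With this I can rule out $r_y<r^*$ (otherwise $\pi_1$'s greedy, having $|P|<k_1$ and the $G_i$-cap not yet reached at time $r_y$, would have added $y$) and $r_y=r^*$ (the element at position $r^*$ lies in $\pi_1([k_1])$ by definition). Hence $r_y>r^*$, so $\pi'_2$ has already processed all of $\pi_1([k_1])\setminus S_0$ before reaching $y$. The matching-states observation and the existence of $\pi_2$ (guaranteed by the preceding existence claim together with the correctness of Algorithm~\ref{alg:AltGreedy}) then imply that $\pi_2$'s fill step reached $|P|=k_2$ at some $z^*\in\pi_1([k_1])\setminus S_0$, whose $\pi$-rank is at most $r^*$; $\pi'_2$ adds this same $z^*$ and halts before ever processing $y$, contradicting the assumption that $y$ is added.

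The main obstacle is this last case analysis producing $r_y>r^*$: it requires the first-violator choice of $y$ to pin down $|P_{\pi'_2}\cap G_i|\ge n_i^{(k_1)}$ at the right moment, combined with the inequality $\lceil\beta_ik_2\rceil\le\lceil\beta_ik_1\rceil$ that lets one transfer fill-step behavior of $\pi'_2$ into greedy behavior of $\pi_1$.
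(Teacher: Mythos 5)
Your overall strategy differs from the paper's: the paper proves $\pi'_2([k_2])\subseteq\pi_2([k_2])$ element by element via a group-counting argument and finishes by cardinality, whereas you run a simulation argument whose heart is the subclaim that $\pi'_2$'s fill step never leaves $\pi_1([k_1])$. The subclaim is true (in the integral/block setting in which this claim is used), and the reduction of the whole statement to it is sound, but your proof of the subclaim has a genuine gap. The trichotomy on $r_y$ versus $r^*$ conflates two different quantities: the maximum $\pi$-rank of an element of $\pi_1([k_1])$, and the rank at which $\pi_1$'s \emph{fill step} halted. These differ whenever the mandatory selection for $\pi_1$ (the top $\lfloor\alpha_j k_1\rfloor$ elements of some sparse group $G_j$) contains elements whose $\pi$-rank lies beyond the point where the fill reached $|P|=k_1$. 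In that regime the case $r_y<r^*$ cannot be dismissed by ``$\pi_1$'s greedy, having $|P|<k_1$ at time $r_y$, would have added $y$'': the presence of an element of $\pi_1([k_1])$ at rank $r^*>r_y$ does not imply the fill was still active at rank $r_y$, since that element may have been placed in $P$ during the mandatory phase. What the argument actually yields is only $r_y>r_f$, where $r_f$ is the halting rank of $\pi_1$'s fill; but then the conclusion you need for the final step --- that $\pi'_2$ has processed \emph{all} of $\pi_1([k_1])\setminus S_0$ before reaching $y$, so that $z^*$ has rank below $r_y$ --- no longer follows, because the high-rank $\pi_1$-mandatory elements that are not in $S_0$ have not yet been reached, and $z^*$ could a priori be one of them.

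Closing this hole requires a quantitative counting step that your proof does not contain (and which is exactly what the paper's ``Step 1'' supplies in a different guise). One way to do it: for each group $G_j$, $\pi_1$'s fill accepted $F_j=n_j-\alpha_jk_1$ non-mandatory elements of $G_j$ at ranks at most $r_f$, with $F_j\le\beta_jk_1-\alpha_jk_1$ and $\sum_jF_j=k_1-\sum_j\alpha_jk_1$; since $\beta_jk_2-\alpha_jk_2=\tfrac{k_2}{k_1}(\beta_jk_1-\alpha_jk_1)$ and the number of $\pi_1$-mandatory-but-not-$S_0$ elements of $G_j$ beyond rank $r_f$ is at most $\alpha_jk_1-\alpha_jk_2$, the $k_2$-run accepts at least $\tfrac{k_2}{k_1}F_j$ fill elements of $G_j$ from ranks at most $r_f$, hence at least $\tfrac{k_2}{k_1}\sum_jF_j=k_2-|S_0|$ in total, so $\pi'_2$'s fill halts at some rank at most $r_f<r_y$ and never adds $y$. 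Note that this computation genuinely uses the integrality of $\alpha_jk$ and $\beta_jk$ at block prefixes; with floors and ceilings the scaling breaks, which is precisely the issue the paper flags in its footnote. Without some such argument, your case analysis does not go through.
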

\begin{proof}
    Consider an element $e\in \pi_2'([k_2]) \cap G_i$ for some $i \in [g]$. If $e$ is among the top
    $\alpha_i k_2$ elements (according to $\pi$) inside the group $G_i$, then by
    Algorithm~\ref{alg:AltGreedy}, it would also be selected in $\pi_1([k_1])$ (since $k_1 \ge k_2$)
    and also in $\pi_2([k_2])$.

    \AS{Now consider the case where $e$ is among the top $\beta_i k_2$ elements of $G_i$,
    but not among the top $\alpha_i k_2 $ elements. This means that $e$ is also
    among the top $\beta_i k_1 (\ge \beta_i k_2)$ elements of its group
    $G_i$. So, if it is encountered during the execution of
    Algorithm~\ref{alg:AltGreedy} on $\pi$ to get an $(\bar{\alpha},\bar{\beta})$-$k_1$-fair
    ranking, then it will be selected in $\pi_1([k_1])$ (and similarly for $\pi_2([k_2])$).
    So to complete our proof we need to show that element $e$ will still be encountered while
    constructing $\pi_2$. We do this in two steps:}

    \AS{\textbf{Step 1, To show $e\in\pi_1([k_1])$:} We know that $e$ was encountered (and picked) during the execution of Algorithm~\ref{alg:AltGreedy} on $\pi$ for a prefix of size $k_2$. Let the elements that were encountered before $e$, but not selected due to violating upper bound constraints, be from the subset $I'\subseteq[g]$ of groups. Since some elements from these groups were not selected, we know that $k_2\cdot\sum_{i\in I}\alpha_i + k_2\cdot\sum_{i\in I'}\beta_i < k_2$ which gives is that $\sum_{i\in I}\alpha_i+\sum_{i\in I'}\beta_i < 1$. Now, when we execute Algorithm~\ref{alg:AltGreedy} on $\pi$ for a prefix of size $k_1$, we know that $k_1\cdot\sum_{i\in I}\alpha_i + k_1\cdot\sum_{i\in I'}\beta_i < k_1$. Hence element $e$ will be encountered and selected in $\pi_1([k_1])$.}

    \AS{\textbf{Step 2, To show $e\in\pi_2([k_2])$:} By construction $\pi_1([k_1])$ is such that $x<_{\pi_1}e$ which also means $x<_\pi e$. So if $e$ was encountered (and picked) during execution of execution of Algorithm~\ref{alg:AltGreedy} on $\pi$ for a prefix of size $k_2$, then it will also be encountered (and picked) during execution of Algorithm~\ref{alg:AltGreedy} on $\pi_1$ for a prefix of size $k_2$.
    }

    Since the sizes of both the sets (of top $k_2$ ranks of $\pi_2$ and $\pi_2'$) are equal, the
    two sets are in fact the same, and so are the rankings (by Algorithm~\ref{alg:AltGreedy}).
\end{proof}

\begin{restatable}{claim}{clmd} \label{clm:d}
    The set of elements in $\pi^*([k_1])$ is the same as that in $\pi_2([k_1])$.
\end{restatable}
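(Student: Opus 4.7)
My plan is to prove Claim \ref{clm:d} by a contradiction-via-swap argument, mirroring the strategy used in the proof of Theorem \ref{thm:AltGreedy}. By Claim \ref{clm:a}, $\pi_2([k_1]) = \pi_1([k_1])$ as sets, so it suffices to show that $\pi^*([k_1])$ and $\pi_1([k_1])$ coincide as sets. Suppose otherwise. Since both $\pi^*$ and $\pi_1$ preserve intra-group orderings and their $k_1$-prefixes have the same cardinality, the per-group counts in the $k_1$-prefix must disagree for at least two distinct groups. Hence there exist elements $a \in \pi_1([k_1]) \setminus \pi^*([k_1])$ with $a \in G_i$ and $b \in \pi^*([k_1]) \setminus \pi_1([k_1])$ with $b \in G_j$, for some $G_i \neq G_j$. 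Tracing through the greedy fill-in of Algorithm \ref{alg:AltGreedy} exactly as in the proof of Theorem \ref{thm:AltGreedy} (where an analogous pair $a \in \pi_\GRD[k] \setminus \pi_\OPT[k]$, $b \in \pi_\OPT[k] \setminus \pi_\GRD[k]$ is shown to satisfy $a <_\pi b$), one concludes that $a <_\pi b$.

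Let $p_a, p_b$ denote the positions of $a$ and $b$ in $\pi^*$, so $p_a > k_1 \geq p_b$. Construct $\pi^{**}$ by swapping $a$ and $b$ in $\pi^*$. The Kendall-distance calculation from the proof of Theorem \ref{thm:AltGreedy} transfers verbatim to show $\mathcal{K}(\pi^{**},\pi) < \mathcal{K}(\pi^*,\pi)$: the pair $(a,b)$ is itself a removed inversion, and for every element $c$ with $b <_{\pi^*} c <_{\pi^*} a$ the induced pairs $(b,c)$ and $(c,a)$ contribute no net increase in inversions. If we can further verify that $\pi^{**}$ is block-$k$-fair, we contradict the optimality of $\pi^*$ and are done.

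The principal obstacle is precisely this block-fairness verification, since now the constraints must hold at \emph{every} block-prefix length, not just at $k_1$. For block-prefix lengths $\ell$ outside the interval $[p_b, p_a - 1]$, $\pi^{**}([\ell])$ coincides with $\pi^*([\ell])$ and fairness is automatic. For $\ell \in [p_b, p_a - 1]$ with $\ell \geq k$ and $\ell \equiv 0 \pmod b$, the counts shift by $+1$ in $G_i$ and $-1$ in $G_j$, so I must establish the strict inequalities $|\pi^*([\ell]) \cap G_i| < \beta_i \ell$ and $|\pi^*([\ell]) \cap G_j| > \alpha_j \ell$ at every such $\ell$. The plan is to refine the choice of $a, b$ to intra-group boundary elements: let $a$ be the $\pi$-earliest element of $G_i$ that is not in $\pi^*([k_1])$, and $b$ the $\pi$-latest element of $G_j$ that is in $\pi^*([k_1])$. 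Combining the $k_1$-fairness of $\pi_1$ (which gives $|\pi_1([k_1]) \cap G_i| \leq \beta_i k_1$ and $|\pi_1([k_1]) \cap G_j| \geq \alpha_j k_1$) with $\pi^*$'s intra-group-preserving structure should yield the required slack at every intermediate block length. Should a single swap fail for some extremal configuration, the fallback is to iterate: perform a sequence of such boundary swaps, each one strictly decreasing the Kendall distance while preserving block-fairness, until the $k_1$-prefix of $\pi^*$ matches $\pi_1([k_1])$ and we obtain a block-fair ranking strictly closer to $\pi$ than $\pi^*$.
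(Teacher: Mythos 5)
Your proposal follows essentially the same route as the paper's own proof: assume $a\in\pi_2([k_1])\setminus\pi^*([k_1])$ and $b\in\pi^*([k_1])\setminus\pi_2([k_1])$, derive $a<_\pi b$ from the greedy structure of Algorithm~\ref{alg:AltGreedy} exactly as in \autoref{thm:AltGreedy}, swap $a$ and $b$ in $\pi^*$ to strictly decrease the Kendall tau distance, and contradict the optimality of $\pi^*$. Where you go beyond the paper is in flagging, correctly, that the swap must be checked to preserve fairness at \emph{every} block-prefix length $\ell$ lying between the positions of $b$ and $a$ in $\pi^*$, not only at $\ell=k_1$: for such $\ell$ one needs the strict inequalities $|\pi^*([\ell])\cap G_i|<\beta_i\ell$ and $|\pi^*([\ell])\cap G_j|>\alpha_j\ell$, and block-fairness of $\pi^*$ only gives the non-strict versions (e.g., $|\pi^*([\ell])\cap G_j|=\alpha_j\ell$ exactly, for some block length $\ell$ with $k_1<\ell<p_a$, is consistent with everything established so far, and then removing $b$ from that prefix would violate the lower bound). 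The paper's proof simply asserts that the swapped ranking ``is also block-fair'' without addressing these intermediate prefixes, so your attempt is no less complete than the published argument; but your own resolution --- choosing $\pi$-extremal boundary elements for $a$ and $b$, with iterated swaps as a fallback --- is stated as a plan rather than carried out, and the slack you need at intermediate block lengths does not follow from the $k_1$-level facts you cite. So the one step you isolate as the principal obstacle is a genuine gap, shared by both your proposal and the paper's proof, and it is the step that would need a new idea (or a restriction on how far apart $a$ and $b$ can sit) to close.
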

\begin{proof}
    Assume towards contradiction that there exists $a\in\pi_2([k_1])\setminus\pi^*([k_1])$
    and $b\in\pi^*([k_1])\setminus\pi_2([k_1])$. If $a, b$ were in the same group, then by
    Algorithm~\ref{alg:AltGreedy}, we know that $a<_{\pi}b$, and hence by swapping the elements in
    $\pi^*$, the distance from $\pi$ can only be reduced (as shown in latter part of proof
    of \autoref{thm:AltGreedy}). Hence we can obtain a different solution $\bar{\pi}$ in which
    $a\in \bar{\pi}([k_1])$ and $b \not \in \bar{\pi}([k_1])$, and this is also \AS{block}-fair.
    This contradicts that $\pi^*$ is a closest \AS{block}-fair ranking to $\pi$ that preserves intra-group orderings.

    In the other case, $a$ and $b$ are not in the same group, i.e., $a\in G_i$ and $b\in G_j$ for some $i \ne j$.
    Now we note that $a$ cannot be among the top $\alpha_i k_1 $ elements, but is in
    the top $\beta_i k_1 $ elements in $G_i$. Similarly, $b$ cannot be among the top
    $\alpha_j k_1 $ elements, but is in the top $\beta_j k_1 $ elements
    in $G_j$. Again, it follows from \autoref{thm:AltGreedy}, $a<_{\pi}b$, and that by swapping
    these two elements in $\pi^*$ we can only reduce the distance from $\pi$, while obtaining
    another \AS{block}-fair ranking. This again
    contradicts that $\pi^*$ is a closest \AS{block}-fair ranking to $\pi$ that preserves intra-group orderings.
    The claim now follows.
\end{proof}

\begin{restatable}{claim}{clmc} \label{clm:c}
    The set of elements in $\pi^*([k_2])$ is the same as that in $\pi_2([k_2])$.
\end{restatable}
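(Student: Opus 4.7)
The plan is to adapt the contradiction argument used to prove Claim~\ref{clm:d}. Suppose for contradiction that there exist $a \in \pi_2([k_2]) \setminus \pi^*([k_2])$ and $b \in \pi^*([k_2]) \setminus \pi_2([k_2])$. By Claim~\ref{clm:d} both $a$ and $b$ already lie in $\pi_2([k_1]) = \pi^*([k_1])$, so any swap of $a$ and $b$ in $\pi^*$ takes place entirely inside the $k_1$-prefix and leaves every block-prefix of length at least $k_1$ unchanged as a set.

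If $a$ and $b$ belong to the same group $G_i$, I would note that $\pi_2$ comes from Algorithm~\ref{alg:AltGreedy} applied to $\pi_1([k_1])$ and $\pi_1$ itself from Algorithm~\ref{alg:AltGreedy} applied to $\pi$; both passes preserve the intra-$G_i$ ordering, so $a <_\pi b$. Since $\pi^*$ also preserves intra-group orderings, $a <_{\pi^*} b$, directly contradicting $b \in \pi^*([k_2])$ and $a \notin \pi^*([k_2])$. If instead $a \in G_i$ and $b \in G_j$ with $i \ne j$, I would use Claim~\ref{clm:b} to identify $\pi_2([k_2])$ with $\pi'_2([k_2])$ and then transplant the four bounds from the proof of Theorem~\ref{thm:AltGreedy}. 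Using that $\alpha_i k_2, \beta_i k_2, \alpha_j k_2, \beta_j k_2$ are integers (because $k_2$ is a block-prefix length and hence a multiple of $b$), and combining the block-fairness and intra-group-ordered structure of $\pi^*$ with the greedy structure behind $\pi'_2$, one obtains: $a$ is in the top $\beta_i k_2$ but not in the top $\alpha_i k_2$ of $G_i$; $b$ is in the top $\beta_j k_2$ but not in the top $\alpha_j k_2$ of $G_j$; and $a <_\pi b$. These are exactly the ingredients needed for swapping $a$ and $b$ in $\pi^*$ to produce $\bar{\pi}$ whose $k_2$-prefix group counts still respect the block-fair window, with the inversion bookkeeping at the end of Theorem~\ref{thm:AltGreedy}'s proof yielding $\mathcal{K}(\pi, \bar{\pi}) < \mathcal{K}(\pi, \pi^*)$.

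The step I expect to be the main obstacle is certifying block-fairness of $\bar{\pi}$ at block-prefixes of length strictly between $k_2$ and $k_1$: the swap drops $a$ into $[1, k_2]$ and lifts $b$ into $(k_2, k_1]$, so the $G_i$ and $G_j$ counts at any such intermediate block-prefix shift by one. I would handle this by invoking the ambient induction of this section, which proceeds over consecutive block-prefix lengths $k_1 = k_2 + b$, leaving no intermediate block-prefix to verify; the general case $k_1 > k_2$ then reduces to iterated applications of the consecutive version. Once $\bar{\pi}$ is confirmed block-fair and strictly closer to $\pi$, the assumption that $\pi^*$ is a closest block-fair ranking is contradicted, completing the proof.
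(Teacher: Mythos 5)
Your proposal follows essentially the same route as the paper's (very terse) proof: reduce via Claim~\ref{clm:b} to comparing $\pi^*([k_2])$ with the direct greedy output $\pi_2'([k_2])$, and then rerun the exchange argument of Theorem~\ref{thm:AltGreedy} and Claim~\ref{clm:d}. Your same-group case is, if anything, cleaner than the paper's: deriving $a<_\pi b$ from the greedy structure and contradicting the intra-group-order preservation of $\pi^*$ directly avoids any swap. The four positional bounds and $a<_\pi b$ in the cross-group case are correctly transplanted, and your use of the integrality of $\alpha_i k_2$ and $\beta_j k_2$ at block prefixes is exactly what makes the $\pm1$ count changes at the $k_2$-prefix legal.

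The one place where your bookkeeping is incomplete is the set of block prefixes perturbed by the swap. Exchanging $a$ (at position $p_a\in(k_2,k_1]$ of $\pi^*$) with $b$ (at position $p_b\le k_2$) replaces $b$ by $a$ in every prefix of length in $[p_b,p_a)$, not only in those of length in $(k_2,k_1)$. Your consecutiveness observation ($k_1=k_2+b$) correctly dispatches the block prefixes between $k_2$ and $k_1$, but any block prefix of length $m$ with $\max(k,p_b)\le m<k_2$ and $m\equiv 0\pmod b$ also has its $G_i$ count raised and its $G_j$ count lowered by one, and fairness there does not follow from the four bounds, which are all stated at level $k_2$: for instance $\pi^*([m])$ could already contain exactly $\beta_i m$ elements of $G_i$, or exactly $\alpha_j m$ elements of $G_j$. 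Such prefixes can exist whenever $k_2$ is not the smallest block prefix of length at least $k$. To be clear, the paper's own proofs of Claims~\ref{clm:c} and~\ref{clm:d} simply assert that the swapped ranking ``is also block-fair'' and never address these lower prefixes either, so you are no less rigorous than the source; but as written, neither your argument nor the paper's certifies block-fairness of $\bar{\pi}$ below $k_2$, and closing this would require an additional argument (e.g., a more careful choice of which elements to exchange, or a separate verification that the lower block prefixes tolerate the $\pm1$ shift).
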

\begin{proof}
    From Claim~\ref{clm:b} we know that \(\pi_2([k_2])=\pi_2'([k_2])\). So, it suffices to prove
    that the sets $\pi^*([k_2])$ and $\pi_2'([k_2])$ are equal. Note that this amounts to showing
    that for some prefix, the output of Algorithm~\ref{alg:AltGreedy} and the optimal solution
    have the same set of elements. The proof is hence very similar to that of Claim~\ref{clm:d}.
\end{proof}

We now apply Claim~\ref{clm:d} and Claim~\ref{clm:c}
iteratively to complete the correctness of Algorithm~\ref{alg:StrongGreedy}.

\begin{proof}[Proof of \autoref{thm:StrongGreedy}.]
    We show by induction that Algorithm~\ref{alg:StrongGreedy} in fact outputs the same ranking
    (referred to as the greedy solution) as the optimal \AS{block}-fair ranking $\pi^*$, which preserves
    intra-group orderings. In the induction we consider a \AS{block-}prefix, at the step at
    which subroutine Algorithm~\ref{alg:AltGreedy} was executed on it.
    \begin{description}
        \item[Hypothesis:] After Algorithm~\ref{alg:StrongGreedy} calls the subroutine on a
            \AS{block-}prefix
            of length $d-i$, for some $i<d$, the result is the optimal $(\bar{\alpha},\bar{\beta})$
            \AS{block}-$(d-i)$-fair ranking, which preserves intra-group orderings.
        \AS{\item[Base Case:] For $i=d\mod{b}$, the prefix of length $d-i=\lfloor d/b\rfloor$
            is the largest block-prefix considered by the algorithm. On application of subroutine
            Algorithm~\ref{alg:AltGreedy}, we know that we get a $(\bar{\alpha},\bar{\beta})$-block-$(d-i)$-fair ranking, which preserves intra-group orderings.}
        \item[Induction Step:] Let $P_2$ be the prefix of length $d-(i+\AS{b})$ at which the
            subroutine Algorithm~\ref{alg:AltGreedy} was just executed. So the subroutine was executed
            on on a prefix $P_1$ of length $d-i$ in the previous step, and hence it is already
            \AS{block-}fair (by induction hypothesis).

            From Claim~\ref{clm:c} and Claim~\ref{clm:d} we have that both the greedy
            (Algorithm~\ref{alg:StrongGreedy}'s output) and the optimal solution have the same set
            of elements in both the top $P_1$ and $P_2$ prefixes.  And we know that the optimal
            solution (by definition) and the greedy solution (by construction) preserve relative
            orderings w.r.t. $\pi$. This implies that the greedy solution $\pi_2$ is in fact
            the same as the optimal solution $\pi^*$.
    \end{description}
    \AS{Since Algorithm~\ref{alg:StrongGreedy} stops at the appropriate block-prefix of size
    greater than $k$, the output
    is in fact an $(\bar{\alpha},\bar{\beta})$-block-$k$-fair ranking.}
\end{proof}

It only remains to argue that the algorithm runs in time $\bigO(d^2)$. This is easy to see since
the algorithm invokes at most $d$ calls to the $\bigO(d)$ subroutine Algorithm~\ref{alg:AltGreedy}.

\subsection{Closest fair ranking under Ulam metric}
The \AS{following result} uses an intricate dynamic program exploiting the connection between the Ulam
distance with the Longest Common Subsequence problem.
\label{sec:Ulam-CFR}
\begin{restatable}{theorem}{UlamDP} \label{thm:ulamdp}
    There exists a polynomial time dynamic programming based algorithm that finds a
    $(\bar{\alpha},\bar{\beta})$\AS{-strict}-$k$-fair ranking under Ulam metric when there are constant
    number of groups.
\end{restatable}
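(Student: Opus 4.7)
The plan is to reduce the problem to a constrained longest-common-subsequence computation by exploiting the identity $\mathcal{U}(\pi,\pi^*) = d - |\mathsf{LCS}(\pi,\pi^*)|$. Thus, among all $(\bar\alpha,\bar\beta)$-strict-$k$-fair permutations $\pi^*$ of $[d]$, we want one that maximizes $|\mathsf{LCS}(\pi,\pi^*)|$. I would design a dynamic program that sweeps through $\pi$ and constructs $\pi^*$ from left to right in a coordinated manner, extending the classical LCS DP with extra coordinates that record the group composition of the current prefix of $\pi^*$ so that strict-fairness can be certified on the fly.

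The state is a tuple $(i, j, n_1, \dots, n_g)$, where $i$ is the number of symbols of $\pi$ already processed, $j = \sum_t n_t$ is the length of the $\pi^*$-prefix already built, and $n_t$ is the number of symbols from group $G_t$ placed in $\pi^*[1..j]$. The DP value $f[i,j,\bar n]$ stores the maximum number of matched LCS pairs reachable at this state. There are three kinds of transitions from $(i,j,\bar n)$: (A) advance $i$ by one, committing to not use $\pi[i+1]$ in the LCS; (B) for some $t \in [g]$, place a non-matched symbol of group $G_t$ at position $j+1$ of $\pi^*$, incrementing $n_t$ and leaving the DP value unchanged; (C) match, i.e., set $\pi^*[j+1] = \pi[i+1]$, incrementing $i$, $j$, and the count $n_{t^*}$ of the group $t^* = \text{group}(\pi[i+1])$, and adding $1$ to the DP value. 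Any transition leading to a $\pi^*$-prefix of length $j+1 \ge k$ is gated by the strict-fairness test $\lfloor \alpha_t(j+1)\rfloor \le n_t \le \lceil \beta_t(j+1)\rceil$ for every $t \in [g]$, and every transition must preserve $n_t \le |G_t|$. The answer is read from $f[d,d,|G_1|,\dots,|G_g|]$, yielding the minimum Ulam distance $d - f[d,d,|G_1|,\dots,|G_g|]$, with the optimal $\pi^*$ recovered by standard backtracking.

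The main obstacle, and the portion of the proof requiring the most care, is the soundness of abstracting away the identities of symbols placed at non-matched positions: the DP tracks only group counts there, not specific elements. I would argue that the matched symbols (determined by the sequence of C-transitions) form a specific subset of $\pi[1..i]$, while the non-matched positions of $\pi^*$ serve merely as placeholders to be realized by any bijection with the leftover symbols of $[d]$ that respects the recorded group counts. The invariants $n_t \le |G_t|$ along every trajectory, together with the terminal constraint $n_t = |G_t|$, guarantee that such a bijection exists and produces a legitimate permutation whose length-$j$ prefix contains exactly $n_t$ symbols of group $G_t$ for each $t$, which is precisely what the strict-fairness checks verify. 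Conversely, any strict-fair $\pi^*$ admits an optimal LCS alignment with $\pi$ that, walked left-to-right, induces a valid DP trajectory whose value equals $|\mathsf{LCS}(\pi,\pi^*)|$, establishing that the DP is a genuine maximum over all strict-fair permutations.

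For the runtime, the state space has $O(d^{g+2})$ cells since $i$, $j$, and $n_1,\dots,n_g$ each range over $O(d)$ values (one of these is redundant via $j=\sum_t n_t$, but keeping it is harmless). Each state has $O(g)$ outgoing transitions and each fairness check takes $O(g)$ time; since $g = O(1)$, the overall running time is $O(d^{g+2})$, matching the bound claimed in Table~\ref{tab:result}.
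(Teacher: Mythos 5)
Your proposal is correct and follows essentially the same route as the paper: both reduce the problem to a constrained LCS computation via $\mathcal{U}(\pi,\pi^*)=d-|\mathsf{LCS}(\pi,\pi^*)|$ and run a dynamic program indexed by a prefix of $\pi$ together with the group-composition vector of the prefix of the output ranking, gating states by the strict-fairness bounds and recovering $\pi^*$ by backtracking, in $\bigO(d^{g+2})$ time. The only difference is bookkeeping: you model unmatched output positions as zero-gain placeholder transitions (with the bijection argument for realizing them), whereas the paper folds them into a case that inserts an available group element to extend the LCS; the two formulations compute the same quantity.
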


\begin{algorithm} \label{alg:UlamDP}
    \caption{DP algorithm for $(\bar{\alpha},\bar{\beta})$\AS{-strict}-k-fair ranking under Ulam}
    \KwIn{Input ranking $\pi$, $g$ groups $G_1, \dots, G_g$, vectors
        $\bar{\alpha}=(\alpha_1,\dots,\alpha_g)\in[0,1]^g,\bar{\beta}=(\beta_1,\dots,\beta_g)\in[0,1]^g$,
    $k\in[d]$.}
    \KwOut{\AS{Filled DP table with correct LCS values and strings.}
    }
    \SetKwFunction{FMain}{FnDP}
    \SetKwProg{Fn}{Function}{:}{}
    DP$[d][|G_1|]\dots[|G_g|] := \vec{0}$. \\
    LCS$[d][|G_1|]\dots[|G_g|] := \emptyset$. \\
    \For(\tcc*[f]{Base cases}){$j\in[d]$} {
        \For{$i\in[g]$} {
            DP$[j][0]\dots[a_i=1]\dots[0]=1$ if there is an element of $G_i$ in $\pi[1\dots j]$. \\
            LCS$[j][0]\dots[a_i=1]\dots[0]=$ first element of group $G_i$.
        }
    }
    \Fn{\FMain{$[x][y_1]\dots[y_g]$}} {
        \If{ DP$[x][y_1\dots[y_g] \neq 0$} {
            \Return{DP$[x][y_1]\dots[y_g]$}.
        }
        $\gamma:=\sum_{z\in[g]} y_z$. \\
        \For(\tcc*[f]{DP recurrence loop}){$z\in[g]$} {
            \If{($\gamma\ge k$) AND \big($\lfloor\alpha_z\gamma\rfloor> y_z$ OR $y_z>\lceil\beta_z\gamma\rceil$ OR $y_z>|G_z|$\big) } {
                \Return{$-\infty$}.\tcc*[f]{Fairness constraints}
            }
            \If{ $\pi[x]$ is in $G_z$ } {
                \If{\FMain$[x-1][y_1]\dots[y_z-1]\dots[y_g]+1>$DP$[x][y_1]\dots[y_g]$}{
                    DP$[x][y_1]\dots[y_g] = \FMain[x-1][y_1]\dots[y_z-1]\dots[y_g]+1$.\\
                    To obtain LCS$[x][y_1]\dots[y_g]$, append $\pi[x]$ to LCS$[x-1][y_1]\dots[y_z-1]$.
                }
            }
            \Else{
                \If{\FMain$[x-1][y_1]\dots[y_z-1]\dots[y_g]>$DP$[x][y_1]\dots[y_g]$}{
                    DP$[x][y_1]\dots[y_g] = \FMain[x-1][y_1]\dots[y_z-1]\dots[y_g]$.\\
                    LCS$[x][y_1]\dots[y_g] = \text{LCS}[x-1][y_1]\dots[y_z-1]\dots[y_g]$.
                }
            }
            \If{there are more than $y_z-1$ elements of $G_z$ in $\pi[1\dots j]$}{
                \If{\FMain$[x][y_1]\dots[y_z-1]\dots[y_g]+1>$DP$[x][y_1]\dots[y_g]$}{
                    DP$[x][y_1]\dots[y_g]=\FMain[x][y_1]\dots[y_z-1]\dots[y_g] + 1$.\\
                    Choose an element $g\in\{\pi[0\dots x]\}\setminus\{\sigma[x][y_1]\dots[y_z-1]\dots[y_g]\}$.\\
                    Identify the first element $a$ (in increasing order of $\pi$) in the LCS, satisfying $g<_\pi a$
                    (note that $a$ could be null, representing the last position in the LCS).\\
                    To obtain LCS$[x][y_1]\dots[y_g]$: insert $g$ preceding $a$ in LCS$[x][y_1]\dots[y_z-1]\dots[y_g]$
                    (if $a$ is null, append $g$ to LCS$[x][y_1]\dots[y_z-1]\dots[y_g]$).
                }
            }
            \Else{
                \If{\FMain$[x][y_1]\dots[y_z-1]\dots[y_g]>$DP$[x][y_1]\dots[y_g]$}{
                    DP$[x][y_1]\dots[y_g]=\FMain[x][y_1]\dots[y_z-1]\dots[y_g]$.\\
                    LCS$[x][y_1]\dots[y_g]=\text{LCS}[x][y_1]\dots[y_z-1]\dots[y_g]$.
                }
            }
        }
    }
    DP$[d][|G_1|]\dots[|G_g|] = \FMain[d][|G_1|]\dots[|G_g|]$. \\
\end{algorithm}

\begin{proof}
    Let $\pi$ be the given input string and $g$ be the number of groups. Let $a_i\in
    \mathbb{N}$, for all $i \in [g]$, and $\gamma:=\sum_{i\in[g]}a_i$. Let $\mathcal{P}
     (a_1,\dots,a_g)$ be the family of strings of length $\gamma$ that have exactly $a_i$ elements
     from group $G_i$ \AS{(with no repetitions)} for all $i\in[g]$. Let $\sigma_j
     (a_1,\dots,a_g)$ be the string in this family that has the longest common subsequence
     (LCS) with $\pi[1 \dots j]$ (So the string $\sigma_d(|G_1|,\dots,|G_g|)$ is the string of
     length $d$, that has maximum LCS with $\pi$, and from the alternate definition of the Ulam
     metric, the smallest Ulam distance from $\pi$).

    \AS{We first define a dynamic program to compute the lengths of, and build up, the LCS strings.
     Once we have the required LCS string, we backtrack over the DP to construct the corresponding
     \AS{-strict}fair ranking from it.} The DP subproblem is defined as follows: DP$[j][a_1]\dots[a_g]$ will
     store the length of the longest common subsequence between $\pi[1\dots j]$ and $\sigma_j
     (a_1,\dots,a_g)$. W.l.o.g., we assume that it stores the \AS{LCS string (denoted by LCS$_j
     (a_1,\dots,a_g)$)} as well. With this definition, we  construct a solution for the subproblem
     DP$[j][a_1]\dots[a_g]$ from `smaller' subproblems as follows:

    {\bf Case 1: From subproblems corresponding to smaller values of $\gamma$.} \\
    {\bf Case 1A.} Subproblems  due to $\pi[1 \dots j]$: If for  $i\in[g]$, the number of elements
    from $G_i$ in $\pi[1\dots j]$ is greater than $a_i-1$ then we know that $\pi[1\dots j]$ has an
    LCS of length DP$[j][a_1]\dots[a_i-1]\dots[a_g]$ with $\sigma_j(a_1,\dots, a_i-1,\dots, a_g)$,
    which does not contain all of its elements from $G_i$. \AS{So if we pick one such element $p\in
    G_i$, we can identify an appropriate position such that it gets included in the LCS. Let us say
    we identify the first element $q$ in the LCS that follows $p$ in $\pi$ (i.e., first element $q$
    from the LCS, satisfying $p<_\pi q$). Then if we place $p$ right before $q$ in LCS$_j
    (a_1,\dots, a_i-1,\dots, a_g)$ (and also in $\sigma_j(a_1,\dots,a_i-1,\dots, a_g)$) then we see
    that the size of the LCS has been increased by one.}

    Otherwise, if for $h\in [g]$, $\pi[1\dots j]$ has at most $a_h-1$ elements from $G_h$, then DP$
    [j][a_1]\dots[a_h]\dots[a_g]$ already utilizes all the elements of $G_h$ it can, and adding a
    new element of the group cannot extend the LCS anymore. Hence, \AS{the LCS remains unchanged}.

    {\bf Case 1B.} Subproblems  due to  $\pi[1 \dots j-1]$: For some $i\in[g]$ let $\pi[j]\in G_i$.
    Then one candidate solution's size is DP$[j-1][a_1]\dots[a_i-1]\dots[a_g]+1$, where we take the
    string LCS$_{j-1}(a_1,\dots,a_i-1, \dots, a_g)$ and append $\pi[j]$ at the end of it. This
    increases the LCS value by 1 \AS{(on comparison of $\pi[1 \dots j]$ with, $\sigma_{j-1}
    (a_1,\dots,a_i-1, \dots, a_g)$ appended with $\pi[j]$)} to give a solution of size DP$[j-1]
    [a_1]\dots[a_i-1]\dots[a_g]+1$.

    Otherwise, if for some $h\in[g],\pi[j]\not\in G_h$, then by adding any element from $G_h$, we
    can never have it being the same as $\pi[j]$, and hence cannot extend the LCS any further.

    {\bf Case 2: From subproblems corresponding to smaller values of $j$.} Here we try to build a
    solution to DP$[j][a_1]\dots[a_i]\dots[a_g]$ by using the solution of DP$[j-1][a_1]\dots
    [a_i]\dots[a_g]$. In this case if the length of the LCS were to increase, then we note that the
    last element of the new LCS has to be $\pi[j]$ (otherwise, the longer LCS we find would also
    have been valid for DP$[j-1][a_1]\dots[a_i]\dots[a_g]$). So the solution to this problem can be
    obtained by appending $\sigma_{j-1}(a_1,\dots,a_i-1\dots a_g)$ (for the appropriate group $i\in
    [g]$) with $\pi[j]$ to get an LCS of length $1 + \text{DP}[j-1][a_1]\dots[a_i-1]\dots
    [a_g]$. Hence we note that this case essentially reduces to case 1B(note that this case hence
    does not feature in the recurrence, and is just mentioned for completeness).

    Therefore, by iterating over all groups to consider all possible candidates, we get the
    recurrence,
    \[ \text{DP}[j][a_1]\dots[a_g]=\max_{i\in[g]}
        \begin{cases}
            \text{DP}[j][a_1]\dots[a_i-1]\dots[a_g]
            + 1, \text{ if $\pi[1\dots j]$ has  $\ge a_i$ elements of
            $G_i$; } \\
            \text{DP}[j][a_1]\dots[a_i-1]\dots[a_g], \text{ if $\pi[1\dots j]$ has $\le a_i-1$
            elements of $G_i$.} \\
            \text{DP}[j-1][a_1]\dots[a_i-1]\dots[a_g] + 1,  \text{ if $\pi[j]$ is in $G_i$;} \\
            \text{DP}[j-1][a_1]\dots[a_i-1]\dots[a_g], \text{ if $\pi[j]$ is not in $G_i$};

        \end{cases}
    \]

    Also note that here we use the fact that the length of the LCS can only increase by one in any
    of the above cases. If the LCS increased by more than one, we can ignore one of the characters
    (being an arbitrary character, or the newly considered character from the prefix, corresponding
    to the appropriate case) of the newly obtained LCS, from its corresponding solution string
    $\sigma$, thereby obtaining a solution with a larger LCS for a previously solved subproblem.
    This would hence give us a contradiction.

    To ensure that only
    subproblems where the top-$\ell$ ranks satisfy the \AS{strict-}fairness constraints ($\forall i\in[g],
    \lfloor\alpha_i \ell\rfloor\le a_i\le\lceil\beta_i \ell\rceil$ and $\forall i\in[g],
    a_i\le|G_i|$) are used for construction, we set all `invalid' subproblems, to have a value of
    $-\infty$ (we do this only for prefixes of size larger than $k$).

    For the {\em base case}, we have that for any $j\in[d]$ and all groups $i\in[g],
    \text{DP}[j][0]\dots[a_i=1]\dots[0]=$1, if there is a element of $G_i$ in $\pi[1\dots j]$; and 0
    otherwise.
    Note that the above recurrence can be solved in a top-down approach, to fill up the DP table.
    For a formal description of the pseudocode of the algorithm, see Algorithm \ref{alg:UlamDP}.

    \AS{ Once we have filled up the DP table, we can consider all cells DP$[a_1]\dots[a_g]$ which
     satisfy fairness criteria(i.e. for the corresponding prefix of size $\gamma:=\sum_{i\in
     [g]} a_i$, satisfy the fairness constraints) and select a cell with the maximum LCS length. We
     construct the required closest \AS{strict}-fair ranking from the LCS corresponding to this cell as
     follows: first we backtrack to find out the subproblems from which this LCS string was
     obtained. We now know, for each intermediary step, the required number of elements from each
     group and also the elements in the LCS at that stage. So in increasing order of $\gamma$, we
     go over the intermediary subproblems, and if the required number of elements from any group
     is not already satisfied, use arbitrary elements of the group (not in LCS) and append them to
     the string being constructed. In this manner, we ensure that the LCS is still preserved, and
     the fairness criteria are satisfied, and the output is a valid ranking. }

    \AS{Let us now consider the runtime of this algorithm:} There are $\bigO(d^{g+1})$ DP
    subproblems, evaluating each of which takes $\bigO(d)$ time. Then, we iterate over all the
    possible valid fair strings $\sigma(a_1,\dots, a_g)$ (there are at most  $\bigO(d^{g})$ such
    strings) to find the one with the longest possible common subsequence. Constructing the
    optimal ranking from the LCS takes at most $O(d)$ time. This gives us an
    overall running time of $\bigO(d^{g+2})$.
\end{proof}

\section{Fair Rank Aggregation}
\label{sec:FRA}
We start this section by formally defining the \emph{fair rank aggregation} problem. Then we will
provide two meta-algorithms that approximate the fair aggregated ranking.
\begin{definition}[$q$-mean Rank Aggregation] \label{def:rank-aggregation}
    Consider a metric space $(\sym_d,\rho)$ for a $d \in \mathbb{N}$. Given an exponent parameter $q \in \mathbb{R}$, and a set $S\subseteq
    \sym_d$ of $n$ input rankings, the \emph{$q$-mean rank aggregation} problem asks to
    find a ranking $\sigma \in \sym_d$ (not necessarily from $S$) that minimizes the objective
    function
    $\obj_q(S,\sigma):= \left( \sum_{\pi\in S} \rho(\pi,\sigma)^q \right)^{1/q}$.

\end{definition}

Generalized mean or $q$-mean objective functions are well-studied in the context of clustering
\cite{ChlamtacMV22}, and division of goods \cite{barman21}.  We study it for the first time in the
context of rank aggregation.  For $q=1$, the above problem is also referred to as the \emph{median
ranking} problem or simply \emph{rank aggregation} problem~\cite{kemeny1959mathematics,
young1988condorcet, young1978consistent, dwork2001rank}. On the other hand, for $q =\infty$, the
problem is also referred to as the \emph{center ranking} problem or \emph{maximum rank aggregation}
problem~\cite{BACHMAIER20152, biedl2009complexity, popov2007multiple}. Both these special cases are
studied extensively in the literature with different distance measures, e.g., Kendall tau
distance~\cite{dwork2001rank, ACN08, kenyon2007rank, Schudy2012thesis, biedl2009complexity}, Ulam
distance~\cite{CDK21, BACHMAIER20152, chakraborty2021approximating}, Spearman footrule
distance~\cite{dwork2001rank, BACHMAIER20152}.

In the fair rank aggregation problem, we want the output aggregated rank to satisfy certain fairness
constraints.

\begin{definition}[$q$-mean Fair Rank Aggregation] \label{def:fair-rank-aggregation}
    Consider a metric space $(\sym_d,\rho)$ for a $d \in \mathbb{N}$.  Given an exponent parameter $q \in \mathbb{R}$, and a set $S\subseteq
    \sym_d$ of $n$ input rankings/permutations, the \emph{$q$-mean (\AS{block, strict})-fair rank
    aggregation} problem asks to find a (\AS{block, strict})-fair ranking $\sigma \in \sym_d$ (not necessarily
    from $S$) that minimizes the objective function
    $ \obj_q(S,\sigma):= \left( \sum_{\pi \in S} \rho(\pi ,\sigma)^q \right)^{1/q}.$
\end{definition}
It is worth noting that in the above definition, the minimization is over the set of all the (\AS{block, strict})-fair rankings in $\sym_d$. When clear from the context, we drop block/strict and refer to it as the $q$-mean
fair rank aggregation problem. Let $\sigma^*$ be a (\AS{block, strict})-fair ranking that minimizes
$\obj_q(S,\sigma)$,
then we call $\sigma^*$ a \emph{$q$-mean fair aggregated rank} of $S$. We refer to
$\obj_q(S,\sigma^*)$ as $\opt_q(S)$.

When $q=1$, we refer the problem as the \emph{fair median ranking} problem or simply \emph{fair rank
aggregation} problem. When $q=\infty$, the objective function becomes $\obj_\infty(S,\sigma) =
\max_{\pi\in S} \rho(\pi,\sigma)$, and we refer the problem as the \emph{fair center ranking}
problem.

Next, we present two meta algorithms that work for any values of $q$ and irrespective of \AS{strict, or block} fairness constraints.
We will also assume $q$ to be a constant.

\subsection{First Meta Algorithm}

\begin{theorem} \label{thm:fairmeta}
    Consider any $q \ge 1$. Suppose there is a $t(d)$-time $c$-approximation algorithm
    $\mathcal{A}$, for some $c \ge 1$, for the closest fair ranking problem over the metric space
    $(\sym_d,\rho)$. Then there exists a $(c+2)$-approximation algorithm for the $q$-mean fair rank
    aggregation problem, that runs in $\bigO(n \cdot t(d) + n^2\cdot f(d))$ time where $f(d)$ is the
    time to compute $\rho(\pi_1,\pi_2)$ for any $\pi_1,\pi_2 \in \sym_d$.
\end{theorem}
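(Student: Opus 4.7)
The plan is to use a very natural reduction: for each input ranking $\pi_i \in S$, run the $c$-approximation algorithm $\mathcal{A}$ to obtain a fair ranking $\pi_i'$ that is close to $\pi_i$, evaluate the $q$-mean objective $\obj_q(S,\pi_i')$ for each of these $n$ candidate fair rankings (which requires $n$ distance computations each, each costing $f(d)$), and output the candidate $\pi_{i}'$ achieving the smallest value. This gives exactly $n$ invocations of $\mathcal{A}$ and $n^2$ distance computations, yielding the claimed runtime $\bigO(n\cdot t(d)+n^2\cdot f(d))$.

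For the approximation analysis, let $\sigma^\star$ denote an optimal fair aggregated ranking so that $\obj_q(S,\sigma^\star)=\opt_q(S)$. Since $\sigma^\star$ is itself a fair ranking, it is a feasible solution to the closest fair ranking problem for every $\pi_i$, so the approximation guarantee of $\mathcal{A}$ gives $\rho(\pi_i,\pi_i') \le c\cdot \rho(\pi_i,\sigma^\star)$ for every $i$. Then for any $i,j \in [n]$, applying the triangle inequality twice and this bound yields
\[
\rho(\pi_j,\pi_i') \le \rho(\pi_j,\sigma^\star) + \rho(\sigma^\star,\pi_i) + \rho(\pi_i,\pi_i') \le \rho(\pi_j,\sigma^\star) + (c+1)\,\rho(\pi_i,\sigma^\star).
\]
Raising to the $q$-th power, summing over $j$, and applying Minkowski's inequality (valid since $q\ge 1$) gives
\[
\obj_q(S,\pi_i') \le \opt_q(S) + (c+1)\,\rho(\pi_i,\sigma^\star)\cdot n^{1/q}.
\]

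The step I expect to be the main obstacle is controlling the extra $n^{1/q}$ factor that Minkowski's inequality introduces; this is where the choice of index matters. I would fix $i^\star := \arg\min_{i\in[n]} \rho(\pi_i,\sigma^\star)$; by a straightforward averaging argument,
\[
\rho(\pi_{i^\star},\sigma^\star)^q \;\le\; \frac{1}{n}\sum_{i=1}^n \rho(\pi_i,\sigma^\star)^q \;=\; \frac{\opt_q(S)^q}{n},
\]
so $\rho(\pi_{i^\star},\sigma^\star) \le \opt_q(S)/n^{1/q}$. Plugging this into the bound above, the $n^{1/q}$ factor is exactly cancelled and one obtains $\obj_q(S,\pi_{i^\star}') \le \opt_q(S) + (c+1)\,\opt_q(S) = (c+2)\,\opt_q(S)$.

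Since our algorithm returns the candidate that minimizes $\obj_q(S,\pi_i')$ over all $i$, its output achieves objective value at most that of $\pi_{i^\star}'$, namely at most $(c+2)\,\opt_q(S)$. Note that the algorithm does not need to know $i^\star$ (this index is only used in the analysis), and the argument is completely oblivious to the specific metric $\rho$ and to whether the fairness notion is strict-fair or block-fair, since both the triangle inequality and the CFR approximation guarantee are used as black-boxes.
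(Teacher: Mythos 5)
Your proof is correct, and both the algorithm and the crucial choice of index (the input ranking nearest to the optimal fair aggregate $\sigma^\star$) coincide with the paper's. The only real divergence is in how the per-point bounds are combined into a bound on the $q$-mean objective. The paper proves a \emph{pointwise} domination lemma: using $\rho(\pi_{i^\star},\sigma^\star)\le\rho(\pi_j,\sigma^\star)$ for \emph{every} $j$ (which is exactly what minimality of $i^\star$ gives), it obtains $\rho(\pi_j,\pi_{i^\star}')\le(c+2)\,\rho(\pi_j,\sigma^\star)$ term by term, after which the $(c+2)$ factor passes through the $q$-mean trivially, with no need for Minkowski. You instead keep the $(c+1)\,\rho(\pi_{i^\star},\sigma^\star)$ contribution as an additive constant across all $j$, pay an $n^{1/q}$ factor via Minkowski, and then cancel it with the power-mean bound $\rho(\pi_{i^\star},\sigma^\star)\le\opt_q(S)/n^{1/q}$. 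Both routes are valid for all $q\ge1$ (and in the limit $q=\infty$) and give the same constant; the paper's pointwise version is slightly simpler and strictly stronger, since a termwise $(c+2)$ bound would transfer to any monotone symmetric aggregation of the distances, not just generalized means, whereas your argument is tied to the $\ell_q$ structure through Minkowski. Your runtime accounting matches the paper's exactly.
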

We devote this subsection to proving the above theorem. Let us start by describing the algorithm.
It works as follows: Given a set $S \subseteq \sym_d$ of rankings, it first computes $c$-approximate
closest fair ranking $\sigma$ (for some $c \ge  1$) for each $\pi \in S$. Next, among these $|S|$ fair rankings output the ranking $\sigma$
that minimizes $\obj_q(S,\sigma)$. Let us denote the output ranking by $\bar{\sigma}$. \AS{A formal
discription of the algorithm follows.}
\begin{algorithm}
    \caption{Meta-algorithm 1 for the $q$-mean fair rank aggregation.}
    \label{alg:fairmeta}
    \KwIn{A set $S \subseteq \sym_d$ of $n$ rankings.}
    \KwOut{A $(c+2)$-approximate fair aggregate rank of $S$.}
    Initialize $S' \leftarrow \emptyset$\\
    \For{each point $\pi$ in $S$ } {
        Find a $c$-approximate closest fair ranking $\sigma$ to $\pi$ using the algorithm
        $\mathcal{A}$\\
        $S' \leftarrow S' \cup \{\sigma\}$\\
    }
    Initialize $\bar{\sigma} \leftarrow \emptyset$\\
    Initialize $\obj_q(S,\sigma) \leftarrow \infty$\\
    \For{each point $\sigma$ in $S'$ } {
        \If{$\obj_q(S,\sigma)<\obj_q(S,\bar{\sigma})$}{
            $\bar{\sigma} \leftarrow \sigma$\\
        }
    }
    \Return{$\bar{\sigma}$ }
\end{algorithm}

It is straightforward to verify that the running time of the above algorithm is $\bigO(n \cdot t(d) +
n^2\cdot f(d))$, where $f(d)$ is the time to compute $\rho(\pi_1,\pi_2)$ for any $\pi_1,\pi_2 \in
\sym_d$ and $t(d)$ denotes the running time of the algorithm $\mathcal{A}$. So it only remains to
argue about the approximation factor of Algorithm~\ref{alg:fairmeta}. The following simple
observation plays a pivotal role in establishing the approximation factor of
Algorithm~\ref{alg:fairmeta}.
\begin{restatable}{lemma}{pointwise} \label{lem:pointwise}
    Given a set $S \subseteq \sym_d$ of $n$ rankings, let $\sigma^*$ be an optimal $q$-mean fair aggregated
    rank of $S$ under a distance function $\rho$. Further, let $\bar{\pi}$ be a nearest neighbor
    (closest ranking) of $\sigma^*$ in $S$, and $\bar{\sigma}$ be a $c$-approximate closest fair
    ranking to $\bar{\pi}$, for some $c \ge 1$. Then
    $\forall \pi\in S, \rho(\pi, \bar{\sigma}) \le (c+2)\cdot\rho(\pi, \sigma^*).$
\end{restatable}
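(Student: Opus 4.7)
The plan is to prove this purely from the triangle inequality of the metric $\rho$, together with the two defining properties of $\bar{\pi}$ (nearest input to $\sigma^*$) and $\bar{\sigma}$ ($c$-approximate closest fair ranking to $\bar{\pi}$). There is no obstacle per se; the proof is essentially a two-step triangle inequality calculation, and the main thing to be careful about is correctly identifying the right comparison point so that the nearest-neighbor property can be invoked.

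First, I would fix an arbitrary $\pi \in S$ and apply the triangle inequality to split the target distance as
\[
\rho(\pi, \bar{\sigma}) \;\le\; \rho(\pi, \bar{\pi}) + \rho(\bar{\pi}, \bar{\sigma}).
\]
For the second summand, I would use the fact that $\sigma^*$ is itself a fair ranking, so the optimal closest fair ranking to $\bar{\pi}$ has distance at most $\rho(\bar{\pi}, \sigma^*)$. Since $\bar{\sigma}$ is a $c$-approximate closest fair ranking to $\bar{\pi}$, this yields $\rho(\bar{\pi}, \bar{\sigma}) \le c \cdot \rho(\bar{\pi}, \sigma^*)$.

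For the first summand, I would apply the triangle inequality a second time, routing through $\sigma^*$:
\[
\rho(\pi, \bar{\pi}) \;\le\; \rho(\pi, \sigma^*) + \rho(\sigma^*, \bar{\pi}).
\]
Now I invoke the nearest-neighbor property: $\bar{\pi}$ minimizes $\rho(\sigma^*, \cdot)$ over $S$, so $\rho(\sigma^*, \bar{\pi}) \le \rho(\sigma^*, \pi)$.

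Combining everything gives
\[
\rho(\pi, \bar{\sigma}) \;\le\; \rho(\pi, \sigma^*) + \rho(\sigma^*, \bar{\pi}) + c\cdot \rho(\bar{\pi}, \sigma^*) \;=\; \rho(\pi, \sigma^*) + (c+1)\cdot \rho(\bar{\pi}, \sigma^*) \;\le\; (c+2)\cdot \rho(\pi, \sigma^*),
\]
as required. The pointwise nature of the statement (rather than a $q$-mean statement) is what makes this clean: once the inequality is established for every $\pi \in S$ individually, the $(c+2)$-approximation for the $q$-mean objective will follow immediately in the main theorem by raising to the $q$-th power and summing.
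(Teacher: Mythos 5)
Your proof is correct and follows essentially the same route as the paper's: both arguments bound $\rho(\pi,\bar{\sigma})$ by the three-term chain $\rho(\pi,\sigma^*)+\rho(\sigma^*,\bar{\pi})+\rho(\bar{\pi},\bar{\sigma})$ via two triangle inequalities, then apply the nearest-neighbor property of $\bar{\pi}$ and the $c$-approximation guarantee (using that $\sigma^*$ is itself fair). The only difference is the order in which the chain is assembled, which is immaterial.
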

\begin{proof}
    Since $\sigma^*$ is a fair ranking and $\bar{\sigma}$ is a $c$-approximate closest fair ranking
    to $\bar{\pi}$
    \begin{equation}
        \label{eq:approx-close}
        \rho(\bar{\pi}, \bar{\sigma}) \le c\cdot\rho(\bar{\pi}, \sigma^*).
    \end{equation}

    Since $\bar{\pi} \in S$ is a closest ranking to $\sigma^*$ in the set $S$,
    \begin{equation}
        \label{eq:closest-input}
        \forall \pi\in S, \rho(\bar{\pi}, \sigma^*) \le \rho(\pi, \sigma^*).
    \end{equation}
    Then it follows from~\autoref{eq:approx-close},
    \begin{equation}
        \label{eq:close-fair}
        \forall \pi\in S, \rho(\bar{\pi}, \bar{\sigma} ) \le c\cdot\rho(\pi, \sigma^*)
    \end{equation}

    Now, for any $\pi \in S$, we get,
    \begin{align*}
        \rho(\pi, \bar{\sigma}) &\le \rho(\pi,\sigma^*)+\rho(\sigma^*,\bar{\sigma})&&\text{(By the triangle inequality)}\\
                                &\le \rho(\pi,\sigma^*)+\rho(\sigma^*,\bar{\pi})+\rho(\bar{\pi},\bar{\sigma})&&\text{(By the triangle inequality)}\\
                                &\le \rho(\pi,\sigma^*)+\rho(\pi,\sigma^*)+c\cdot\rho(\pi,\sigma^*) && \text{(By~\autoref{eq:closest-input} and
                                ~\autoref{eq:close-fair})}\\
                                &\le (c+2)\cdot \rho(\pi, \sigma^*).
                                \qedhere
    \end{align*}
\end{proof}
Now, we use the above lemma to complete the proof of of \autoref{alg:fairmeta}.
\begin{proof}[Proof of \autoref{thm:fairmeta}.]
Let $\sigma^*$ be an (arbitrary)
optimal fair aggregate rank of $S$ and $\bar{\sigma}$ be the output of Algorithm~\ref{alg:fairmeta}.
The optimal value of the objective function is
\[ \opt=\obj_q(S,\sigma^*)=\left(\sum_{\pi\in S}\rho(\pi,\sigma^*)^q\right)^{1/q}.\]
Next, we show that $\obj_q(S,\bar{\sigma}) \le (c+2) \cdot
\opt$.
\begin{align*}
    \obj_q(S,\bar{\sigma})
    &= \left(\sum_{\pi\in S}\rho(\pi,\bar{\sigma})^q\right)^{1/q}\\
    &\le \left(\sum_{\pi\in S}\big((c+2)\cdot\rho(\pi,\sigma^*)\big)^q\right)^{1/q}\\
    &= (c+2)\cdot\left(\sum_{\pi\in S}\rho(\pi,\sigma^*)^q\right)^{1/q}\\
    &= (c+2)\cdot\opt.
\end{align*}
where the first inequality follows from \autoref{lem:pointwise}. This concludes the proof of~\autoref{thm:fairmeta}.
\end{proof}

\paragraph*{Applications of~\autoref{thm:fairmeta}.}

We have shown in \autoref{thm:StrongGreedy} that the closest \AS{block}-fair ranking problem for Kendall tau
can be solved exactly in $\bigO(d^2)$ time, i.e., the approximation ratio is $c=1$. We also know from
\cite{KTeval}, that the Kendall tau distance between two permutations can be computed in
$\bigO(d\log d)$ time. This gives us that,
\begin{corollary} \label{cor:KTAggr}
    For any $q\ge1$, there exists an $\bigO(nd^2+n^2d\log d)$ time meta-algorithm that finds a
    3-approximate solution to the $q$-mean \AS{block}-fair, rank aggregation problem under the Kendall tau
    metric.
\end{corollary}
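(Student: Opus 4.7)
The plan is to obtain this corollary by directly instantiating the meta-algorithm of \autoref{thm:fairmeta} with the closest block-fair ranking procedure of \autoref{thm:StrongGreedy} as the black-box subroutine $\mathcal{A}$. Since there is essentially no novel content here, the ``proof'' is an exercise in verifying that the hypotheses of \autoref{thm:fairmeta} are met and in computing the resulting parameters.

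First, I would note that the Kendall tau distance $\mathcal{K}$ is well known to be a metric on $\sym_d$, so the ambient space $(\sym_d, \mathcal{K})$ fits the setting of \autoref{thm:fairmeta}. Next, I would invoke \autoref{thm:StrongGreedy}, which provides an \emph{exact} algorithm (i.e., approximation factor $c = 1$) for the closest $(\bar{\alpha},\bar{\beta})$-block-$k$-fair ranking problem under Kendall tau, running in time $t(d) = \bigO(d^2)$. Plugging $c = 1$ into the $(c+2)$-approximation guarantee of \autoref{thm:fairmeta} immediately yields the claimed $3$-approximation for the $q$-mean block-fair rank aggregation problem, for any $q \geq 1$.

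For the running time, \autoref{thm:fairmeta} gives an overall bound of $\bigO(n \cdot t(d) + n^2 \cdot f(d))$, where $f(d)$ is the time to evaluate the underlying distance between two permutations. For Kendall tau, the folklore merge-sort–based inversion-counting algorithm (as in \cite{KTeval}) achieves $f(d) = \bigO(d \log d)$. Substituting $t(d) = \bigO(d^2)$ and $f(d) = \bigO(d \log d)$ into the bound yields the stated runtime $\bigO(n d^2 + n^2 d \log d)$.

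There is no real obstacle to this proof: the only thing to check is that the fairness notion used in \autoref{thm:StrongGreedy} matches the notion used in the statement of \autoref{thm:fairmeta} (both refer to block-fairness here) and that the triangle inequality invoked inside \autoref{lem:pointwise} is available for $\mathcal{K}$, which it is since $\mathcal{K}$ is a metric. Hence the corollary follows as an immediate specialization.
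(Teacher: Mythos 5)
Your proposal is correct and follows essentially the same route as the paper: the corollary is obtained by instantiating \autoref{thm:fairmeta} with the exact $\bigO(d^2)$-time closest block-fair ranking algorithm of \autoref{thm:StrongGreedy} as the subroutine $\mathcal{A}$ (so $c=1$) and using the $\bigO(d\log d)$-time Kendall tau distance computation for $f(d)$, which yields the $(c+2)=3$ approximation factor and the stated $\bigO(nd^2+n^2d\log d)$ running time. Your additional checks (that $\mathcal{K}$ is a metric and that the fairness notions align) are sound and only make explicit what the paper leaves implicit.
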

It is shown in~\cite{CelisSV18} that the closest \AS{strict}-fair ranking problem for
Spearman footrule can be solved exactly in $\bigO(d^3\log d)$ time, i.e., the approximation ratio is
$c=1$. Since distance under Spearman footrule can be trivially computed in $\bigO(d)$ we have that,
\begin{corollary}  \label{cor:SFAggr}
    For any $q\ge1$, there exists an $\bigO(nd^3\log d+n^2d)$ time meta-algorithm that finds a
    3-approximate solution to the $q$-mean \AS{strict}-fair rank aggregation problem under the Spearman footrule metric.
\end{corollary}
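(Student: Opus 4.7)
The plan is to directly instantiate the meta-algorithm of \autoref{thm:fairmeta} with an appropriate black-box subroutine for the closest strict-fair ranking problem under the Spearman footrule metric. First, I would invoke the result of Celis et al.~\cite{CelisSV18}, which solves the closest strict-fair ranking problem exactly under Spearman footrule in $\bigO(d^3 \log d)$ time; this gives an approximation ratio $c = 1$ with running time $t(d) = \bigO(d^3 \log d)$ for the subroutine $\mathcal{A}$ required by \autoref{thm:fairmeta}.

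Next, I need the per-pair distance evaluation cost $f(d)$. Since the Spearman footrule distance $\mathcal{F}(\pi_1, \pi_2) = \sum_{i \in [d]} |\pi_1(i) - \pi_2(i)|$ is a simple $\ell_1$-type sum over $d$ coordinates, it can be computed in $\bigO(d)$ time by a single pass, so $f(d) = \bigO(d)$.

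Plugging $c = 1$, $t(d) = \bigO(d^3 \log d)$, and $f(d) = \bigO(d)$ into \autoref{thm:fairmeta} then immediately yields a $(1 + 2) = 3$-approximation algorithm for the $q$-mean strict-fair rank aggregation problem under Spearman footrule, with running time
\[
\bigO\bigl(n \cdot t(d) + n^2 \cdot f(d)\bigr) \;=\; \bigO(n d^3 \log d + n^2 d),
\]
which matches the claimed bound. The entire argument is therefore a routine substitution; no step appears delicate since \autoref{thm:fairmeta} already abstracts away all of the approximation analysis, and both the subroutine complexity from~\cite{CelisSV18} and the trivial $\bigO(d)$ distance-evaluation cost are available off the shelf. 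If there is any subtlety to flag, it would just be verifying that the subroutine of~\cite{CelisSV18} indeed applies to the specific strict-fairness notion used in \autoref{def:fair} rather than to a different formalization, but this consistency is already tacitly used in Table~\ref{tab:result} and in the stated preconditions of the meta-theorem, so no additional work is needed.
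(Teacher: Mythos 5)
Your proposal is correct and matches the paper's own argument exactly: the paper likewise obtains Corollary~\ref{cor:SFAggr} by plugging the exact $\bigO(d^3\log d)$-time closest strict-fair ranking algorithm of~\cite{CelisSV18} (so $c=1$) and the trivial $f(d)=\bigO(d)$ distance evaluation into \autoref{thm:fairmeta}. Nothing further is needed.
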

We have shown in~\autoref{thm:ulamdp} that for constant number of groups, the closest \AS{strict}-fair ranking
problem for Ulam metric can be solved exactly in $\bigO(d^{g+2})$ time, i.e., the approximation
ratio is $c=1$. From~\cite{AD99} we know that Ulam distance between two permutations can be computed
in $\bigO(d\log d)$ time. This gives us that,
\begin{corollary} \label{cor:UlamAggr}
    For any $q\ge1$, there exists an $\bigO(nd^{g+2}+n^2d\log d)$ time meta-algorithm, that finds a
    3-approximate solution to the $q$-mean \AS{strict}-fair rank aggregation problem , under the Ulam metric.
\end{corollary}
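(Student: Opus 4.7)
The plan is to obtain Corollary~\ref{cor:UlamAggr} as a direct instantiation of the meta-theorem (Theorem~\ref{thm:fairmeta}) with the Ulam metric, using the dynamic programming result (Theorem~\ref{thm:ulamdp}) as the closest-fair-ranking black box. Since Theorem~\ref{thm:fairmeta} is stated for an arbitrary metric space $(\sym_d, \rho)$ and an arbitrary $c$-approximate CFR subroutine, it suffices to check that all of its hypotheses are met in the Ulam setting and to bookkeep the parameters.

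The main steps are as follows. First, I would note that the Ulam distance $\mathcal{U}$ is indeed a metric on $\sym_d$ (it inherits the metric axioms from the general edit distance), which is exactly what the proof of Theorem~\ref{thm:fairmeta} relies on via the two applications of the triangle inequality in Lemma~\ref{lem:pointwise}. Second, I would invoke Theorem~\ref{thm:ulamdp}, which for constantly many groups $g$ gives an \emph{exact} algorithm (so $c = 1$) for the closest strict-fair ranking problem under $\mathcal{U}$ running in time $t(d) = \bigO(d^{g+2})$. Third, I would recall that for any two permutations the Ulam distance can be computed in $f(d) = \bigO(d \log d)$ time via the LCS--patience-sorting connection~\cite{AD99}.

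With these three ingredients in hand, Theorem~\ref{thm:fairmeta} immediately yields a $(c+2) = 3$-approximation algorithm for the $q$-mean strict-fair rank aggregation problem under Ulam for any $q \ge 1$, with total running time
\[
    \bigO\bigl(n \cdot t(d) + n^2 \cdot f(d)\bigr) \;=\; \bigO\bigl(n d^{g+2} + n^2 d \log d\bigr),
\]
exactly matching the bound claimed in the corollary. There is no real obstacle here; this is a mechanical plug-in. The only thing to be careful about is that Theorem~\ref{thm:ulamdp} is stated for strict-fairness, which is what the corollary advertises, and that $g = \bigO(1)$ is required so that the $d^{g+2}$ factor is polynomial, a condition that Corollary~\ref{cor:UlamAggr} implicitly inherits.
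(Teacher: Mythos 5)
Your proposal is correct and matches the paper's own derivation exactly: the paper obtains Corollary~\ref{cor:UlamAggr} by plugging the exact ($c=1$) $\bigO(d^{g+2})$-time CFR algorithm of \autoref{thm:ulamdp} and the $\bigO(d\log d)$ Ulam-distance computation of~\cite{AD99} into \autoref{thm:fairmeta}. Nothing is missing; the checks you flag (Ulam being a metric, strict-fairness matching, and $g=\bigO(1)$) are the right ones.
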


We would like to emphasize that all the above results hold for any values of $q \ge 1$. Hence, they
are also true for the special case of the fair median problem (i.e., for $q=1$) and the fair center
problem (i.e., for $q = \infty$).

\subsection{Second Meta Algorithm}

\begin{theorem} \label{thm:fairmeta2}
    Consider any $q \ge 1$. Suppose there is a $t_1(n)$ time $c_1$-approximation algorithm
    $\mathcal{A}_1$ for some $c_1\ge1$ for $q$-mean rank aggregation problem; and a $t_2(d)$-time
    $c_2$-approximation algorithm $\mathcal{A}_2$, for some $c_2 \ge 1$, for the closest fair
    ranking problem over the metric space $(\sym_d,\rho)$. Then there exists a
    $(c_1c_2+c_1+c_2)$-approximation algorithm for the $q$-mean fair rank aggregation problem, that
    runs in $\bigO( t_1(n) + t_2(d) + n^2\cdot f(d))$ time where $f(d)$ is the time to compute
    $\rho(\pi_1,\pi_2)$ for any $\pi_1,\pi_2 \in \sym_d$.
\end{theorem}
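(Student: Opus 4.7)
The plan is to use the two algorithms in sequence: first invoke $\mathcal{A}_1$ on the input set $S$ to obtain a $c_1$-approximate (not necessarily fair) aggregate ranking $\tau$, and then invoke $\mathcal{A}_2$ on $\tau$ to obtain a fair ranking $\bar{\sigma}$ that is a $c_2$-approximate closest fair ranking to $\tau$. Output $\bar{\sigma}$. The running time is immediate: $t_1(n)$ for $\mathcal{A}_1$, $t_2(d)$ for $\mathcal{A}_2$, and $\bigO(n^2 f(d))$ for evaluating the objective whenever needed (or, if one only wishes to output $\bar{\sigma}$, the $n^2 f(d)$ term can be replaced by a smaller bound, but keeping it matches the statement).

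For the approximation analysis, let $\sigma^*$ be an optimal $q$-mean fair aggregated rank, so $\opt_q(S) = \obj_q(S,\sigma^*)$, and let $\tau^*$ be an (unrestricted) optimal $q$-mean aggregated rank, so $\obj_q(S,\tau^*) \le \obj_q(S,\sigma^*) = \opt_q(S)$. Then $\obj_q(S,\tau) \le c_1 \cdot \obj_q(S,\tau^*) \le c_1 \cdot \opt_q(S)$. Moreover, since $\sigma^*$ is itself a fair ranking, the $c_2$-approximation guarantee of $\mathcal{A}_2$ applied to $\tau$ yields $\rho(\tau,\bar{\sigma}) \le c_2 \cdot \rho(\tau,\sigma^*)$. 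For every $\pi \in S$, two successive applications of the triangle inequality then give
\[
  \rho(\pi,\bar{\sigma}) \;\le\; \rho(\pi,\tau) + \rho(\tau,\bar{\sigma}) \;\le\; \rho(\pi,\tau) + c_2\bigl(\rho(\tau,\pi) + \rho(\pi,\sigma^*)\bigr) \;=\; (1+c_2)\,\rho(\pi,\tau) + c_2\,\rho(\pi,\sigma^*).
\]

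The step that needs a little care is lifting this pointwise bound to the $q$-mean objective, since $(a+b)^q$ does not split additively. The clean way is Minkowski's inequality on $\ell_q$, which gives
\[
  \obj_q(S,\bar{\sigma}) = \Bigl(\sum_{\pi \in S} \rho(\pi,\bar{\sigma})^q\Bigr)^{1/q} \;\le\; (1+c_2)\,\obj_q(S,\tau) + c_2\,\obj_q(S,\sigma^*).
\]
Plugging in $\obj_q(S,\tau) \le c_1 \cdot \opt_q(S)$ and $\obj_q(S,\sigma^*) = \opt_q(S)$ yields $\obj_q(S,\bar{\sigma}) \le \bigl((1+c_2)c_1 + c_2\bigr)\cdot \opt_q(S) = (c_1 c_2 + c_1 + c_2)\cdot \opt_q(S)$, as required. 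The $q=\infty$ case is handled identically by replacing Minkowski with the $\ell_\infty$ inequality $\max_\pi (a_\pi+b_\pi) \le \max_\pi a_\pi + \max_\pi b_\pi$.

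The only genuine obstacle is the Minkowski step; everything else is bookkeeping triangle inequalities and the definitions of the two approximation guarantees. I expect the write-up to be essentially a three-line derivation after the pointwise bound is stated, and the $(c_1 c_2 + c_1 + c_2)$ factor arises naturally from the product $(1+c_2)\cdot c_1$ coming from chaining $\mathcal{A}_1$'s error (multiplied by the $1+c_2$ amplification from the triangle step) with the additive $c_2$ coming from comparing $\tau$ to $\sigma^*$.
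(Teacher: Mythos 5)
Your proposal is correct, and the algorithm is exactly the paper's: run $\mathcal{A}_1$ to get an approximate (unconstrained) aggregate rank, then run $\mathcal{A}_2$ on it. The difference is in how the pointwise bound is lifted to the $q$-mean objective. The paper's route (Lemma~\ref{lem:pointwise2}) is to prove a single multiplicative pointwise bound $\rho(\pi,\bar{\sigma}) \le (c_1c_2+c_1+c_2)\,\rho(\pi,\sigma^*)$ for \emph{every} $\pi \in S$ and then sum; to get there it invokes the step $\rho(\pi,\pi^*) \le c_1\,\rho(\pi,\sigma^*)$ for each individual $\pi$, justified only by ``$\pi^*$ is a $c_1$-approximate rank aggregation.'' That step is not actually implied by the hypothesis: a $c_1$-approximation for $q$-mean rank aggregation only bounds the aggregate objective $\obj_q(S,\pi^*) \le c_1\cdot\obj_q(S,\sigma^*)$, not each distance separately (contrast this with the first meta-algorithm, where the analogous pointwise inequality genuinely holds because $\bar{\pi}$ is a nearest neighbor of $\sigma^*$). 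Your version keeps the two terms $(1+c_2)\rho(\pi,\tau)$ and $c_2\rho(\pi,\sigma^*)$ separate and aggregates with Minkowski's inequality, which needs only the aggregate guarantee $\obj_q(S,\tau) \le c_1\cdot\opt_q(S)$. So your ``genuine obstacle'' (the Minkowski step) is in fact the right fix: it yields the same $(c_1c_2+c_1+c_2)$ factor while using only what the hypotheses actually provide, and it handles $q=\infty$ uniformly. The rest (running time, the chain of triangle inequalities, using that $\sigma^*$ is fair to invoke $\mathcal{A}_2$'s guarantee) matches the paper.
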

The algorithm works as follows: Given a set $S \subseteq \sym_d$ of rankings, it first computes
$c_1$-approximate aggregate rank $\pi^*$. Next, output a $c_2$-approximate closest fair ranking
$\bar{\sigma}$, to $\pi^*$. \AS{A formal description of the algorithm follows.}
\begin{algorithm}
    \caption{Alternate Meta-algorithm for finding fair-aggregate rank.}
    \label{alg:fairmeta2}
    \KwIn{A set $S \subseteq \sym_d$ of $n$ rankings.}
    \KwOut{A $(c_1c_2+c_1+c_2)$-approximate fair aggregate rank of $S$.}
    Call $\mathcal{A}_1(S)$ to find the $c_1$-approximate aggregate rank $\pi^*$ of the input set
    $S$.\\
    Call $\mathcal{A}_2(\pi^*)$ to find the  $c_2$-approximate closest fair ranking $\bar{\sigma}$, to
    $\pi^*$.\\
    \Return{$\bar{\sigma}$}.
\end{algorithm}

It is easy to see that the running time of the algorithm is $\bigO( t_1(n) + t_2(d) + n^2\cdot
f(d))$, where $f(d)$ is the time to compute $\rho(\pi,\sigma)$ for any $\pi,\sigma \in \sym_d$,
$t_1(n)$ denotes the running time of the algorithm $\mathcal{A}_1$, and  $t_2(d)$ denotes the
running time of the algorithm $\mathcal{A}_2$. It now remains to argue about the approximation ratio
of the above algorithm. We again make a simple but crucial observation towards establishing the
approximation ratio for Algorithm~\ref{alg:fairmeta2}.

\begin{restatable}{lemma}{pointwisee} \label{lem:pointwise2}
    Given a set $S \subseteq \sym_d$ of $n$ rankings, let $\sigma^*$ be an optimal $q$-mean fair aggregated
    rank of $S$ under a distance function $\rho$. Further, let $\pi^*$ be the $c_1$-approximate
    aggregate rank of $S$ and $\bar{\sigma}$ be a $c_2$-approximate closest fair ranking to $\pi^*$,
    for some $c_1,c_1 \ge 1$. Then
    \[ \forall \pi\in S, \rho(\pi, \bar{\sigma}) \le (c_1c_1+c_1+c_2)\cdot\rho(\pi, \sigma^*). \]
\end{restatable}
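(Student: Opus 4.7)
The plan is to mirror the two-triangle-inequality structure of the proof of \autoref{lem:pointwise}, with the key difference that now both intermediary objects ($\pi^*$ and $\bar{\sigma}$) carry their own approximation slack, contributed respectively by $\mathcal{A}_1$ and $\mathcal{A}_2$.

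Fix an arbitrary $\pi \in S$. First, the metric triangle inequality gives $\rho(\pi, \bar{\sigma}) \le \rho(\pi, \pi^*) + \rho(\pi^*, \bar{\sigma})$. To control the second summand, I would invoke the approximation guarantee of $\mathcal{A}_2$: since $\sigma^*$ is a fair ranking (hence a feasible competitor for the closest-fair-ranking instance rooted at $\pi^*$), we have $\rho(\pi^*, \bar{\sigma}) \le c_2\, \rho(\pi^*, \sigma^*)$. A second triangle inequality yields $\rho(\pi^*, \sigma^*) \le \rho(\pi, \pi^*) + \rho(\pi, \sigma^*)$. Combining these three inequalities produces the intermediate estimate
\[
\rho(\pi, \bar{\sigma}) \;\le\; (1+c_2)\, \rho(\pi, \pi^*) \;+\; c_2\, \rho(\pi, \sigma^*).
\]

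To finish, I would apply the approximation guarantee of $\mathcal{A}_1$ to replace $\rho(\pi, \pi^*)$ by $c_1\, \rho(\pi, \sigma^*)$. The justification I have in mind is that $\sigma^*$, being fair, is in particular feasible for the unconstrained $q$-mean rank aggregation problem solved by $\mathcal{A}_1$, so the ranking $\pi^*$ produced by $\mathcal{A}_1$ satisfies a $c_1$-approximation relation against $\sigma^*$ on each input point. Plugging this into the displayed inequality gives
\[
\rho(\pi, \bar{\sigma}) \;\le\; (1+c_2)\, c_1\, \rho(\pi, \sigma^*) + c_2\, \rho(\pi, \sigma^*) \;=\; (c_1 c_2 + c_1 + c_2)\, \rho(\pi, \sigma^*),
\]
which is the desired bound.

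The hardest part, and the step requiring the most care, is precisely the last one: the standard guarantee of a $c_1$-approximation algorithm for $q$-mean aggregation bounds the aggregate $\obj_q(S,\pi^*)$ by $c_1 \cdot \obj_q(S,\sigma^*)$ (using that $\sigma^*$ is feasible for the unconstrained problem), and this does not immediately yield the pointwise inequality $\rho(\pi, \pi^*) \le c_1\, \rho(\pi, \sigma^*)$ for each $\pi \in S$. To close this gap rigorously I would need to interpret the hypothesis on $\mathcal{A}_1$ so that the approximation holds on a per-input basis against any fixed competitor ranking (of which $\sigma^*$ is one), parallel to how \autoref{lem:pointwise} used the fact that $\bar{\pi}$ was the nearest input to $\sigma^*$ to transfer an aggregate-type statement into a pointwise one. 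Once this pointwise interpretation is in hand, the two-triangle chain above delivers the stated $(c_1 c_2 + c_1 + c_2)$ factor for every $\pi \in S$.
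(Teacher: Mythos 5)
Your proposal is correct and follows essentially the same route as the paper's proof: the identical two triangle inequalities, the identical use of the $c_2$ guarantee with $\sigma^*$ serving as the feasible fair competitor, and the identical final substitution producing the $(c_1c_2+c_1+c_2)$ factor. The one step you flag as delicate---upgrading the aggregate guarantee of $\mathcal{A}_1$ to the pointwise bound $\rho(\pi,\pi^*)\le c_1\cdot\rho(\pi,\sigma^*)$ for every $\pi\in S$---is exactly the step the paper asserts without further justification as the first displayed inequality of its proof, so your caution identifies a looseness shared with (not created by) the paper's own argument.
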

\begin{proof}
    Since $\pi^*$ is a $c_1$-approximate rank aggregation of the input, we have that,
    \begin{equation} \label{eq:aggr}
        \rho(\pi, \pi^*) \le c_1\cdot\rho(\pi, \sigma^*).
    \end{equation}
    Since $\bar{\sigma}$ is the $c_2$-approximate closest fair rank to $\pi^*$, we have,
    \begin{equation} \label{eq:approx-close2}
        \rho(\pi^*,\bar{\sigma})\le c_2\cdot\rho(\pi^*,\sigma^*)
    \end{equation}

    So, for any $\pi\in S$ we get,
    \begin{align*}
        \rho(\pi, \bar{\sigma})
        &\le  \rho(\pi, \pi^*)+\rho(\pi^*, \bar{\sigma})  &&\text{(By the triangle inequality)}\\
        &\le  \rho(\pi, \pi^*)+c_2\cdot\rho(\pi^*,\sigma^*)
        &&\text{(By~\autoref{eq:approx-close2})}\\
        &\le  \rho(\pi, \pi^*)+
        c_2\cdot(\rho(\pi^*, \pi)+\rho(\pi, \sigma^*)) &&\text{(By the triangle inequality)}\\
                                                      &\le (1+c_2)\cdot \rho(\pi, \pi^*)+c_2\cdot\rho(\pi, \sigma^*) \\
                                                      &\le (1+c_2)\cdot c_1\cdot\rho(\pi, \sigma^*)+c_2\cdot\rho(\pi, \sigma^*)
                                                      &&\text{(By~\autoref{eq:aggr})} \\
                                                      &\le (c_1c_2+c_1+c_2)\cdot\rho(\pi, \sigma^*). \qedhere
    \end{align*}
\end{proof}
Once we have this key lemma in place, the
remaining proof of \autoref{thm:fairmeta2}, follows exactly as the proof of \autoref{thm:fairmeta}.

The above algorithm can give similar approximation guarantees as Algorithm~\ref{alg:fairmeta}, but
with potentially better running times depending on whether the rank aggregation problem is solved in
a faster way for the particular problem in consideration. For instance consider the case for
Spearman footrule. It is known that the rank aggregation problem for Spearman footrule can be solved
in $\Tilde{\bigO}(d^{2})$ time \cite{BrandLNPSS0W20}. So, using this in conjunction with
Algorithm~\ref{alg:fairmeta2} we obtain the following result.
\begin{restatable}{corollary}{SFmed} \label{cor:SFmed}
    For $q=1$, there exists an $\bigO(d^3\log d+n^2d+nd^2)$ time meta-algorithm, that finds a
    3-approximate solution to the $q$-mean \AS{strict}-fair rank aggregation problem (i.e., the fair median
    problem) under Spearman footrule metric.
\end{restatable}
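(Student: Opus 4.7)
The plan is to invoke the second meta-algorithm of Theorem~\ref{thm:fairmeta2}, specialized to the Spearman footrule metric with $q = 1$, using exact subroutines for both primitives. Since Theorem~\ref{thm:fairmeta2} gives an approximation ratio of $c_1 c_2 + c_1 + c_2$, choosing $c_1 = c_2 = 1$ immediately yields the claimed factor of $3$. The work thus reduces to identifying an exact algorithm $\mathcal{A}_1$ for unconstrained $1$-mean rank aggregation under Spearman footrule, and an exact algorithm $\mathcal{A}_2$ for the closest strict-fair ranking problem under Spearman footrule, and then bookkeeping the running times.

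For $\mathcal{A}_2$, I would plug in the exact $O(d^3\log d)$-time algorithm of Celis et al.~\cite{CelisSV18}, which is precisely the subroutine already used to derive Corollary~\ref{cor:SFAggr}; this gives $c_2 = 1$ and $t_2(d) = O(d^3 \log d)$. For $\mathcal{A}_1$, I would use the classical observation that the $q = 1$ (median) rank-aggregation problem under Spearman footrule reduces to a minimum-weight bipartite matching on a $d \times d$ cost matrix, whose entry $(i, j)$ is $\sum_{\pi \in S} |\pi^{-1}(j) - i|$ and can be built in $O(nd^2)$ time; combined with the matching algorithm of~\cite{BrandLNPSS0W20} this yields $c_1 = 1$ with $t_1(n) = O(nd^2)$. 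This is exactly why the corollary is restricted to $q = 1$: the matching-based reduction is specific to the sum objective, and no analogous exact aggregation algorithm is known under Spearman footrule for larger $q$.

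Plugging $c_1 = c_2 = 1$, $t_1(n) = O(nd^2)$, $t_2(d) = O(d^3\log d)$, and $f(d) = O(d)$ (Spearman footrule is an $\ell_1$ distance computable in linear time) into the running-time bound $O(t_1(n) + t_2(d) + n^2 f(d))$ of Theorem~\ref{thm:fairmeta2} gives total running time $O(nd^2 + d^3\log d + n^2 d)$, which matches the claim. There is no real obstacle beyond citing and assembling the two exact subroutines; the approximation guarantee is an immediate consequence of Theorem~\ref{thm:fairmeta2} applied with $c_1 = c_2 = 1$.
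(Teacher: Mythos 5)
Your proposal is correct and matches the paper's own proof essentially verbatim: both instantiate Theorem~\ref{thm:fairmeta2} with the Dwork et al.~\cite{dwork2001rank} matching reduction (solved via~\cite{BrandLNPSS0W20}) as the exact aggregation subroutine $\mathcal{A}_1$ and the Celis et al.~\cite{CelisSV18} algorithm as the exact CFR subroutine $\mathcal{A}_2$, yielding $c_1 c_2 + c_1 + c_2 = 3$ and the stated running time. No gaps.
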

\begin{proof}
In~\cite{dwork2001rank} it is shown that optimal rank aggregation under Spearman footrule can be
reduced in polynomial time to the minimum cost perfect matching in a bipartite graph. The reduction
takes $\bigO(nd^2)$ time, which is needed to compute the edge weights for the constructed bipartite
graph. Further, \cite{BrandLNPSS0W20} gives a randomized $\tilde{\bigO}(m+n^{1.5})$ time algorithm
for the minimum cost perfect bipartite matching problem for bipartite graphs with $n$ nodes and $m$
edges. The reduction in~\cite{dwork2001rank} creates an instance of the minimum cost perfect
bipartite matching problem with $O(d)$ nodes and $O(d^2)$ edges. Hence, the result of~\cite
{BrandLNPSS0W20} gives us an exact $\tilde{\bigO}(nd^2)$ time rank aggregation algorithm for
Spearman footrule, i.e., $c_1=1$ and $t_1=\tilde{\bigO}(n^2+d)$.

Since we can compute an exact closest \AS{strict}-fair ranking under Spearman footrule in $\bigO(d^3\log d)$,
we have that $c_2=1$, and $t_2=\bigO(d^3\log d)$. Moreover, we can also compute the Spearman footrule
distance in $f(d)=\bigO(d)$ time. Now plugging in these values to \autoref{thm:fairmeta2}, we get
that we can find a $3$-approximate solution to the \AS{strict}-fair median problem under Spearman footrule in $\bigO(d^3\log d+n^2d+nd^2)$ time.
\end{proof}

\subsection{Breaking below 3-factor for Ulam \label{sec:Ulambetter}}
In this section, we provide a polynomial time $(3-\varepsilon)$-approximation algorithm for the fair (\AS{for brevity we refer to strict-fairness as fairness in this section}) median problem under the Ulam metric for constantly many groups (proving~\autoref{thm:Ulambetter}). More specifically, we show the following theorem.
\begin{theorem} \label{thm:Ulambetter}
    For $q=1$, there exists a constant $\varepsilon > 0$ and an algorithm that, given a set $S \subseteq \sym_d$ of $n$ rankings of $d$ candidates where these candidates are partitioned into $g \ge 1$ groups, finds a $(3-\varepsilon)$-approximate solution to the $q$-mean fair rank aggregation problem (i.e., the fair median problem), under the Ulam metric in time $\bigO(nd^{g+2} + n^2 d \log d)$.
\end{theorem}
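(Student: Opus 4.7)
My plan is to refine Meta-algorithm~1 with a case analysis driven by how close the nearest input lies to the optimal fair median. Let $\sigma^\star$ denote an optimal fair median and let $\bar\pi\in S$ be its closest input, so $\rho(\bar\pi,\sigma^\star)\le\opt/n$. From the analysis in Lemma~\ref{lem:pointwise} with $c=1$ (using the exact CFR of Theorem~\ref{thm:ulamdp}), the output $\bar\sigma_1$ of Meta-algorithm~1 actually satisfies the sharper bound
\[
\obj_1(S,\bar\sigma_1)\le 2\,\opt+n\cdot\rho(\bar\pi,\sigma^\star).
\]
Fix a constant $\delta\in(0,1)$. If $\rho(\bar\pi,\sigma^\star)\le(1-\delta)\,\opt/n$, this immediately gives a $(3-\delta)$-approximation and we are done. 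The hard case is when $\rho(\bar\pi,\sigma^\star)\ge(1-\delta)\,\opt/n$, which I will call the \emph{concentrated} regime.

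In the concentrated regime, since $\bar\pi$ is the closest input we have $\rho(\pi,\sigma^\star)\ge(1-\delta)\opt/n$ for every $\pi\in S$, and averaging $\sum_{\pi}\rho(\pi,\sigma^\star)=\opt$ forces $\rho(\pi,\sigma^\star)\le(1+\bigO(\delta))\opt/n$ for all but a $\bigO(\delta)$-fraction of the inputs. Hence almost all inputs lie in an Ulam annulus of width $\bigO(\delta)\opt/n$ centered at $\sigma^\star$, and by the triangle inequality $\mathcal{U}(\pi_i,\pi_j)\le(2+\bigO(\delta))\opt/n$ for almost every pair. To exploit this, I would add a single extra candidate: compute an (approximate) non-fair Ulam median $\pi^\dagger$ of $S$ via a standard black box (the closest input already gives a $2$-approximation and suffices), and then compute its exact closest fair ranking $\sigma^\dagger=\sigma_{\pi^\dagger}$ via Theorem~\ref{thm:ulamdp}. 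The final algorithm returns the better of $\bar\sigma_1$ and $\sigma^\dagger$; since both steps fit within the budget of Meta-algorithm~1, the overall runtime remains $\bigO(nd^{g+2}+n^2d\log d)$.

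The main obstacle is proving that $\sigma^\dagger$ attains a $(3-\varepsilon)$-approximation in the concentrated regime. A naive application of the triangle inequality (as in Meta-algorithm~2) only yields a $5$-factor and is insufficient; instead, I would use the Ulam-specific LCS structure. In the concentrated regime, every input shares an LCS of length at least $d-(1+\bigO(\delta))\opt/n$ with $\sigma^\star$, so $\pi^\dagger$ inherits a long common backbone with both every input and with $\sigma^\star$. The saving then comes from observing that the two triangle inequalities $\rho(\pi,\sigma^\dagger)\le\rho(\pi,\pi^\dagger)+\rho(\pi^\dagger,\sigma^\dagger)$ and $\rho(\pi^\dagger,\sigma^\dagger)\le\rho(\pi^\dagger,\sigma^\star)$ cannot simultaneously be tight for a constant fraction of the inputs: a pigeonhole argument on the disagreeing positions between $\pi^\dagger$, $\sigma^\star$, and the inputs (made precise through the LCS alignment built by the DP of Theorem~\ref{thm:ulamdp}) shows that for enough $\pi$ there is a slack of $\Omega(\delta\cdot\opt/n)$ in the triangle chain, which upon summing yields the desired $\varepsilon = \varepsilon(\delta) > 0$.
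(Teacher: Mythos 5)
Your easy branch is fine and coincides with the paper's opening move (Assumption~\ref{eq:ass-far} and Equation~\ref{eq:ass-close}): if some closest fair ranking of an input is within $(2-\varepsilon)\OPT/n$ of $\sigma^\star$, Meta-algorithm~1 already wins. The gap is in your concentrated regime, and it is structural rather than presentational. First, your extra candidate is vacuous as instantiated: if $\pi^\dagger$ is the best input, then $\sigma^\dagger$ is the closest fair ranking to an element of $S$, hence a member of $S_f$, and Meta-algorithm~1 already returns the best member of $S_f$; so ``return the better of $\bar\sigma_1$ and $\sigma^\dagger$'' is just Meta-algorithm~1, and your claim collapses to ``the best element of $S_f$ is a $(3-\varepsilon)$-approximation.'' That statement is exactly what the paper is unable to prove in general. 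Second, the pigeonhole/slack step is where it fails: take inputs each obtained from $\sigma^\star$ by displacing a \emph{disjoint} set of $\OPT/n$ symbols, so that every symbol is ``lazy'' in the paper's sense and the instance sits squarely in your concentrated regime. Nothing in your argument prevents the closest fair ranking $\sigma_i$ of $\pi_i$ from sitting at distance $2\,\U(\pi_i,\sigma^\star)$ from $\sigma^\star$ via displacements disjoint from every other input's displaced set, in which case both links of your triangle chain are simultaneously tight for essentially every $\pi_j$ and there is no $\Omega(\delta\,\OPT/n)$ slack to harvest. The ``long common backbone'' observation does not by itself produce cancellation, because the missing-symbol sets $I_\pi$ can be pairwise disjoint.

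The paper's proof splits not on how the distances $\U(\pi,\sigma^\star)$ are distributed over \emph{inputs} but on how the cost is distributed over \emph{symbols} (lazy versus active, $|A|$ versus $\beta\,\OPT/n$), and these two case splits do not coincide. When few symbols are active, no element of $S_f$ is shown to beat factor $3$; instead a genuinely different second procedure is run (Algorithm~\ref{alg:RelativeOrder}): build the majority tournament on pairs of symbols, delete short cycles, topologically sort, and extend to a fair ranking by maximizing LCS with the resulting partial order; Claim~\ref{clm:CDK} shows this recovers most of $\sigma^\star$'s backbone and yields a $(1+3\beta(1+8\alpha))$-approximation (Lemma~\ref{lem:lazy-approx}). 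When many symbols are active, a clustering argument over the near inputs (Lemma~\ref{lem:set-intersect}, Claim~\ref{clm:number-cluster}) exhibits one $\tilde\sigma\in S_f$ whose missing-symbol set overlaps each clustered input's missing active symbols by at least $\eta|A|$; that overlap is precisely the quantitative slack you were hoping to extract by pigeonhole, and it is only available because the cost is concentrated on few symbols. Your proposal is missing the first ingredient entirely and only gestures at the second; repairing it would require both the symbol-level case analysis and the relative-order procedure (or a substitute that handles the spread-out-cost case).
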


In proving the above theorem, we first describe the algorithm, then provide the running time analysis, and finally analyze the approximation factor.

\paragraph*{Description of the algorithm.}We run two procedures, each of which outputs a fair ranking (candidate fair median), and then return the one (among these two candidates) with the smaller objective value. The first procedure is the one used in Corollary~\ref{cor:UlamAggr}. The second procedure is based on the relative ordering of the candidates and is very similar to that used in~\cite{CDK21} with the only difference being that now we want the output ranking to be fair (which was not required in~\cite{CDK21}). Our second procedure Algorithm~\ref{alg:RelativeOrder} is given a parameter $\alpha \in [0, 1/10]$. It first creates a directed graph $H$ with the vertex set $V(H) = [d]$ and the edge set $E(H) = \{ (a,b) \mid a <_{\pi} b\text{ for at least }(1-2\alpha) n \text{ rankings }\pi \in S \}$. If this graph $H$ is acyclic, then we are done. Otherwise, we make $H$ acyclic as follows: Iterate over all the vertices $v \in V(H)$, and in each iteration, find a shortest cycle containing $v$, and then delete all the vertices (along with all the incident edges) in that cycle. In the end, we will be left with an acyclic subgraph $\bar{H}$ (of the initial graph $H$). Then, we perform a topological sorting to find an ordering $O$ of the vertex set $V(\bar{H})$. Note, $O$ need not be a ranking of $d$ candidates (since $V(\bar{H})$ could be a strict subset of $[d]$). Let $\bar{\sigma}_{par}$ be the sequence denoting the ordering $O$. Then we find a fair ranking $\bar{\sigma}$ that maximizes the length of an LCS with $\bar{\sigma}_{par}$ and output it.


\begin{algorithm}
    \caption{Relative Order Algorithm}
    \label{alg:RelativeOrder}
    \KwIn{A set $S \subseteq \sym_d$ of $n$ rankings, $\alpha \in [0,1/10]$.}
    \KwOut{A fair ranking from $\sym_d$.}
    $H \leftarrow ([d],E)$ where $E=\{ (a,b) \mid a <_{\pi} b\text{ for at least }(1-2\alpha) n \text{ rankings }\pi \in S \}$\\
    \For{each point $v \in V(H)=[d]$ } {
        $C \leftarrow $ A shortest cycle containing $v$\\
        $H = H - V(C)$\\
    }
    $\bar{H} \leftarrow H$\\
    $\bar{\sigma}_{par} \leftarrow $ Sequence representing a topological ordering of $V(\bar{H})$\\

    Find a fair ranking $\bar{\sigma} \in \sym_d$ that maximizes the length of an LCS with $\bar{\sigma}_{par}$\\

    \Return{$\bar{\sigma}$ }
\end{algorithm}

\paragraph*{Running time analysis.}The first procedure that is used in Corollary~\ref{cor:UlamAggr} takes $\bigO(nd^{g+2} + n^2 d \log d)$ time. It follows from~\cite{CDK21}, Step 1-6 of the second procedure (Algorithm~\ref{alg:RelativeOrder}) takes time $\bigO(nd^2 + d^3) = \bigO(nd^{g+2})$ (since $g \ge 1$). To perform Step 7 of Algorithm~\ref{alg:RelativeOrder}, we use the dynamic programming described in Algorithm~\ref{alg:UlamDP}. So, this step takes $\bigO(d^{g+2})$ time. As finally, we output the fair ranking produced by the two procedures that have a smaller objective value; the total running time is $\bigO(nd^{g+2} + n^2 d \log d)$.


\paragraph*{Showing the approximation guarantee.}Although the analysis proceeds in a way similar to that in~\cite{CDK21}, it differs significantly in many places since now the output of our algorithm must be a fair ranking (not an arbitrary one). Let $\sigma^*$ be an (arbitrary) fair median (1-mean fair aggregated rank) of $S$ under the Ulam metric. Then $\OPT(S) = \sum_{\pi \in S} \U(\pi,\sigma^*)$, which for brevity we denote by $\OPT$. For any $S' \subseteq S$, let $\OPT_{S'} := \sum_{\pi \in S'}\U(\pi,\sigma^*)$. Let us take parameters $\alpha,\beta,\gamma,\varepsilon,\eta,\nu$, the value of which will be set later. Let us consider the following set of fair rankings
\[
S_f := \{\sigma \in \sym_d \mid \sigma \text{ is a fair ranking closest to }\pi \in S\}.
\]

It is worth noting that our first procedure (the algorithm used in Corollary~\ref{cor:UlamAggr}) essentially outputs a fair ranking from the set $S_f$ with the smallest objective value.

Assume that
\begin{equation}
    \label{eq:ass-far}
    \forall_{\sigma \in S_f},\; \U(\sigma,\sigma^*) > (2-\varepsilon)\OPT/n.
\end{equation}
Otherwise, let $\sigma' \in S_f$ does not satisfy the above assumption, i.e., $\U(\sigma',\sigma^*) \le (2-\varepsilon)\OPT/n$. Then
\begin{align}
    \label{eq:ass-close}
    \sum_{\pi \in S} \U(\pi,\sigma') &\le \sum_{\pi \in S} (\U(\pi,\sigma^*) + \U(\sigma^* ,\sigma')) &&\text{(By the triangle inequality)} \nonumber\\
    &= \OPT + n \cdot \U(\sigma^* ,\sigma') \nonumber\\
    &\le (3-\varepsilon)\OPT.
\end{align}
So from now on, we assume~\ref{eq:ass-far}. It then follows from the triangle inequality that
\begin{equation}
    \label{eq:ass-far-input}
    \forall_{\pi \in S},\; \U(\pi,\sigma^*) > (1-\varepsilon/2)\OPT/n.
\end{equation}
To see the above, for the contradiction's sake, let us assume it is not true. Then for some $\pi \in S$, $\U(\pi,\sigma^*) \le (1-\varepsilon/2)\OPT/n$. Let $\sigma\in S_f$ be a closest fair ranking to $\pi$ in $S$. Since $\sigma^*$ is also a fair ranking, by the definition of $S_f$, $\U(\sigma,\pi) \le \U(\pi,\sigma^*)$. Hence $\U(\sigma,\sigma^*) \le \U(\sigma,\pi) + \U(\pi,\sigma^*) \le 2 \U(\pi,\sigma^*) \le (2-\varepsilon)\OPT/n$, contradicting Assumption~\ref{eq:ass-far}.

For each $\pi \in S \cup S_f$, consider an (arbitrary) LCS $\ell_\pi$ between $\pi$ and $\sigma^*$. Let $I_\pi$ denote the set of symbols in $[d]$ that are not included in $\ell_\pi$. Note, $|I_\pi| = d - |LCS(\pi , \sigma^*)| = \U(\pi,\sigma^*)$.

For any $\pi \in S \cup S_f$ and $D \subseteq [d]$, let $I_\pi(D) := I_\pi \cap D$. For any $a \in [d]$ and $S' \subseteq S$,
\[
\co_{S'}(a) := |\{\pi \in S' \mid a \text{ is not included in }\ell_\pi\}|.
\]
When $S'=S$, for brevity, we drop the subscript $S'$. For any $D \subseteq [d]$ and $S' \subseteq S$, $\OPT_{S'}(D):=\sum_{a \in D} \co_{S'}(a)$. When $D=[d]$, for brevity, we only use $\OPT_{S'}$. Note, $\OPT = \OPT_S$.

We call a symbol $a \in [d]$ \emph{lazy} if $\co(a) \le \alpha n$; otherwise \emph{active}. Let $L$ denote the set of all lazy symbols, and $A = [d] \setminus L$ (i.e., the set of all active symbols).

Now we divide our analysis into two cases depending on the size of $A$. Informally, in Case 1, the number of active symbols is small, and thus the objective value (or $\co$) is, in some sense, distributed over many symbols. As a consequence, for most pairs of symbols, we can retrieve the relative ordering between them (as in $\sigma^*$) just by looking into their relative ordering in "the majority" of the input rankings. However, there will be "a few" pairs of symbols for which we get the wrong relative ordering (because one of them might be an active symbol and thus moved for most input rankings with respect to $\sigma^*$). This may create cycles in the graph $H$. However, since there are only a small number of active symbols, we can remove cycles iteratively from the graph $H$ and reconstruct a large subsequence of $\sigma^*$, which is $\bar{\sigma}_{par}$. Hence any completion of that subsequence is close to $\sigma^*$ in the Ulam distance. Thus, the final output (fair) ranking $\bar{\sigma}$ acts as an approximate (fair) median. On the other hand, in Case 2, the number of active symbols is large, and thus the total $\co$ is large. Then we argue that one of the inputs' closest fair ranking is better than 3-approximate median.

\textbf{Case 1: $|A| \le \beta \cdot \OPT/n$}

We use the following result from~\cite{CDK21}.

\begin{claim}[\cite{CDK21}]
\label{clm:CDK}
The set of symbols in $L\cap V(\bar{H})$ forms a common subsequence between $\bar{\sigma}_{par}$ and $\sigma^*$. Furthermore, $|L \setminus V(\bar{H})| \le \frac{4 \alpha}{1 - 4 \alpha}|A|$.
\end{claim}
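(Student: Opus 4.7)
The plan is to prove the two parts of the claim separately, in each case leveraging the defining property of lazy vertices (that they appear in $\ell_\pi$ for the vast majority of input rankings). For the first part, that $L \cap V(\bar H)$ forms a common subsequence between $\bar{\sigma}_{par}$ and $\sigma^*$, I would take any two lazy vertices $a, b \in L \cap V(\bar H)$ with $a <_{\sigma^*} b$ and show that the directed edge $(a,b)$ lies in $E(\bar H)$. Since $a$ and $b$ are each lazy, each appears in $\ell_\pi$ for at least $(1-\alpha)n$ rankings, so they appear together in $\ell_\pi$ for at least $(1-2\alpha)n$ rankings; for every such $\pi$ the order $a <_\pi b$ matches $\sigma^*$ because an LCS preserves the relative order of its elements. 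Hence $(a,b) \in E(H)$, and since both endpoints survive the cycle-removal process, the edge survives into $\bar H$ as well. The topological sort therefore places $a$ before $b$ in $\bar{\sigma}_{par}$, so $L \cap V(\bar H)$ appears in $\bar{\sigma}_{par}$ in exactly the same order as in $\sigma^*$, which is precisely the common-subsequence claim.

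For the second part, the bound on $|L \setminus V(\bar H)|$, I would prove the stronger per-cycle estimate that in every shortest cycle $C$ removed by the algorithm, $|V(C)\cap L| \le \frac{4\alpha}{1-4\alpha}\, |V(C)\cap A|$. Since the cycles removed across the for-loop are vertex-disjoint, summing this inequality over them immediately yields the global bound $|L \setminus V(\bar H)| \le \frac{4\alpha}{1-4\alpha}|A|$. The per-cycle estimate rests on two ingredients. First, a cycle $v_0 \to \cdots \to v_{k-1} \to v_0$ in $H$ cannot be realized as a linear order, so every input $\pi$ must disagree with at least one of its edges; yet each edge of $H$ is disagreed with by at most $2\alpha n$ rankings. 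This yields the length bound $k \ge 1/(2\alpha)$ and, more importantly, lets one upper bound the total edge-disagreement count along $C$ by a sum of $\co$-values of cycle vertices, which are small for lazy vertices. Second, a shortest cycle admits no local shortcut: if three consecutive cycle vertices were lazy, the argument of the first part produces an edge in $H$ skipping the middle one, contradicting minimality; a more careful version of the same argument (using that lazy-lazy edges have disagreements confined to rankings that drop one of the two endpoints from the LCS) rules out any lazy substructure that is long relative to its neighbouring active vertices. Combining these two ingredients gives the claimed ratio.

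The main obstacle is tightening the constant. A naive structural bound on maximal lazy runs in a shortest cycle easily yields only $|V(C)\cap L| \le 2\,|V(C)\cap A|$, which for small $\alpha$ is much weaker than the target $\frac{4\alpha}{1-4\alpha}$; extracting the right constant forces one to combine the shortcut structure above with the quantitative edge-disagreement count from the first ingredient, keeping careful track of whether each edge of $C$ is consistent or inconsistent with $\sigma^*$. Since this claim is taken verbatim from~\cite{CDK21}, the cleanest way to close the proof in the paper is to invoke the corresponding cycle-pruning lemma of that work as a black box rather than redoing the delicate arithmetic here.
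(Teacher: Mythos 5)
The paper does not actually prove this claim: it is imported wholesale from~\cite{CDK21} (note the citation in the claim's header), so your bottom-line recommendation --- invoke the cycle-pruning lemma of~\cite{CDK21} as a black box --- coincides exactly with what the paper does. Your self-contained argument for the first part is correct and is the standard one: two lazy symbols $a,b$ both lie on $\ell_\pi$ for at least $(1-2\alpha)n$ rankings, an LCS preserves relative order, so the $\sigma^*$-consistent edge $(a,b)$ is in $E(H)$ and survives into the induced subgraph $\bar{H}$, whence the topological order of $\bar{\sigma}_{par}$ agrees with $\sigma^*$ on $L\cap V(\bar{H})$. For the second part you correctly identify both ingredients (every ranking violates at least one edge of any cycle while each edge is violated by at most $2\alpha n$ rankings, giving cycle length $k\ge 1/(2\alpha)$; and the chord/shortcut argument for lazy pairs), and you honestly flag that the ``no long lazy run'' version of the shortcut argument only yields a ratio of about $2$ rather than $\frac{4\alpha}{1-4\alpha}$. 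The missing quantitative step is that the shortcut argument can be pushed to show each deleted shortest cycle $C$ contains at most \emph{two} lazy symbols in total (for any two lazy symbols on $C$ the $\sigma^*$-consistent chord between them exists in $H$, and with three such symbols one cannot avoid a strictly shorter cycle still containing the root vertex $v$); combined with $k\ge 1/(2\alpha)$ this gives $|C\cap A|\ge k-2\ge\frac{1-4\alpha}{2\alpha}$ and hence $|C\cap L|/|C\cap A|\le\frac{4\alpha}{1-4\alpha}$ per cycle, which sums over the vertex-disjoint deleted cycles exactly as you propose. Since you defer to the citation for precisely this arithmetic, your proposal is acceptable and in fact more informative than the paper's treatment.
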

 Using the above claim, we show the following lemma.

 \begin{lemma}
 \label{lem:lazy-approx}
 Consider an $\alpha \in [0,1/10]$ and $\beta \in (0,1)$. Given an input set $S \subseteq \sym_d$ of size $n$, let the set of active symbols $A$ be of size at most $\beta \cdot \OPT/n$. Then on input $S, \alpha$, Algorithm~\ref{alg:RelativeOrder} outputs a fair ranking $\bar{\sigma}$ that is an $(1+3 \beta(1+8\alpha))$-approximate fair median (1-mean fair aggregated rank) of $S$.
 \end{lemma}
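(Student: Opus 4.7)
The plan is to bound the pointwise distance $\U(\sigma^*, \bar{\sigma})$ between the optimal fair median $\sigma^*$ and the algorithm's output $\bar{\sigma}$, and then apply the triangle inequality $\U(\pi, \bar{\sigma}) \le \U(\pi, \sigma^*) + \U(\sigma^*, \bar{\sigma})$ to every $\pi \in S$ and sum, obtaining $\sum_{\pi \in S} \U(\pi, \bar{\sigma}) \le \OPT + n \cdot \U(\sigma^*, \bar{\sigma})$. So the whole argument reduces to proving $\U(\sigma^*, \bar{\sigma}) \le (2+8\alpha)\beta\,\OPT/n$.

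To bound $\U(\sigma^*, \bar{\sigma}) = d - |\mathsf{LCS}(\bar{\sigma}, \sigma^*)|$, I would invoke Claim~\ref{clm:CDK}, which gives that $L \cap V(\bar{H})$ is a common subsequence of $\bar{\sigma}_{par}$ and $\sigma^*$ (in the order induced by $\bar{\sigma}_{par}$) and that $|L \setminus V(\bar{H})| \le \tfrac{4\alpha}{1-4\alpha}|A|$. Together these yield $|L \cap V(\bar{H})| \ge d - \tfrac{|A|}{1-4\alpha}$. Since $\sigma^*$ is itself a fair ranking and Algorithm~\ref{alg:RelativeOrder} returns a fair ranking maximizing the LCS with $\bar{\sigma}_{par}$ in its last step, we get $|\mathsf{LCS}(\bar{\sigma}, \bar{\sigma}_{par})| \ge |\mathsf{LCS}(\sigma^*, \bar{\sigma}_{par})| \ge |L \cap V(\bar{H})|$.

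The key technical step is lifting this bound from the common anchor $\bar{\sigma}_{par}$ to a lower bound on $|\mathsf{LCS}(\bar{\sigma}, \sigma^*)|$ itself. Let $T$ be an LCS of $\bar{\sigma}$ and $\bar{\sigma}_{par}$ of size at least $|L \cap V(\bar{H})|$, and define $T' := T \cap L$. Because $T$ is a subsequence of $\bar{\sigma}_{par}$, we have $T \subseteq V(\bar{H})$, so $|T \setminus L| \le |A \cap V(\bar{H})| \le |A|$, and therefore $|T'| \ge |L \cap V(\bar{H})| - |A|$. The set $T'$ lies in $L \cap V(\bar{H})$, whose order inside $\sigma^*$ matches its order inside $\bar{\sigma}_{par}$ by Claim~\ref{clm:CDK}; meanwhile $T'$ inherits its order inside $\bar{\sigma}$ from $T$, which is also the order in $\bar{\sigma}_{par}$. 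Hence $T'$ is a common subsequence of $\bar{\sigma}$ and $\sigma^*$, giving $\U(\bar{\sigma}, \sigma^*) \le d - |T'| \le |A|\bigl(1 + \tfrac{1}{1-4\alpha}\bigr)$.

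Finally, for $\alpha \in [0, 1/10]$ one checks $\tfrac{1}{1-4\alpha} \le 1+8\alpha$, so $\U(\bar{\sigma}, \sigma^*) \le (2+8\alpha)|A| \le (2+8\alpha)\beta\,\OPT/n$, and plugging into the triangle-inequality sum gives $\sum_{\pi \in S} \U(\pi, \bar{\sigma}) \le \bigl(1 + (2+8\alpha)\beta\bigr)\OPT \le \bigl(1 + 3\beta(1+8\alpha)\bigr)\OPT$, as claimed. The main obstacle is the order-preservation argument in the third paragraph: one must simultaneously use that $\bar{\sigma}$'s near-maximal LCS with $\bar{\sigma}_{par}$ and $\sigma^*$'s near-maximal LCS with $\bar{\sigma}_{par}$ can be intersected (after restricting to the lazy-and-retained symbols) to produce a genuine common subsequence of $\bar{\sigma}$ and $\sigma^*$; everything else is routine estimation from Claim~\ref{clm:CDK} and the hypothesis $|A| \le \beta\,\OPT/n$.
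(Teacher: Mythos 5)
Your proposal is correct and follows essentially the same route as the paper: anchor both $\bar{\sigma}$ and $\sigma^*$ to $\bar{\sigma}_{par}$ via Claim~\ref{clm:CDK}, use the fact that $\bar{\sigma}$ maximizes the LCS with $\bar{\sigma}_{par}$ among fair rankings, intersect with the lazy symbols to get a common subsequence of $\bar{\sigma}$ and $\sigma^*$, and finish with the triangle inequality summed over $S$. Your intersection argument is in fact slightly tighter than the paper's (you lose $|A|$ once where the paper's Equation~(\ref{eq:LCSfair}) loses $2|A|$), but both land within the claimed $(1+3\beta(1+8\alpha))$ factor.
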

 \begin{proof}
First note, by~\autoref{clm:CDK}, the set of symbols in $L\cap V(\bar{H})$ forms a common subsequence between $\bar{\sigma}_{par}$ and $\sigma^*$. Now since $\sigma^*$ is a fair ranking, the length of an LCS between $\bar{\sigma}_{par}$ and $\bar{\sigma}$ must be at least $|L \cap V(\bar{H})| - |V(\bar{H}) \setminus L|$. Moreover,
\begin{equation}
    \label{eq:LCSfair}
    |LCS(\bar{\sigma},\sigma^*)| \ge |L \cap V(\bar{H})| - 2 |V(\bar{H}) \setminus L| \ge |L \cap V(\bar{H})| - 2 |A|.
\end{equation}
(Note, since a fair ranking $\sigma^*$ exists, it is always possible to perform Step 7 of Algorithm~\ref{alg:RelativeOrder} to output a fair ranking $\bar{\sigma}$.)


 \begin{align*}
     \obj(S,\bar{\sigma})&=\sum_{\pi \in S} \U(\pi , \bar{\sigma})\\
     &\le \sum_{\pi \in S} (\U(\pi, \sigma^*) + \U(\sigma^* , \bar{\sigma})) &&\text{(By the triangle inequality)}\\
     &= \OPT + n \cdot (d - |LCS(\sigma^* , \bar{\sigma})|)\\
     & \le \OPT + n \cdot (d - |L \cap V(\bar{H})| +  2|A|)&&\text{(By~\autoref{eq:LCSfair})}\\
     &= \OPT + n \cdot (|A| + |L| - |L \cap V(\bar{H})| + 2|A|)\\
     &= \OPT + n \cdot (3|A| + |L \setminus V(\bar{H})|)\\
     & \le \OPT + n \cdot \frac{3}{1-4 \alpha}|A|&&\text{(By~\autoref{clm:CDK})}\\
     & \le \OPT + \frac{3\beta}{1-4\alpha} \OPT &&\text{(Since } |A| \le \beta \cdot \OPT/n \text{)}\\
     &\le (1+3\beta(1+8 \alpha)) \OPT &&\text{(Since }\alpha\in [0,1/10]\text{)}.
 \end{align*}
 \end{proof}

\textbf{Case 2: $|A| > \beta \cdot \OPT/n$}

Recall, we consider parameters $\alpha,\beta,\gamma,\varepsilon,\eta,\nu$, the value of which will be set later. Let us partition $S$ into the following sets of \emph{near} and \emph{far} rankings: Let $N:=\{\pi \in S \mid \U(\pi,\sigma^*) < (1 + \varepsilon/\alpha) \OPT/n\}$, and $F:= S \setminus N$.
\begin{claim}[\cite{CDK21}]
\label{clm:CDK-high}
\begin{align}
    &\OPT_N \ge (1-\alpha/2)(1-\varepsilon/2)\OPT.\label{eq:near-opt}\\
    &\OPT_N(A) \ge \frac{\alpha n}{2} |A| \ge \frac{\alpha \beta}{2} \OPT_N\label{eq:near-opt-active}
\end{align}
\end{claim}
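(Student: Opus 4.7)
The plan is to prove both inequalities by carefully combining the two lower bounds on $\U(\pi,\sigma^*)$ available for rankings in $S$: the universal lower bound $(1-\varepsilon/2)\OPT/n$ from Eq.~\ref{eq:ass-far-input}, and the stronger bound $(1+\varepsilon/\alpha)\OPT/n$ enjoyed by rankings in $F$ by definition. The crux is to show that only a tiny fraction of the rankings can be far, specifically that $|F| \le \alpha n/2$; once this is in hand, both assertions drop out easily.

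I would first write
\begin{equation*}
    \OPT \;=\; \OPT_N + \OPT_F \;\ge\; |N|\,(1-\varepsilon/2)\tfrac{\OPT}{n} + |F|\,(1+\varepsilon/\alpha)\tfrac{\OPT}{n},
\end{equation*}
using Eq.~\ref{eq:ass-far-input} on $N$ and the definition of $F$. Dividing by $\OPT$, setting $y := |F|/n$, and simplifying yields
$\varepsilon/2 \ge y\,\varepsilon\,(\alpha+2)/(2\alpha)$, so $y \le \alpha/(\alpha+2) \le \alpha/2$. This is the one non-trivial algebraic step; the rest follows mechanically.

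For Eq.~\ref{eq:near-opt} I then bound
\begin{equation*}
    \OPT_N \;\ge\; |N|(1-\varepsilon/2)\tfrac{\OPT}{n} \;=\; (1-y)(1-\varepsilon/2)\,\OPT \;\ge\; (1-\alpha/2)(1-\varepsilon/2)\,\OPT,
\end{equation*}
which is exactly the first inequality.

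For Eq.~\ref{eq:near-opt-active} I argue symbol by symbol: for any active $a \in A$, $\co_N(a) = \co(a) - \co_F(a) \ge \co(a) - |F| > \alpha n - \alpha n/2 = \alpha n/2$, since by definition of $A$ we have $\co(a) > \alpha n$. Summing over $a \in A$ gives $\OPT_N(A) \ge (\alpha n/2)|A|$. Since we are in Case~2, $|A| > \beta\,\OPT/n$, so $\OPT_N(A) \ge (\alpha\beta/2)\,\OPT \ge (\alpha\beta/2)\,\OPT_N$ as $\OPT_N \le \OPT$. The main obstacle is the first algebraic step establishing $|F| \le \alpha n /2$; after that both parts are routine, but that bound is crucial and would be easy to miss without combining both sides of Eq.~\ref{eq:ass-far-input} and the definition of $F$ simultaneously.
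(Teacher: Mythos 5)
Your proof is correct. The paper itself gives no argument for Claim~\ref{clm:CDK-high} --- it is imported verbatim from \cite{CDK21} --- so there is no in-paper proof to compare against; your derivation is a valid self-contained reconstruction. The key step, bounding $|F|/n \le \alpha/(\alpha+2) \le \alpha/2$ by playing the universal lower bound of Eq.~\ref{eq:ass-far-input} on $N$ against the defining lower bound $(1+\varepsilon/\alpha)\OPT/n$ on $F$ inside the decomposition $\OPT = \OPT_N + \OPT_F$, is exactly the right mechanism, and it legitimately requires $\varepsilon>0$ and $\alpha>0$ (both guaranteed by the parameter ranges fixed later in the section). The per-symbol bound $\co_N(a) \ge \co(a) - |F| > \alpha n - \alpha n/2$ for active $a$ then gives $\OPT_N(A) \ge \tfrac{\alpha n}{2}|A|$, and the final inequality correctly invokes the Case~2 hypothesis $|A| > \beta\,\OPT/n$ together with $\OPT_N \le \OPT$; the claim is indeed only used under that hypothesis, so this is legitimate. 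No gaps.
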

Let $R:= \{\pi \in N \mid |I_\pi(A)| \ge (1-\gamma)\frac{\alpha}{2} |A|\}$. Then
\begin{align}
    \label{eq:opt-non-R}
    \OPT_{N \setminus R}(A) &\le (1-\gamma) \frac{\alpha}{2} |A|\cdot |N \setminus R|\nonumber\\
    &\le (1-\gamma)\OPT_N(A)&&\text{(By~\autoref{eq:near-opt-active})}.
\end{align}
As a consequence, we get that
\begin{equation}
    \label{eq:opt-R}
    \OPT_R(A)\ge \gamma \OPT_N(A).
\end{equation}

Further partition $R$ into $R_1,\dots,R_r$ for $r=\lceil \log_{1+\nu} (2/(\alpha - \alpha \gamma)) \rceil $ as follows
\begin{equation}
    \label{eq:Ri}
    R_i := \{\pi \in R \mid (1+\nu)^{i-1} (1-\gamma) \frac{\alpha}{2} |A| < |I_\pi (A)| \le (1+\nu)^i (1-\gamma) \frac{\alpha}{2} |A|\}.
\end{equation}

By an averaging argument, there exists $i^* \in [r]$ such that for $R^* = R_{i^*}$,
\begin{equation}
    \label{eq:R*}
    \OPT_{R^*}(A) \ge \OPT_R(A)/r \ge \frac{\gamma}{r} \OPT_N(A)
\end{equation}
where the last inequality follows from~\autoref{eq:opt-R}.

Let us consider $\eta = \frac{1}{2} \left( (1+\nu)^{i^* - 1} (1-\gamma) \alpha/2 \right)^2$. For each $\pi \in S$, let $\sigma(\pi)$ be the closest fair ranking that is in the set $S_f$. Next, consider the following procedure to segregate $R^*$ into a set of clusters with $\mathcal{C}$ denoting the set of cluster centers. (We emphasize that the following clustering step is only for the sake of analysis.)
\begin{enumerate}
    \item Initialize $\mathcal{C} \leftarrow \emptyset$.
    \item Iterate over all $\pi \in R^*$ (in the non-decreasing order of $|I_\pi(L)|$)
    \begin{enumerate}
        \item If for all $\pi' \in \mathcal{C}$, $|I_{\sigma(\pi')} \cap I_\pi (A)| < \eta |A|$, then add $\pi$ in $\mathcal{C}$. Also, create a cluster $C_\pi \leftarrow \{ \pi \}$.
        \item Else pick some $\pi' \in \mathcal{C}$ (arbitrarily) such that $|I_{\sigma(\pi')} \cap I_\pi (A)| \ge \eta |A|$ and add $\pi$ in $C_{\pi'}$.
    \end{enumerate}
\end{enumerate}

Since we process all $\pi \in R^*$ in the non-decreasing order of $|I_\pi(L)|$, it is straightforward to see
\begin{equation}
    \label{eq:cluster-non-decrease}
    \forall_{\pi' \in \mathcal{C}},\; \forall_{\pi \in C_{\pi'}},\; |I_{\pi'}(L)| \le |I_{\pi}(L)|.
\end{equation}

Next, we use the following simple combinatorial lemma from~\cite{CDK21}.
\begin{lemma}[\cite{CDK21}]
\label{lem:set-intersect}
For any $c,d \in \mathbb{N}$ and $\eta \in (0,\frac{1}{2c^2}]$, any family $\mathcal{F}$ of subsets of $[d]$ where
\begin{itemize}
    \item Every subset $I \in \mathcal{F}$ has size at least $n/c$, and
    \item For any two $I \ne J \in \mathcal{F}$, $|I \cap J| \le \eta d$,
\end{itemize}
has size at most $2c$.
\end{lemma}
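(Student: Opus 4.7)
My plan is to prove this lemma by a short counting argument using the first-order Bonferroni (inclusion-exclusion) inequality, which I expect to be the ``engine'' of the whole argument. Let $m := |\mathcal{F}|$; I will argue by contradiction by assuming $m \geq 2c+1$ and deriving a numerical impossibility.

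The starting point is the standard Bonferroni bound
\[
\Bigl|\bigcup_{I \in \mathcal{F}} I\Bigr| \;\geq\; \sum_{I \in \mathcal{F}} |I| \;-\; \sum_{\{I,J\} \subseteq \mathcal{F},\, I \ne J} |I \cap J|.
\]
The left-hand side is at most $d$ (the universe size). On the right-hand side I substitute the two structural hypotheses: each $|I| \geq d/c$ and every pairwise intersection $|I \cap J|$ is at most $\eta d$. This reduces the inequality to
\[
d \;\geq\; m \cdot \frac{d}{c} \;-\; \binom{m}{2} \cdot \eta d,
\]
and after dividing through by $d$,
\[
1 \;\geq\; \frac{m}{c} - \frac{m(m-1)}{2}\,\eta.
\]

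Finally I invoke the hypothesis $\eta \le \tfrac{1}{2c^2}$ and substitute the assumed value $m = 2c+1$. The right-hand side then evaluates to
\[
\frac{2c+1}{c} - \frac{(2c+1)(2c)}{2}\cdot\frac{1}{2c^2} \;=\; \Bigl(2 + \tfrac{1}{c}\Bigr) - \Bigl(1 + \tfrac{1}{2c}\Bigr) \;=\; 1 + \tfrac{1}{2c},
\]
which is strictly greater than $1$, yielding the desired contradiction. Hence $|\mathcal{F}| \leq 2c$.

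The argument is essentially routine once Bonferroni is in place. The only mild subtlety is noticing that the quadratic ``cost'' $\binom{m}{2}\eta$ remains small enough to be swamped by the linear ``gain'' $m/c$ exactly because of the threshold $\eta \le 1/(2c^2)$; this is the reason the hypothesis has that precise form, and it is why the bound $2c$ (rather than something larger) is the right target. I do not anticipate any further obstacle.
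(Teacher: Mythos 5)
Your proof is correct. The paper itself does not prove this lemma --- it is imported verbatim from \cite{CDK21} --- and the proof there is exactly the truncated inclusion--exclusion (Bonferroni) counting argument you give: bound the union by $d$ from above and by $\sum_I |I| - \sum_{I\ne J}|I\cap J|$ from below, then let the linear gain $m/c$ beat the quadratic cost $\binom{m}{2}\eta$. Two small points worth noting. First, the hypothesis as printed says each set has size at least $n/c$, which is a typo for $d/c$ (the sets live in $[d]$ and $n$ plays no role); you silently and correctly read it as $d/c$. Second, since $f(m)=m/c-\binom{m}{2}\eta$ is a downward parabola, verifying $f(2c+1)>1$ does not by itself rule out all $m\ge 2c+1$; you should add the one-line remark that if $|\mathcal{F}|\ge 2c+1$ one applies the inequality to a subfamily of size exactly $2c+1$, which still satisfies both hypotheses. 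With that fix the argument is complete.
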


We use the above lemma to prove the following.
\begin{claim}
\label{clm:number-cluster}
For any $\alpha \in (0,1)$ and $\gamma \in (0,0.5)$, $|\mathcal{C}| \le 8/\alpha$.
\end{claim}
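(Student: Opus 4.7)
The plan is to invoke Lemma~\ref{lem:set-intersect} on a family of subsets of $A$ indexed by the cluster centers. Let $\tau := (1+\nu)^{i^*-1}(1-\gamma)\alpha/2$ and $c := 1/\tau$. Since $\nu \ge 0$, $i^* \ge 1$, and $\gamma < 1/2$, we have $\tau > \alpha/4$, and hence $c < 4/\alpha$. Moreover, by the definition of $\eta$, $\eta = \tau^2/2 = 1/(2c^2)$, which exactly matches the hypothesis of the lemma. Thus, exhibiting a family $\mathcal{F}$ of $|\mathcal{C}|$ subsets of $A$, with each set of size at least $|A|/c$ and pairwise intersections at most $\eta|A|$, will immediately yield $|\mathcal{C}| = |\mathcal{F}| \le 2c < 8/\alpha$.

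The family I would consider is $\mathcal{F} := \{I_\pi(A) \mid \pi \in \mathcal{C}\}$. The size lower bound is immediate: every $\pi \in \mathcal{C} \subseteq R^*$ satisfies $|I_\pi(A)| > \tau|A| = |A|/c$ by~\eqref{eq:Ri}.

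The main step is establishing the pairwise intersection bound $|I_{\pi'}(A) \cap I_{\pi''}(A)| \le \eta|A|$ for distinct $\pi', \pi'' \in \mathcal{C}$. Assuming $\pi'$ was added to $\mathcal{C}$ before $\pi''$, the clustering rule ensures $|I_{\sigma(\pi')} \cap I_{\pi''}(A)| < \eta|A|$ (otherwise $\pi''$ would have joined $C_{\pi'}$). To convert this into the desired bound, I would compare the LCS-complements $I_{\pi'}$ and $I_{\sigma(\pi')}$ directly: since a single character move changes the LCS-complement with $\sigma^*$ by at most one symbol, and $\sigma(\pi')$ is the closest fair ranking to $\pi'$ (so $\U(\pi',\sigma(\pi')) \le \U(\pi',\sigma^*)$, using that $\sigma^*$ is itself fair), the two complements can be chosen so as to differ in at most $2\U(\pi',\sigma^*)$ symbols. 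Combined with $\pi' \in N$, yielding $\U(\pi',\sigma^*) < (1+\varepsilon/\alpha)\OPT/n$, and the Case~2 assumption $|A| > \beta\,\OPT/n$, the resulting discrepancy can be absorbed into the $\eta|A|$ budget provided the parameters $\alpha, \beta, \gamma, \varepsilon, \nu$ are set appropriately (as will be done at the end of the section).

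Applying Lemma~\ref{lem:set-intersect} to $\mathcal{F}$ then yields $|\mathcal{C}| \le 2c < 8/\alpha$, completing the proof. The main obstacle is translating the asymmetric clustering condition (which mixes $I_{\sigma(\pi')}$ with $I_{\pi''}$) into a symmetric pairwise intersection bound on the chosen family; doing so requires a careful selection of the specific LCSs underlying each $I_\pi$, and makes essential use of the triangle inequality for the Ulam metric together with Assumption~\ref{eq:ass-far}.
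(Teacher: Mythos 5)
Your overall strategy matches the paper's: both invoke Lemma~\ref{lem:set-intersect} with $c\approx 1/\tau$ for $\tau=(1+\nu)^{i^*-1}(1-\gamma)\alpha/2$ and $\eta=\tau^2/2$, use the size lower bound from \autoref{eq:Ri}, and conclude $|\mathcal{C}|\le 2c\le 8/\alpha$. You have also correctly put your finger on the one genuinely delicate point, which the paper's own (very terse) proof does not spell out: the clustering rule only controls the \emph{asymmetric} quantities $|I_{\sigma(\pi')}\cap I_{\pi''}(A)|$, whereas the lemma needs small \emph{pairwise} intersections within a single family.

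However, your proposed repair does not survive the arithmetic. You bound $|I_{\pi'}(A)\cap I_{\pi''}(A)|\le |I_{\sigma(\pi')}\cap I_{\pi''}(A)|+|I_{\pi'}(A)\setminus I_{\sigma(\pi')}|$ and hope to absorb the second term into the $\eta|A|$ budget. But nothing in the construction prevents $I_{\pi'}(A)$ from being essentially disjoint from $I_{\sigma(\pi')}$, so that term can be as large as $|I_{\pi'}(A)|$ itself, which by \autoref{eq:Ri} exceeds $\tau|A|$, while the budget is only $\eta|A|=\tau^2|A|/2\le\tau|A|/2$; in the worst case $i^*=1$ the budget is at most $\alpha^2|A|/8$ while the slack is at least $\alpha|A|/4$, a mismatch of order $1/\alpha$ that no setting of $\varepsilon,\beta,\gamma,\nu$ can close. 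The triangle-inequality/character-move bound does not help either: $\U(\pi',\sigma(\pi'))\le\U(\pi',\sigma^*)$ only shows the two LCS-complements differ in $O(\U(\pi',\sigma^*))$ symbols, and by \autoref{eq:ass-far-input} together with $\OPT\ge\alpha n|A|$ this is already at least $(1-\varepsilon/2)\alpha|A|$, again far above $\eta|A|$. So the symmetric intersection hypothesis of Lemma~\ref{lem:set-intersect} is not established along this route and the lemma cannot be applied to $\{I_\pi(A)\}_{\pi\in\mathcal{C}}$ as you set it up. For the record, the paper does not attempt your symmetrization at all: its proof simply cites the lower bounds $|I_{\sigma(\pi')}|>(2-\varepsilon)\alpha|A|$ (from Assumption~\ref{eq:ass-far}) and $|I_\pi(A)|>\tau|A|$ and asserts that the lemma applies directly to the sets produced by the clustering rule.
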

\begin{proof}
Recall, by definition, for any $\pi' \in \mathcal{C}$, $\sigma(\pi') \in S_f$. By Assumption~\ref{eq:ass-far}, for any $\sigma \in S_f$,
\[
|I_{\sigma}| > (2-\varepsilon) \OPT/n \ge (2-\varepsilon)\alpha |A|
\]
where the last inequality follows since by the definition of active symbols, $\OPT \ge \alpha n |A|$. Also, by~\autoref{eq:Ri}, for each $\pi \in R^*$, $|I_\pi (A)| \ge (1+\nu)^{i^*-1} (1-\gamma) \frac{\alpha}{2} |A|$. Then it directly follows from~\autoref{lem:set-intersect}, $|\mathcal{C}|\le 2 \left \lceil \frac{1}{(1+\nu)^{i^*-1} (1-\gamma) \frac{\alpha}{2}}\right \rceil \le 8/\alpha$ (for $\gamma < 0.5$).
\end{proof}

Since $C_{\pi'}$'s create a partitioning of the set $R^*$, by averaging,
\begin{align}
    \label{eq:large-cluster-A}
    \exists_{\pi' \in \mathcal{C}},\; \OPT_{C_{\pi'}}(A) & \ge \frac{\OPT_{R^*}(A)}{ |\mathcal{C}| }\nonumber\\
    & \ge \frac{\alpha \gamma}{8r} \OPT_N(A)&&\text{(By~\autoref{eq:R*} and~\autoref{clm:number-cluster})}.
\end{align}

From now on, fix a $\tilde{\pi} \in \mathcal{C}$ that satisfies the above. Then,
\begin{align}
    \label{eq:large-cluster}
    \OPT_{C_{\tilde{\pi}}} \ge \OPT_{C_{\tilde{\pi}}} (A) &\ge \frac{\alpha \gamma}{8r} \OPT_N (A)&&\text{(By~\autoref{eq:large-cluster-A})}\nonumber\\
    &\ge \frac{\alpha^2 \beta \gamma}{16r} \OPT_N&&\text{(By~\autoref{eq:near-opt-active})}.
\end{align}

Note, $\OPT_{C_{\tilde{\pi}}}(L) + \OPT_{C_{\tilde{\pi}}}(A) = \OPT_{C_{\tilde{\pi}}} \le \OPT_N$, and thus,
\begin{equation}
    \label{eq:large-cluster-G}
    \OPT_{C_{\tilde{\pi}}} (L) \le \left( 1 - \frac{\alpha^2 \beta \gamma}{16r} \right) \OPT_{C_{\tilde{\pi}}}.
\end{equation}

Let $\tilde{\sigma} = \sigma(\tilde{\pi})$. Observe, since $\U(\tilde{\pi} , \tilde{\sigma}) \le \U(\tilde{\pi} , \sigma^*)$ (recall, $\sigma^*$ is a fair ranking and $\tilde{\sigma}$ is a closest fair ranking to $\tilde{\pi}$), by the triangle inequality,
\begin{equation}
    \label{eq:close-fair-tilde}
    \U(\sigma^* , \tilde{\sigma}) \le 2 \U(\sigma^* , \tilde{\pi}).
\end{equation}

Next, take any far ranking $\pi \in F$.
\begin{align*}
    \U(\pi, \tilde{\sigma}) &\le \U(\pi , \sigma^*) + \U(\sigma^*, \tilde{\sigma})&&\text{(By the triangle inequality)}\\
    &\le \U(\pi,\sigma^*) + 2  \U(\sigma^* , \tilde{\pi})&&\text{(By~\autoref{eq:close-fair-tilde})}\\
    &\le 3 \U(\pi ,\sigma^*)&&\text{(Since } \pi \in F \text{ and }\tilde{\pi} \in N \text{)}.
\end{align*}
Hence, we get that
\begin{equation}
    \label{eq:approx-far}
    \forall_{\pi \in F},\; \U(\pi, \tilde{\sigma}) \le 3 \U(\pi,\sigma^*).
\end{equation}

Then consider any near ranking $\pi \in N$.
\begin{align*}
    \U(\pi,\tilde{\sigma}) & \le \U(\pi , \sigma^*) + \U(\sigma^*, \tilde{\sigma})&&\text{(By the triangle inequality)}\\
    &\le \U(\pi,\sigma^*) + 2  \U(\sigma^* , \tilde{\pi})&&\text{(By~\autoref{eq:close-fair-tilde})}\\
    &\le \U(\pi,\sigma^*) + 2 (1+\varepsilon/\alpha)\OPT/n &&\text{(Since }\tilde{\pi} \in N\text{)}\\
    &\le \U(\pi,\sigma^*) + 2 \frac{1+\varepsilon/\alpha}{1-\varepsilon/2} \U(\pi,\sigma^*)&&\text{(By Assumption~\ref{eq:ass-far-input})}\\
    &\le \frac{3 + (4/\alpha - 1) \varepsilon/2}{1-\varepsilon/2} \U(\pi,\sigma^*).
\end{align*}

Hence, we get that
\begin{equation}
    \label{eq:approx-near}
    \forall_{\pi \in N},\; \U(\pi, \tilde{\sigma}) \le \frac{3 + (4/\alpha - 1) \varepsilon/2}{1-\varepsilon/2} \U(\pi,\sigma^*).
\end{equation}

Lastly, consider any $\pi \in C_{\tilde{\pi}}$. Note,
\begin{align*}
    \U(\pi,\tilde{\sigma}) &\le  |I_\pi| + |I_{\tilde{\sigma}}| - |I_\pi \cap I_{\tilde{\sigma}}|\\
    &\le |I_\pi| + |I_{\tilde{\sigma}}| - |I_\pi(A) \cap I_{\tilde{\sigma}}|\\
    &\le |I_\pi| + |I_{\tilde{\sigma}}| - \eta |A|&&\text{(By the construction)}.
\end{align*}
Hence,
\begin{align}
    \label{eq:sum-cluster}
    \sum_{\pi \in C_{\tilde{\pi}}} \U(\pi,\tilde{\sigma}) & \le  \sum_{\pi \in C_{\tilde{\pi}}} (|I_\pi| + |I_{\tilde{\sigma}}| - \eta |A|) \nonumber\\
    & \le \sum_{\pi \in C_{\tilde{\pi}}} (|I_\pi| + 2 |I_{\tilde{\pi}}| - \eta |A|)&&\text{(By~\autoref{eq:close-fair-tilde})} \nonumber\\
    & \le \sum_{\pi \in C_{\tilde{\pi}}} (|I_\pi(L)| + 2 |I_{\tilde{\pi}}(L)|) + \sum_{\pi \in C_{\tilde{\pi}}} (|I_\pi(A)| + 2 |I_{\tilde{\pi}}(A)| - \eta |A|)\nonumber\\
    & \le 3 \sum_{\pi \in C_{\tilde{\pi}}} |I_\pi(L)| + \sum_{\pi \in C_{\tilde{\pi}}} (|I_\pi(A)| + 2 |I_{\tilde{\pi}}(A)| - \eta |A|)&&\text{(By~\autoref{eq:cluster-non-decrease})}\nonumber\\
    & \le 3 \sum_{\pi \in C_{\tilde{\pi}}} |I_\pi(L)| + \sum_{\pi \in C_{\tilde{\pi}}} (|I_\pi(A)| + 2 (1+\nu) |I_\pi(A)| - \eta |A|)&&\text{(Since }\pi,\tilde{\pi} \in R^*\text{)}\nonumber\\
    & \le 3 \sum_{\pi \in C_{\tilde{\pi}}} |I_\pi| - \sum_{\pi \in C_{\tilde{\pi}}} (\eta |A| - 2 \nu |I_\pi(A)|) \nonumber\\
    & \le 3 \sum_{\pi \in C_{\tilde{\pi}}} |I_\pi| - \left(\frac{2 \eta}{\alpha(1+\nu)^{i^*}(1-\gamma)} - 2 \nu \right) \sum_{\pi \in C_{\tilde{\pi}}}|I_\pi(A)| &&\text{(Since }\pi \in R^*\text{)}\nonumber\\
    & = 3 \sum_{\pi \in C_{\tilde{\pi}}} |I_\pi| - \left(\frac{(1+\nu)^{i^*-2} (1-\gamma) \alpha}{4} - 2 \nu \right) \sum_{\pi \in C_{\tilde{\pi}}}|I_\pi(A)|&&\text{(Recall, }\eta = \frac{( (1+\nu)^{i^* - 1} (1-\gamma) \alpha/2 )^2}{2} \text{)}\nonumber\\
    &= 3 \sum_{\pi \in C_{\tilde{\pi}}} |I_\pi| - \rho \sum_{\pi \in C_{\tilde{\pi}}}|I_\pi(A)|&&\text{(Let }\rho = \frac{(1+\nu)^{i^*-2} (1-\gamma) \alpha}{4} - 2 \nu \text{)}\nonumber\\
    & \le (3-\rho) \sum_{\pi \in C_{\tilde{\pi}}} |I_\pi| + \rho \sum_{\pi \in C_{\tilde{\pi}}} |I_\pi(G)|\nonumber\\
    & = (3-\rho) \OPT_{C_{\tilde{\pi}}} + \rho \OPT_{C_{\tilde{\pi}}} (G)&&\text{(By the definition)}\nonumber\\
    & \le \left( 3 - \frac{\rho \alpha^2 \beta \gamma}{16r} \right) \OPT_{C_{\tilde{\pi}}} &&\text{(By~\autoref{eq:large-cluster-G})}.
\end{align}

Finally, we deduce that
\begin{align*}
    \sum_{\pi \in S} \U(\pi , \tilde{\sigma}) &= \sum_{\pi \in F} \U(\pi , \tilde{\sigma}) + \sum_{\pi \in N \setminus C_{\tilde{\pi}}} \U(\pi , \tilde{\sigma}) + \sum_{\pi \in C_{\tilde{\pi}} }\U(\pi , \tilde{\sigma})\\
    & \le 3 \OPT_F + \frac{3 + (4/\alpha - 1) \varepsilon/2}{1-\varepsilon/2} \OPT_{N\setminus C_{\tilde{\pi}} } + \left( 3 - \frac{\rho \alpha^2 \beta \gamma}{16r} \right) \OPT_{C_{\tilde{\pi}}}&&\text{(By Equations~\ref{eq:approx-far},~\ref{eq:approx-near},~\ref{eq:sum-cluster})}\\
    &\le 3 \OPT + (3 + 2/\alpha + (4/\alpha - 1) \varepsilon/2) \varepsilon \OPT_{N\setminus C_{\tilde{\pi}} } - \frac{\rho \alpha^2 \beta \gamma}{16r} \OPT_{C_{\tilde{\pi}}} &&\text{(Assuming $\varepsilon < 0.5$)}\\
    &\le 3 \OPT + (3 + 2/\alpha + (4/\alpha - 1) \varepsilon/2) \varepsilon \OPT_{N} - \frac{\rho \alpha^2 \beta \gamma}{16r} \OPT_{C_{\tilde{\pi}}}\\
    & \le 3 \OPT - \left(\frac{\rho \alpha^4 \beta^2 \gamma^2}{256 r^2} - (3 + 2/\alpha + (4/\alpha - 1) \varepsilon/2) \varepsilon \right) \OPT_N &&\text{(By~\autoref{eq:large-cluster})}\\
    & \le \left( 3 - (1-\alpha/2)(1-\varepsilon/2) (\frac{\rho \alpha^4 \beta^2 \gamma^2}{256 r^2} - (3 + 2/\alpha + (4/\alpha - 1) \varepsilon/2) \varepsilon ) \right) \OPT&&\text{(By~\autoref{eq:near-opt})}.
\end{align*}

Consider an $\alpha \in (0,1/10]$ and $\beta \in (0,1)$. Then set $\gamma = 1/4$, $\nu = (1-\gamma)\alpha/32$. Now, it is not hard to verify that the constant $(1-\alpha/2)(1-\varepsilon/2) (\frac{\rho \alpha^4 \beta^2 \gamma^2}{256 r^2} - (3 + 2/\alpha + (4/\alpha - 1) \varepsilon/2) \varepsilon ) > 0$ for a proper choice of $\varepsilon < 0.5$ (which is a function of $\alpha,\beta$). Hence, we conclude the following.

 \begin{lemma}
 \label{lem:active-approx}
 Consider an $\alpha \in (0,1/10]$ and $\beta \in (0,1)$. There exists a constant $\delta > 0$ such that given an input set $S \subseteq \sym_d$ of size $n$, for which the set of active symbols $A$ is of size at least $\beta \cdot \OPT/n$, there exists a fair ranking $\tilde{\sigma} \in S_f$ such that $\obj_1(S,\tilde{\sigma}) \le (3-\delta) \OPT$.
 \end{lemma}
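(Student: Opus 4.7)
The plan is to construct a specific fair-ranking witness $\tilde{\sigma} \in S_f$ by exploiting that when $|A|$ is large, a constant fraction of the cost $\OPT$ must be concentrated on active symbols in the ``near'' rankings. First I would split $S$ into near rankings $N = \{\pi : \U(\pi,\sigma^*) < (1+\varepsilon/\alpha)\OPT/n\}$ and far rankings $F = S\setminus N$, and show (as in Claim~\ref{clm:CDK-high}) that $\OPT_N \ge (1-\alpha/2)(1-\varepsilon/2)\OPT$ and $\OPT_N(A) \ge (\alpha\beta/2)\OPT_N$. Then I would restrict attention to the subset $R \subseteq N$ of rankings whose ``not-in-LCS'' set captures a constant fraction of $A$, bucket $R$ geometrically according to $|I_\pi(A)|$ into $r = O(\log(1/\alpha))$ classes $R_1,\dots,R_r$, and by averaging pick the heaviest bucket $R^*$. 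A greedy clustering of $R^*$ (adding a new center only when $|I_{\sigma(\pi')}\cap I_\pi(A)| < \eta|A|$) produces at most $O(1/\alpha)$ clusters by Lemma~\ref{lem:set-intersect}, so a further averaging step selects a cluster $C_{\tilde{\pi}}$ with $\OPT_{C_{\tilde{\pi}}}(A) \ge (\alpha\gamma/(8r))\OPT_N(A)$. The candidate is then $\tilde{\sigma} = \sigma(\tilde{\pi})$.

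Next I would bound $\obj_1(S,\tilde{\sigma})$ by partitioning $S$ into $F$, $N\setminus C_{\tilde\pi}$, and $C_{\tilde\pi}$, and estimating each piece separately. For $\pi\in F$, the triangle inequality together with $\U(\sigma^*,\tilde{\sigma}) \le 2\U(\sigma^*,\tilde{\pi})$ (which holds because $\tilde{\sigma}$ is a closest fair ranking to $\tilde{\pi}$ and $\sigma^*$ itself is fair) and $\tilde\pi\in N,\pi\in F$ immediately yields $\U(\pi,\tilde\sigma)\le 3\U(\pi,\sigma^*)$. For $\pi \in N\setminus C_{\tilde{\pi}}$, the same triangle-inequality argument combined with Assumption~\eqref{eq:ass-far-input} gives the slightly worse bound $\U(\pi,\tilde\sigma) \le \tfrac{3+(4/\alpha-1)\varepsilon/2}{1-\varepsilon/2}\U(\pi,\sigma^*)$, which is $3+O(\varepsilon/\alpha)$ times $\U(\pi,\sigma^*)$.

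The main technical step, and the crux of the saving, is the cluster $C_{\tilde{\pi}}$. Here I would use the identity $\U(\pi,\tilde\sigma) \le |I_\pi| + |I_{\tilde\sigma}| - |I_\pi \cap I_{\tilde\sigma}|$ and exploit three properties simultaneously: (i) by clustering, $|I_\pi(A)\cap I_{\tilde\sigma}| \ge \eta|A|$; (ii) $|I_{\tilde\sigma}| \le 2|I_{\tilde\pi}|$; and (iii) since both $\pi,\tilde\pi\in R^*$ sit in the same geometric bucket, $|I_{\tilde\pi}(A)| \le (1+\nu)|I_\pi(A)|$. Summing over $\pi\in C_{\tilde\pi}$ and using the processing order so that $|I_{\tilde\pi}(L)| \le |I_\pi(L)|$ for $\pi\in C_{\tilde\pi}$, the lazy and active contributions separate and yield a bound of the form $(3 - \rho)\OPT_{C_{\tilde\pi}} + \rho\OPT_{C_{\tilde\pi}}(L)$, where $\rho = \Theta(\alpha)$ after plugging in the value of $\eta$. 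Combining with $\OPT_{C_{\tilde\pi}}(L) \le (1 - \Theta(\alpha^2\beta\gamma/r))\OPT_{C_{\tilde\pi}}$ produces a $(3 - \Theta(\rho\alpha^2\beta\gamma/r))$-factor for the cluster piece.

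Finally I would sum the three contributions and use $\OPT_{C_{\tilde\pi}} \ge (\alpha^2\beta\gamma/(16r))\OPT_N$ together with $\OPT_N \ge (1-\alpha/2)(1-\varepsilon/2)\OPT$ to extract an absolute saving of $\delta\,\OPT$ below $3\OPT$. The delicate point, which I expect to be the main obstacle, is balancing parameters: the savings in the $C_{\tilde\pi}$ block are of size $\Theta(\alpha^4\beta^2\gamma^2/r^2)$, while the loss on $N\setminus C_{\tilde\pi}$ is of order $(1/\alpha)\varepsilon\cdot\OPT_N$, so $\varepsilon$ must be chosen as a sufficiently small function of $\alpha$ and $\beta$ (with $\gamma = 1/4$ and $\nu = (1-\gamma)\alpha/32$ as natural choices) to ensure the net constant $\delta > 0$. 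Since the existence of the witness $\tilde\sigma \in S_f$ with this objective value is all that is claimed, no further argument about achieving $\tilde\sigma$ algorithmically is needed here; the construction is used to certify the existence bound.
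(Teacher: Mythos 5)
Your proposal follows essentially the same route as the paper's own proof: the near/far split, the restriction to $R$ and geometric bucketing into $r$ classes, the greedy clustering of $R^*$ bounded via Lemma~\ref{lem:set-intersect}, the choice $\tilde{\sigma}=\sigma(\tilde{\pi})$, the three-way decomposition over $F$, $N\setminus C_{\tilde{\pi}}$, and $C_{\tilde{\pi}}$, and the final parameter balancing with $\gamma=1/4$, $\nu=(1-\gamma)\alpha/32$ all match the paper's argument step for step. The plan is correct (and your writing $\OPT_{C_{\tilde{\pi}}}(L)$ in the cluster bound is in fact cleaner than the paper's notation at that point).
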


Recall our first procedure (the algorithm used in Corollary~\ref{cor:UlamAggr}) essentially outputs a fair ranking from the set $S_f$ with the smallest objective value. Let us set $\alpha = 1/10$ and $\beta = 1/6$. Now, the approximation guarantee of~\autoref{thm:Ulambetter} follows from~\autoref{lem:lazy-approx} and~\autoref{lem:active-approx}.

\section{Conclusion}
In this paper, we lay the theoretical framework for studying the closest fair ranking and fair rank
aggregation problems while ensuring minority protection and restricted dominance. We first give a
simple, practical, and exact algorithm for CFR under the Kendall tau metric; and a polynomial time
exact algorithm for CFR under the Ulam metric when there are constantly many groups. We then use
such a black box solution to CFR to design two novel meta-algorithms for FRA for a general $q$-mean
objective, which are valid under any metric. The approximation ratios of these meta algorithms
depend on the approximation ratio of the CFR subroutine used by them. Lastly, we give a
$(3-\eps)$-approximate algorithm for FRA under the Ulam metric, which improves over our meta
algorithm's approximation ratio for the same case. Achieving a similar (better than 3-factor)
approximation bound for the Kendall tau or other metrics is an interesting open problem. Another set of intriguing
open problems arises when there is overlap between the groups, i.e., when an element  can belong to multiple groups.
It is still open  whether exact algorithms exist for the
the case when an element can belong to (at most) two groups.

\paragraph{Acknowledgments. }
Diptarka Chakraborty was supported in part by an MoE AcRF Tier 2 grant (MOE-T2EP20221-0009) and an NUS ODPRT grant (WBS No. A-0008078-00-00).
Arindam Khan gratefully acknowledges the generous support due to Pratiksha Trust Young
Investigator Award,  Google India Research Award, and Google ExploreCS Award.

\bibliographystyle{alpha}
\bibliography{ref_fairrank}

\end{document}